\setlist{noitemsep}  
\newcommand*{\distas}[1]{\mathbin{\overset{#1}{\kern\z@\sim}}}	
\newtheorem{proposition}{Proposition}
\theoremstyle{remark}
\theoremstyle{plain}
\newcolumntype{C}[1]{>{\centering\arraybackslash}p{#1}}
\def \by{\mathbf{y}}
\def \bx{\mathbf{x}}
\def \bz{\mathbf{z}}
\def \be{\mathbf{e}}
\def \bq{\mathbf{q}}
\def \br{\mathbf{r}}
\def \biota{\boldsymbol{\iota}}
\def \btheta{\boldsymbol{\theta}}
\def \bbeta{\boldsymbol{\beta}}
\def \balpha{\boldsymbol{\alpha}}
\def \bgamma{\boldsymbol{\gamma}}
\def \bpsi{\boldsymbol{\psi}}
\def \bzeta{\boldsymbol{\zeta}}
\def \blambda{\boldsymbol{\lambda}}
\def \bchi{\boldsymbol{\chi}}
\DeclareMathOperator*{\diag}{diag}
\DeclareMathOperator*{\Ev}{E}
\DeclareMathOperator*{\Cov}{Cov}
\newcommand\comma{,\allowbreak}
\newcommand\equal{=\allowbreak}
\newcommand\leftbr{(\allowbreak}
\newcommand\rightbr{)\allowbreak}
\newcommand{\legdot}[1]{\tikz[baseline=-0.6ex]\draw[fill=#1,draw=#1] (0,0) circle (0.6ex);}
\definecolor{acquamarina}{RGB}{128, 255, 255}
\definecolor{limegreen}{RGB}{77, 255, 195}
\definecolor{blue1}{RGB}{8,4,252}
\definecolor{blue2}{RGB}{8,84,252}
\definecolor{blue3}{RGB}{8,172,252}
\title{\vspace{-60pt} \textbf{Modeling European Electricity Market Integration during turbulent times\thanks{\footnotesize
The authors gratefully acknowledge Andrea Bastianin, Hilde C. Bjørnland, Isabel Catalina Figuerola Ferretti, and the participants at the Virtual Workshop for Junior Researcher in Time Series (VTTS), Workshop Energy Transition and Climate Change, 4th Dolomiti Macro Meetings, University of Strathclyde, Junior MiTSS, 3rd UEA ECO Time Series Workshop, ICEEE 2025, and 15th RCEA Bayesian Econometrics Workshop for their useful feedback. Francesco Ravazzolo acknowledges financial support from the European Union - Next Generation EU – PNRR Missione 4 - Componente 2 - Investimento 1.1 and Italian Ministry MIUR – Project PRIN 20225J7H4K ``Econometric and Macro-Financial Models of Climate Change: Transition, Policies and Extreme Events'' (EMMCC).
Luca Rossini and Andrea Viselli acknowledge financial support from the Italian Ministry MIUR under the PRIN project 20223CEZSR ``Modelling Non-standard data and Extremes in Multivariate Environmental Time series'' (MNEMET). This research used the Computational resources provided by the Core Facility INDACO, which is a project of High-Performance Computing at the University of Milan.}
}}
\date{\today}
\author{Francesco Ravazzolo\thanks{BI Norwegian Business School, Norway and Free-University of Bozen-Bolzano, Italy and RCEA. \color{blue}\texttt{francesco.ravazzolo@bi.no}}
\and 
Luca Rossini\thanks{University of Milan, Italy and Fondazione Eni Enrico Mattei (FEEM). \color{blue}\texttt{luca.rossini@unimi.it}} \and Andrea Viselli\thanks{University of Milan, Italy.  \color{blue}\texttt{andrea.viselli@unimi.it}} }
\begin{document}
\maketitle


\begin{abstract}
This paper introduces a novel Bayesian reverse unrestricted mixed-frequency model applied to a panel of nine European electricity markets. Our model analyzes the impact of daily fossil fuel prices and hourly renewable energy generation on hourly electricity prices, employing a hierarchical structure to capture cross-country interdependencies and idiosyncratic factors. The inclusion of random effects demonstrates that electricity market integration both mitigates and amplifies shocks. Our results highlight that while renewable energy sources consistently reduce electricity prices across all countries, gas prices remain a dominant driver of cross-country electricity price disparities and instability. This finding underscores the critical importance of energy diversification, above all on renewable energy sources, and coordinated fossil fuel supply strategies for bolstering European energy security.
\vskip 5pt
\noindent \textbf{Keywords:} Dynamic panel model, Mixed-frequency, Bayesian time series, Electricity Prices, Renewable energy sources, Market Integration. 
\vskip 5pt
\noindent \textbf{JEL Codes:} C11, C32, C33, C55, Q40
\end{abstract}

\clearpage
\onehalfspacing

\section{Introduction}
    \label{sect:intro}

On February 24, 2022, Russian troops launched a large-scale military operation against Ukraine, and energy quickly became a geopolitical weapon. 
The disruption of Russian gas supplies plunged Europe, and a bit less dramatically all the World, into an energy and political crisis, exposing deep vulnerabilities in its energy system. 
As electricity cannot be easily stored, the crisis underscored the urgency of market integration as a safeguard for energy security and a cornerstone of European competitiveness and economic growth (\citealp{draghi2024a}; \citealp{draghi2024b}).  

\cite{fabra2022electricity} describes how the market design based on merit order can result in a surge in electricity prices. The design sets a uniform price based on the most expensive plant required to
meet demand. As a result, all electricity – including that one generated from low marginal cost renewables– is paid at the higher marginal price of fossil-fired generation. The disruption of Russian
gas flows forced European countries to rely on neighboring countries to meet their energy demand, often importing even higher prices. Then, national policies aimed at securing domestic energy supply have
re-emerged, slowing progress toward deeper market integration and undermining recent advancements.
\citetalias{acer2023demand} (\citeyear{acer2023demand}) estimates significant welfare gains from cross-border electricity trade in 2021, with further benefits expected by 2030 under the REPowerEU strategy (\citealp{european2022repower}). 
This initiative aims to reduce short-term dependence on (Russian) fossil fuels while increasing the share of renewable energy sources (RES), enhancing resilience to external shocks, and improving market competition, innovation, and efficiency (\citealp{zachmann2024unity}). 
Meanwhile, global electricity demand continues to rise such as economies transition to lower-carbon energy sources, making these goals even more challenging.

Unlike fossil prices, electricity prices display high-frequency fluctuations with strong variation across hours. 
As noted by \cite{Durante2022}, electricity prices and renewable energy sources (RES), such as forecasted wind and solar generation, are strongly connected at the daily but also the hourly level. 
Notably, RES show strong seasonality and variations across periods of the years and hours of the day. 
Therefore, analyzing them at an hourly level is essential to model their patterns. Fossil prices do not show the same variability at hourly level and they are usually treated as constant during a given day, meaning the same value is used for the 24 hours, see e.g. \cite{gianfreda2020comparing}. Nevertheless, fossil-based producers leverage during the day their dispatchability, the ability to generate power when demand peaks, unlike intermittent renewable energy sources. Consequently, despite the relatively stable hourly pricing of fossil fuels within a day, the dynamic interaction among fossil fuel prices, RES output, and electricity prices is intricate and extends beyond a daily (fixed-value) regressor approach.

Then, to study the effects of daily surging fossil fuel prices and hourly RES generation on hourly electricity prices, we introduce a panel reverse unrestricted mixed data sampling (PRUMIDAS)  across nine European countries -- Denmark, Finland, France, Germany, Italy, Norway, Portugal, Spain, and Sweden -- covering the period from January 2019 to October 2023. 
The proposed dynamic panel model allows the inclusion of different time frequencies among covariates and prices while underlying the influence of a specific covariate on electricity prices.

Our approach builds on a mixed-frequency framework to allow for different sampling frequencies among the dependent (electricity prices) and independent variables -- forecasted renewable sources generation (wind and solar), and forecasted demand,  along with daily fossil fuel prices (Co$_2$, coal, and gas) -- and model their effect on hourly electricity prices.  
Since electricity prices have a higher sampling frequency than some covariates, we adopt the reverse unrestricted MIDAS approach of \cite{foroni2018using}, applied in \cite{foroni2023low} to analyze the impact of macroeconomic indicators on electricity prices in Italy and Germany. 
Extending it to a panel setting, we build on the hierarchical multi-country VAR framework of \cite{canova2009estimating} and \cite{casarin2018uncertainty}, replacing the unrestricted parametrization with a reverse mixed-frequency model to account for both hourly and daily covariates.  

There is a growing literature interested mainly in forecasting hourly electricity prices using both simple and complex methods. \cite{Weron2014} provides an comprehensive review of the different methods used for forecasting electricity prices, while \cite{Ziel2018} explore multivariate specification. 
The roles of RES and fossil fuel prices in these forecasts are well-documented as in \citealp{gianfreda2020comparing,gianfreda2023large}, and \citealp{lago2021forecasting}. 
However, previous approaches typically focus on single-country models, neglecting interdependencies among countries. 
We contribute to this literature in two key aspects. 
First, we integrate both hourly and daily explanatory variables as drivers of hourly electricity prices. 
Second, we propose a hierarchical panel structure that accounts for cross-country dependencies through both common and country-specific effects. 
This framework allows us to assess the electricity market integration by quantifying the impact of forecasted RES generation, demand, and fossil fuel prices across countries and hours. 
This aspect is crucial since as noted in different studies \citep[e.g.][]{Roth2023, zachmann2024unity}, greater integration among a subset of European countries could reduce the required dispatchable generation capacity by nearly 20\% and storage capacity by 30\% compared to a baseline scenario without increased integration.
Furthermore, the proposed methodology allows us to estimate the time-varying volatility of the electricity prices across countries without the need for explicit modeling it, thereby avoiding increased computational complexity.

Our findings confirm the central role of RES in reducing electricity prices. Wind and solar generation reduce costs in most countries, except Finland, where increased solar generation has been associated with almost no impact on electricity prices.
Germany, Denmark, and Italy benefit most from RES expansion, reinforcing their role in fostering market convergence by reducing price disparities and enhancing cross-border trade. 
These results corroborate the findings of the \citetalias{IMF2025} (\citeyear{IMF2025}), which highlights the importance of renewable-based power for deeper market integration. 
In contrast, gas price shocks have significantly increased electricity prices, especially in Italy, Germany, France, and Denmark.
Portugal and Spain, however, were largely shielded by the ``Iberian exception'', a 2022 policy that capped gas prices and reduced market uncertainty.

The crisis has demonstrated that energy integration can both mitigate and amplify shocks, depending on the level of coordination among European countries. 
France, for example, became reliant on electricity imports due to nuclear power outages, amplifying its exposure to external price fluctuations. 
Moreover, Denmark, France, Germany, and Italy exhibit substantial changes in the magnitude of the estimated coefficients during the current invasion in particular for renewable energy sources and gas prices.
Lastly, dependence on Russian natural gas emerges as a key driver of cross-country electricity price disparities, underscoring the importance of energy diversification and well-coordinated fossil fuel supply strategies at the European level.  

These findings open a debate on the importance of the integration of the European electricity markets and the persistent national-level constraints that hinder it.
Despite energy's central role in European integration, policy in this domain remains largely under national control, with the European Commission holding limited formal authority. The European Union should emphasize the need for a more diversified energy infrastructure to reduce reliance on single sources and mitigate price volatility, including implementing strategies that coordinate fossil fuel supply across countries to manage price disparities and instability given the dominant role of the gas prices as a driver of cross-country electricity price variations. At the same time, policies should continue to support the deployment of renewable energy sources to further stabilize markets and reduce dependence on fossil fuels.

The paper is structured as follows. 
Section~\ref{sec:data} introduces the dataset and discusses key aspects of the European energy market. 
Section~\ref{sec:method} presents the panel reverse unrestricted mixed data sampling (PRUMIDAS) model, including its specification and estimation. 
Section \ref{sec:analysis-application} applies the PRUMIDAS model to analyse cross-country differences in the drivers of electricity prices and discusses the results. 
Section~\ref{sec:postwarresults} focuses on the post-invasion period. 
Section~\ref{sec:concl} concludes with future policy implications.

\section{Electricity, renewable, and fossil data}
\label{sec:data}

We concentrate our analysis on nine European countries -- Denmark, Finland, France, Germany, Italy, Norway, Portugal, Spain, and Sweden -- which are characterized by significant heterogeneity in power market structures, regulatory frameworks, and energy mixes.
These countries differ in their reliance on fossil fuels versus renewables, their interconnections with neighboring countries, the degree of market liberalization, and the extent of government intervention in electricity pricing (see e.g. \citetalias{electricity2024analysis}, \citeyear{electricity2024analysis}). 

Moreover, their exposure to geopolitical risks varies, particularly regarding their dependence on Russian gas. Some countries have historically relied on long-term pipeline contracts, while others have diversified their energy imports through liquefied natural gas (LNG) terminals and cross-border interconnections (\citetalias{oies2024}, \citeyear{oies2024}; \citealp{sziklai2020impact}).

Unlike other commodities (such as oil), electricity is a non- or partially storable commodity; as a result, its price exhibits substantial variability across the hours of the day and throughout the year as shown in the Supplement. It is therefore crucial to work with prices at an hourly level to better understand the behaviour of the European electricity markets.

We use hourly data for electricity prices, forecasted demand, wind and solar generation, alongside daily fossil fuel prices (Co$_2$, coal, gas), from 1 January 2019 to 31 October 2023 -- covering more than $40$ thousand hours per country. 
Hourly day-ahead auction prices are collected from power exchanges: the European Energy Exchange (EEX) for Germany and France, the Gestore dei Mercati Energetici (GME) for Italy (using the single national price, PUN), NordPool for Denmark, Norway, Sweden, and Finland; and the Operador do Mercado Iberico (OMI) for Spain and Portugal. 
Prices are quoted in \euro/MWh and pre-processed to account for time-clock changes, excluding the 25th hour in October and interpolating the missing 24th hour in March, thus ensuring no missing observations (\citealp{gianfreda2020comparing}).

Timely forecasts for demand, wind, and solar generation are essential for market operators: on day $t$, they run forecasting models to generate $24$ prices and quantities for delivery on day $t+1$, submitting them before market closure (around noon on day $t$). Forecasted quantities reflect conditions on day $t$, while fossil fuel settlement prices and plant capacities are determined on day $t-1$.

Hourly forecasted quantities are sourced from London Stock Exchange Group (LSEG), using the operational weather model from the European Centre for Medium-Range Weather Forecasts (EC). This model updates between 05:40 and 06:55 a.m., providing the latest data for market operators. Missing forecasts are replaced hierarchically: first, using the ECop00 model at midnight; if unavailable, the Global Forecast System (GFS) ensemble model at midnight (GFSen00); then the GFS operational model (GFSop00); and, if necessary, GFSen18, GFSop18, or ECen12, depending on publication time. 
Residual gaps are interpolated. The EC operational model is deterministic with high resolution, while the GFS ensemble model is probabilistic, simulating weather instability.

As noted in \cite{gianfreda2020comparing}, solar power forecasts exhibit null values in the early morning and late evening, therefore, we pre-process them using a linear transformation. 
In particular, as shown in the Supplement, electricity prices and forecasted renewable energy sources exhibit distinct intraday path, with notable differences between peak and off-peak hours. Thus, it is preferable to employ hourly data instead of merging them by using averages and working with daily data.

Fossil fuel prices include coal (Intercontinental Exchange API2, ticker LMCYSPT), natural gas (one-month forward ICE UK, ticker NATBGAS), and Co$_2$ (EEX-EU Co$_2$ Emissions, ticker EEXEUAS), all converted to \euro/MWh using WMR\&DS exchange rates (USEURSP for USD-EUR, UKEURSP for GBP-EUR). 
These daily prices are constant over 24 hours, with interpolated values for weekends and holidays.
Following \cite{paraschiv2014impact} and \cite{gianfreda2023large}, we use price levels -- not logarithmic prices -- to preserve statistical properties and volatility dynamics, while exogenous variables are standardized to preserve the scale of electricity prices.


\begin{figure}[h!]
    \begin{tabular}{ccc}
    Denmark & Finland & France \\
    \includegraphics[width = 5cm]{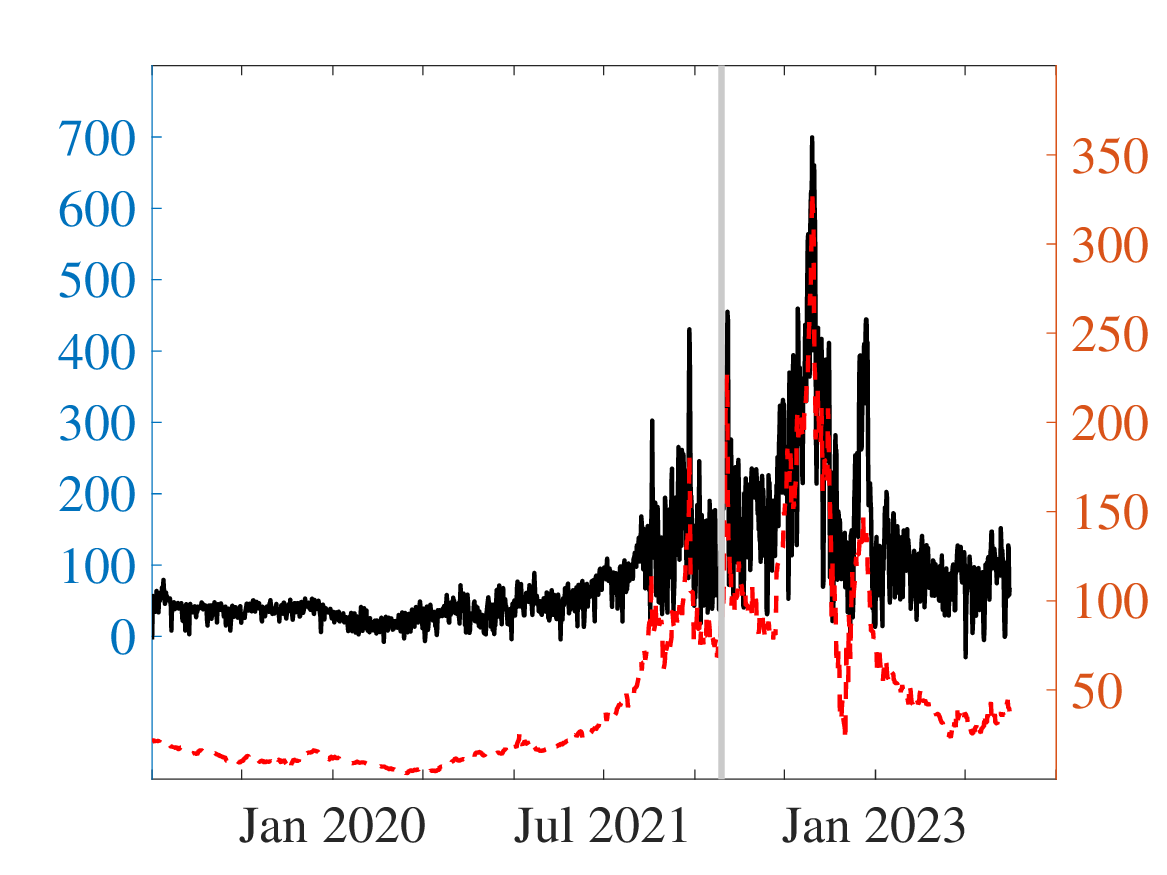} &
    \includegraphics[width = 5cm]{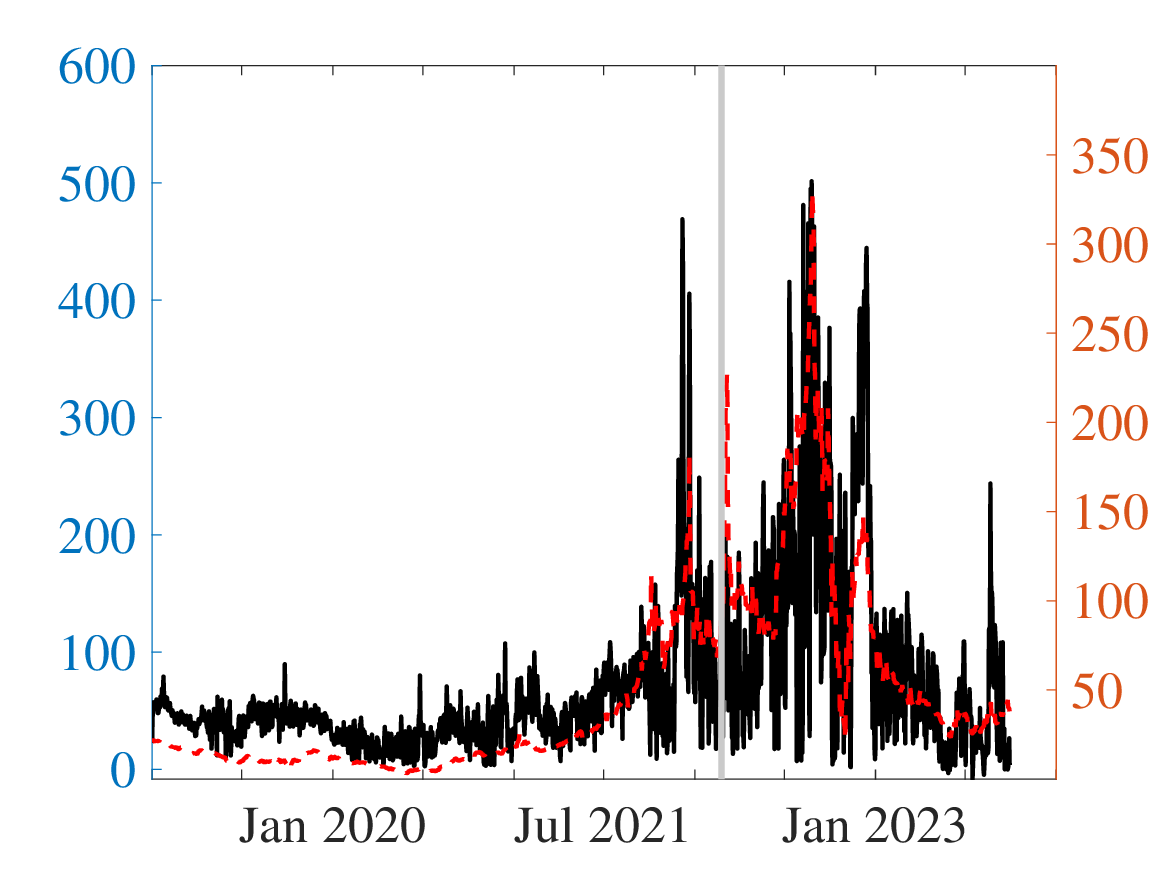} &
    \includegraphics[width = 5cm]{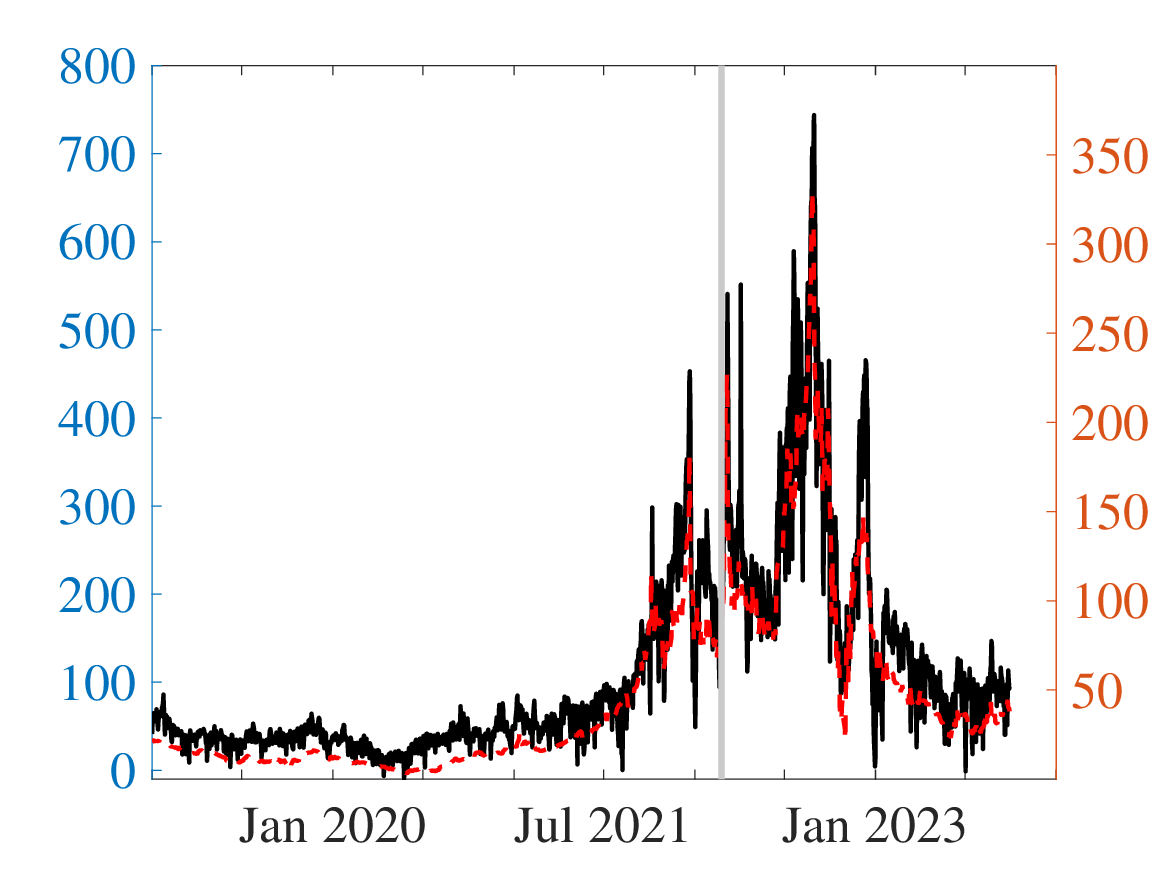} \\
    Germany & Italy & Norway \\
    \includegraphics[width = 5cm]{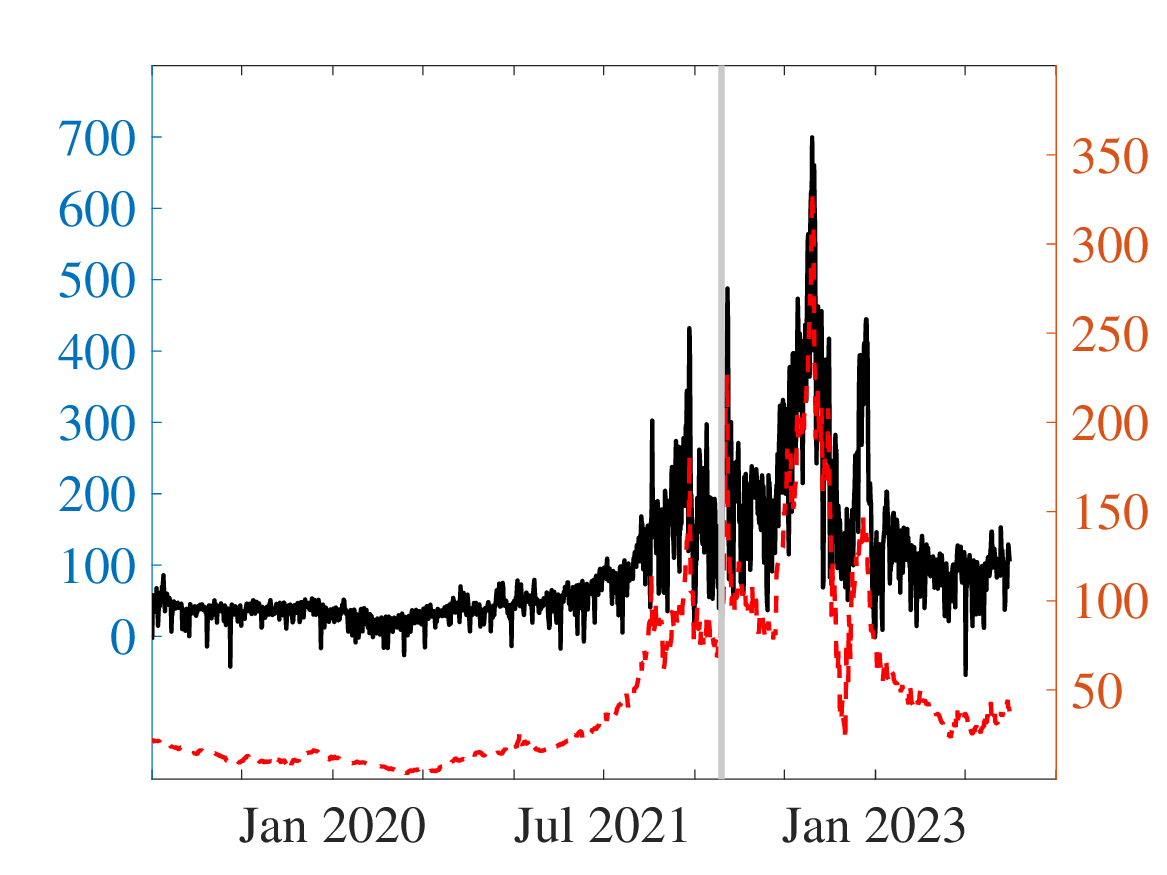}  &
    \includegraphics[width = 5cm]{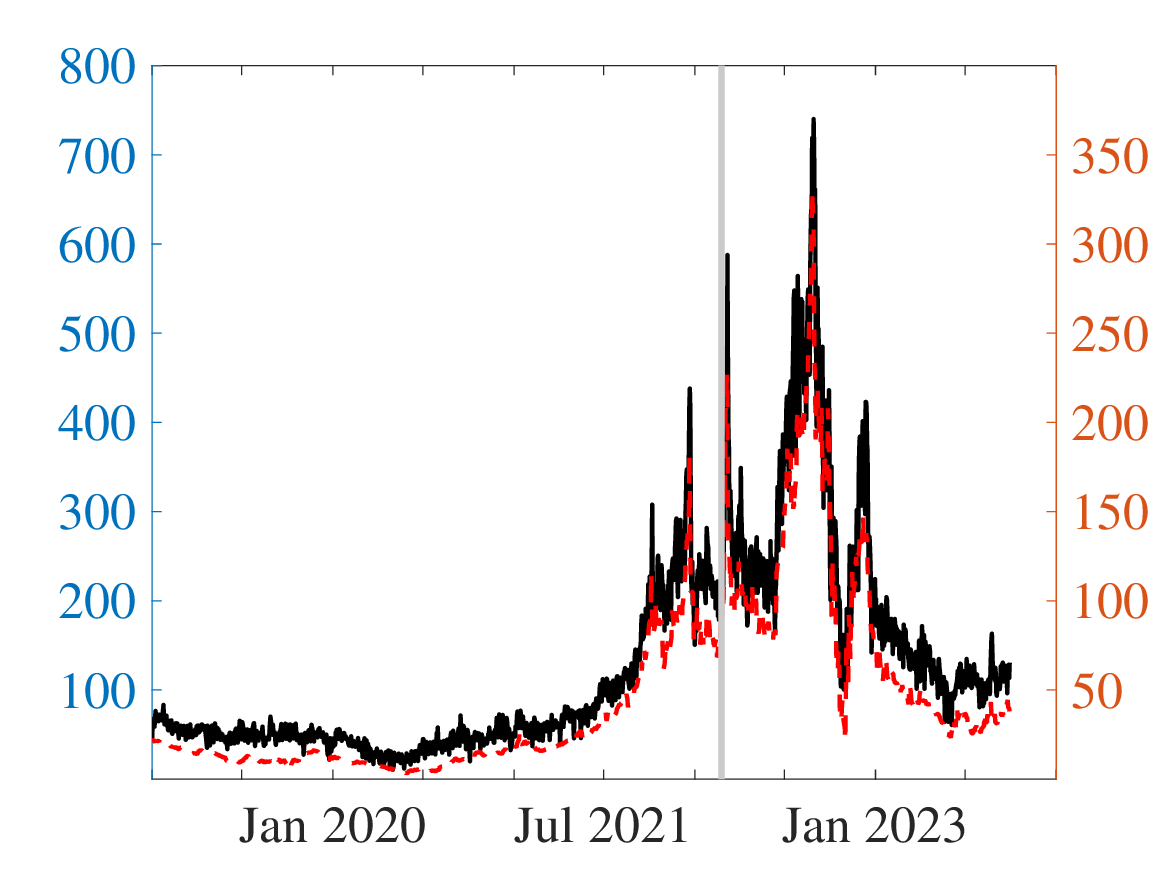} &
    \includegraphics[width = 5cm]{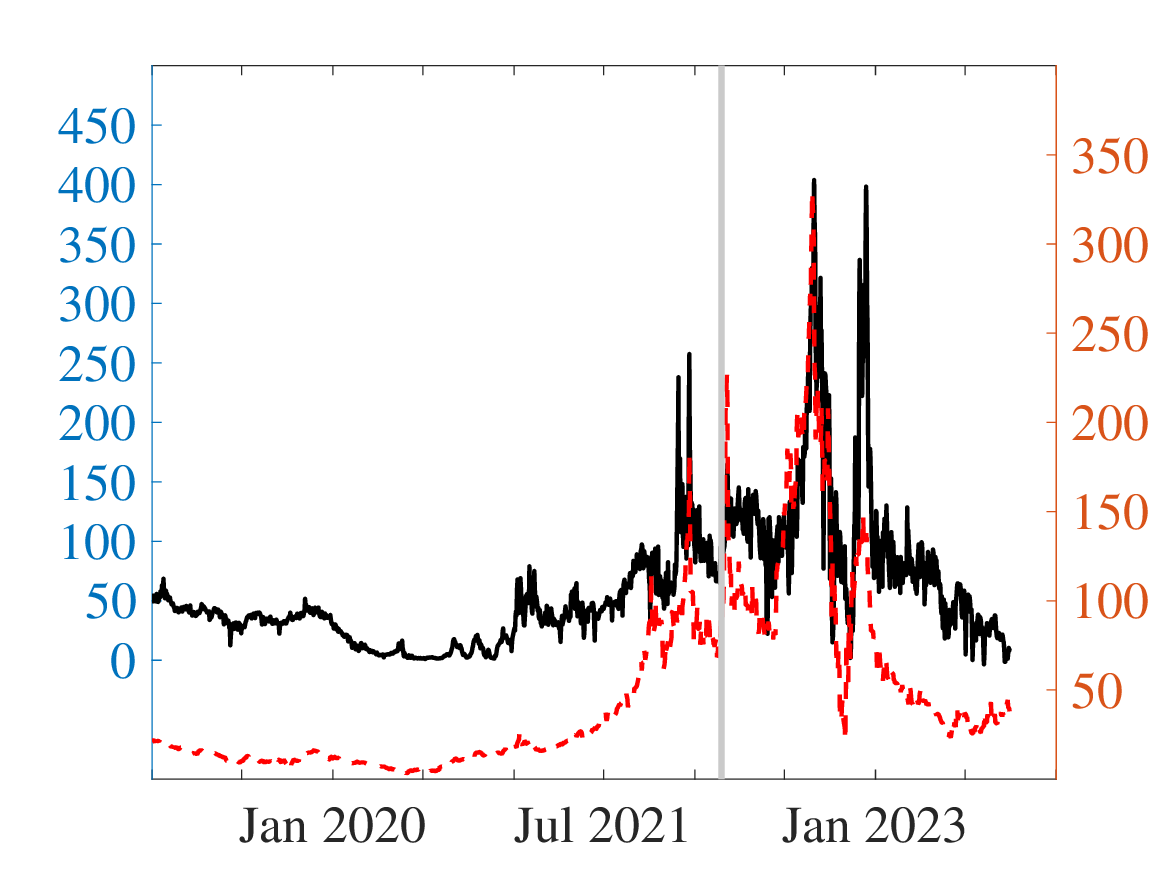} \\
    Portugal & Spain & Sweden \\
    \includegraphics[width = 5cm]{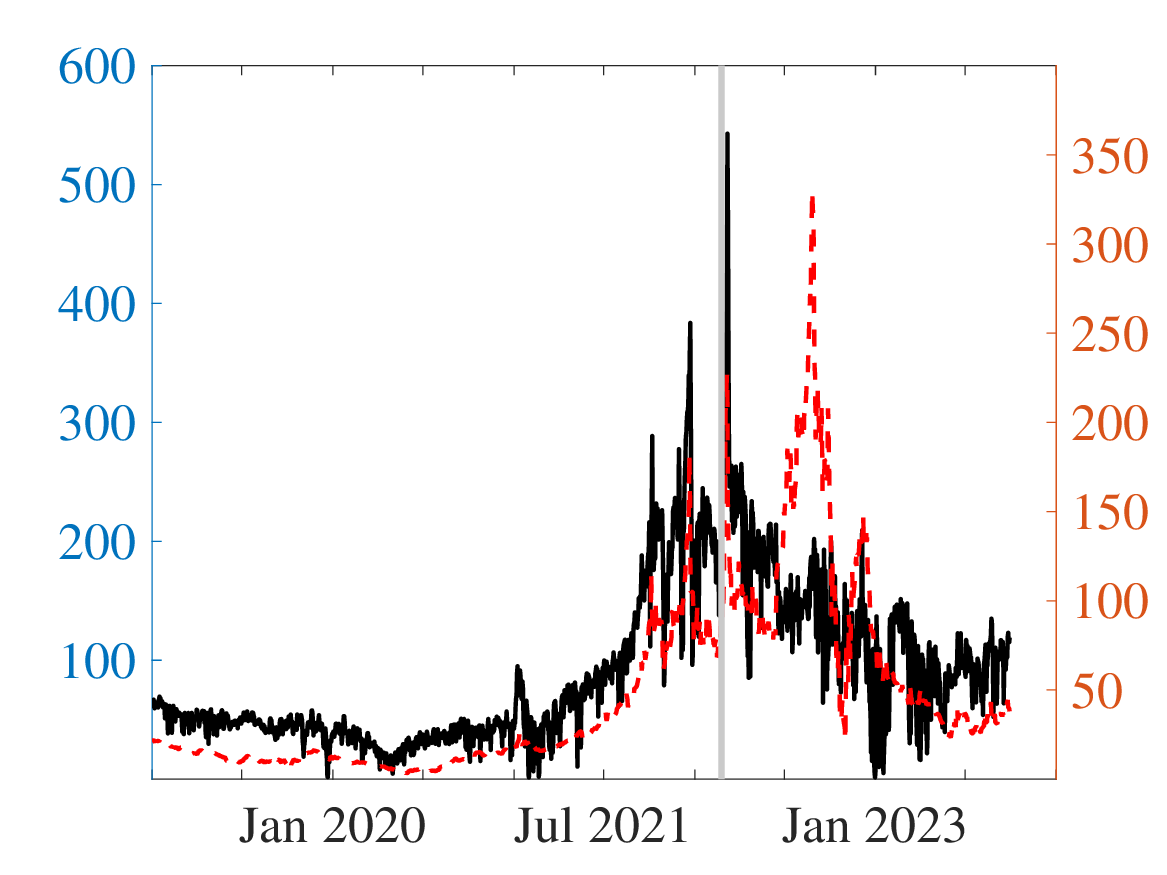} &
    \includegraphics[width = 5cm]{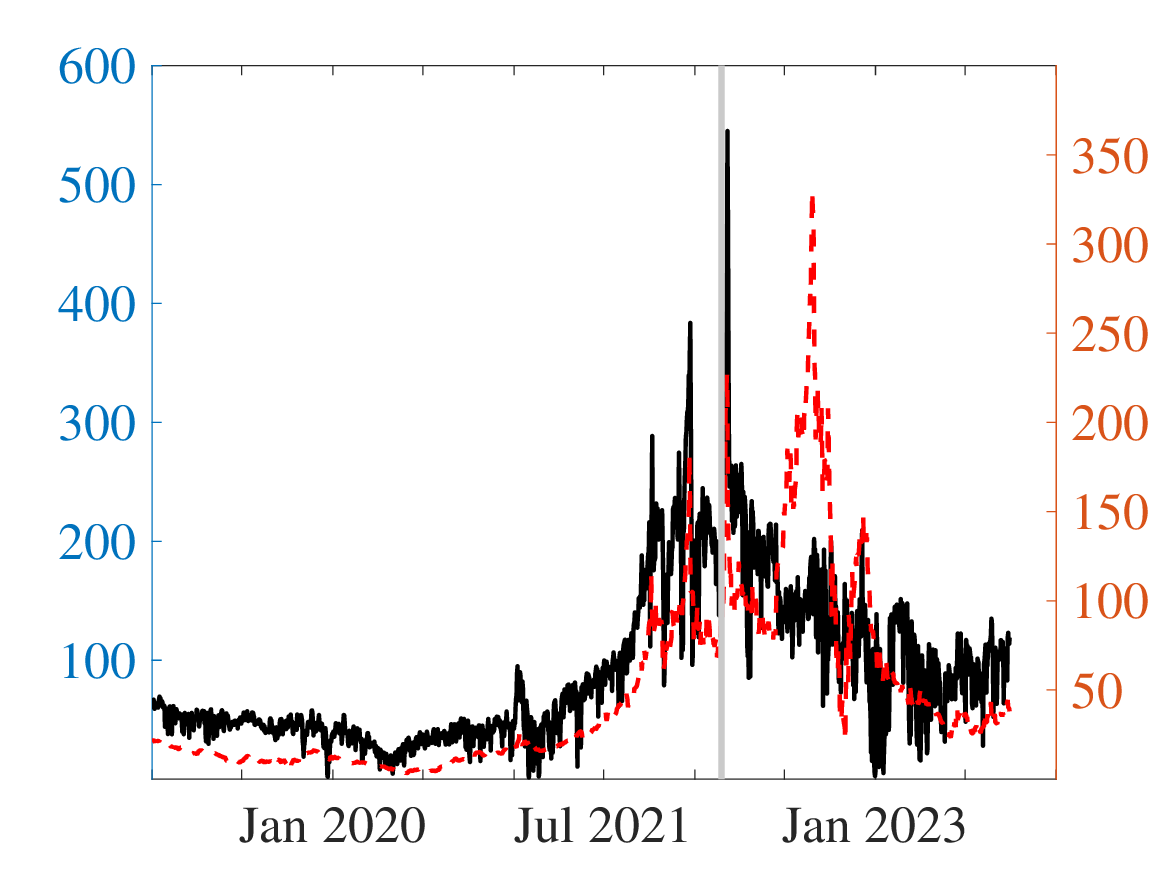} &
    \includegraphics[width = 5cm]{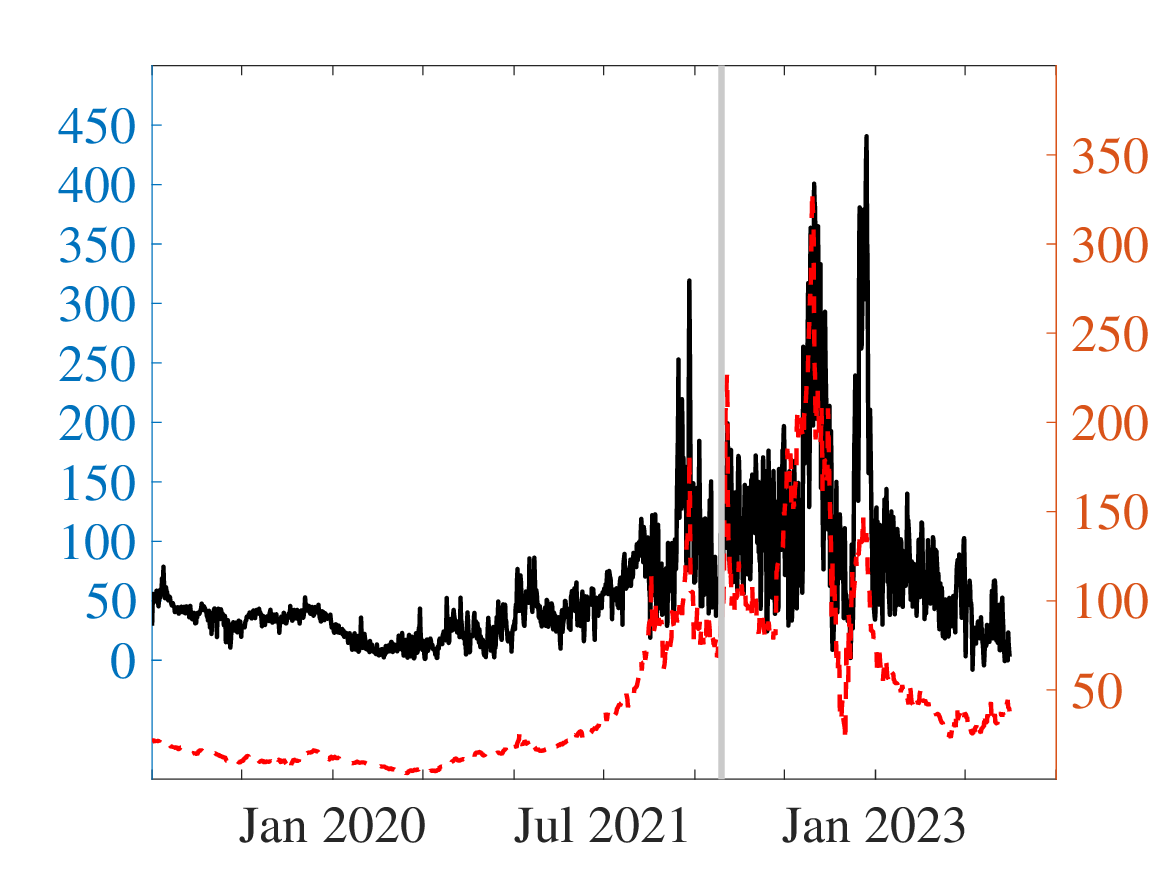}
    \end{tabular}
\caption{Daily electricity prices across countries -- Denmark, Finland, France, Germany, Italy, Norway, Portugal, Spain, and Sweden -- against the natural gas price, from January 2019 to October 2023. The gray vertical line corresponds to the Russia's invasion of Ukraine.} 
\label{fig:country-electricity-gas} 
\end{figure}

As discussed in \cite{ravazzolo2023price}, electricity prices in some countries are closely linked to the gas prices movements. 
Figure~\ref{fig:country-electricity-gas} plots these relations with the left axis representing electricity prices and the right axis gas prices. 
Both prices exhibit co-movement, with electricity prices rising sharply from mid-2021 (\citetalias{eia2021}, \citeyear{eia2021}; \citetalias{iea2022}, \citeyear{iea2022}; \citealp{enerdata2023}). 
As economies rebounded from the COVID-19 pandemic, increased industrial activity and energy demand tightened supply. 
Europe entered 2021 with historically low gas storage levels due to the cold 2020–2021 winter and storage refilling was slower than usual, partly due to high demand and limited supply. 
Russian pipeline deliveries also declined during 2021, owing to Gazprom's domestic storage needs, geopolitical tensions surrounding Nord Stream 2, and infrastructure maintenance.

In the aftermath of Ukraine's invasion, European gas prices reached record peaks, which consequently affected electricity prices. 
Russian pipeline supply, including Nord Stream 1, was gradually reduced, forcing Europe to seek alternative, often more expensive, sources. 
Increased global competition, especially from Asia for liquefied natural gas (LNG) further constrained European supply. 
In response, governments rushed to fill gas storage ahead of winter, driving prices high. 
During this period, electricity prices exceeded 600 \euro/MWh in Denmark, France, Germany, and Italy. 
Portugal and Spain were less affected due to the ``Iberian exception'', a measure approved by the European Commission allowing these countries to decouple gas and electricity prices for 12 months (\citealp{hidalgo2024iberian}).

Starting in 2023, electricity prices declined due to several factors (\citetalias{oies2024}, \citeyear{oies2024}; \citealp{mckinsey2022}; \citealp{columbia2021}). 
After the launch of the REPowerEU strategy by the European Commission, Europe increased LNG imports, particularly from the United States, Qatar, and other suppliers, reducing its reliance on Russian pipeline gas by diversifying sources. 
While imports from Russia fell sharply, they were largely offset by increased supply from Norway, Algeria, and Azerbaijan. 
EU governments also implemented market-stabilizing measures such as price caps, subsidies, and windfall taxes. Notably, the EU introduced a gas price cap of \euro180/MWh in 2023, helping to stabilize the  market (\citealp{ravazzolo2023price}; \citetalias{acer2023demand}, \citeyear{acer2023demand}).

In the Supplement, we plot the time series of electricity, Co$_2$, coal, and gas prices, as well as forecasts of demand, solar and wind generation, highlighting the role of fat-tailed and asymmetric distributions.

\section{A panel reverse unrestricted mixed data sampling model}
\label{sec:method}
    
We propose a new Bayesian mixed-frequency multi-country dynamic panel model, where interdependence among countries and across high-frequency time periods is introduced through a parsimonious hierarchical structure on the coefficients. 
In this framework, the dependent ``target'' variable is sampled at a lower frequency than some of the covariates.

This class of models offers flexibility in reducing the number of parameters when this grows faster than the sample size, thereby mitigating potential overfitting issues.
\cite{chib1995hierarchical} originally considered a Bayesian seemingly unrelated regression (SUR) model with time-varying parameters, which can be easily cast within a hierarchical structure. 
Subsequently, \cite{canova2009estimating} proposed a multi-country panel vector autoregressive (VAR) model, where interdependence among countries is introduced via common factors that substantially reduce the number of parameters. 
More recently, \cite{koop2019forecasting} extended the multi-country approach of \cite{canova2009estimating} by allowing for structural breaks and time-varying volatility.
Similarly, \cite{casarin2018uncertainty} introduced a dynamic panel with Markov-switching coefficients, where mixed frequency data are handled through an unrestricted mixed-data sampling (MIDAS) specification (\citealp{foroni2015unrestricted}).

Building on \cite{canova2009estimating} and \cite{casarin2018uncertainty}, we propose a new panel reverse unrestricted mixed data sampling (PRUMIDAS) model. 
Then, we consider a convenient specification to analyze the impact of forecasted renewable energy sources (RES) generation, forecasted demand, and fossil fuels prices on electricity prices.  
Our analysis focuses on disentangling country-specific deviations (i.e. country-specific effects) from the ``common'' European effect.

\subsection{Notation} \label{sec:Notation}

Before presenting our model, we briefly introduce the basic notation used throughout the paper. 
We denote the dependent variable for country $g$ as $y_{g,t+h}$, the set of high-frequency covariates as $x^{(1)}_{gj,t+h}$, for $j=1,\ldots,\Tilde{N}$ and the set of low-frequency covariates as $x^{(2)}_{gj,t}$, for $j=\Tilde{N}+1,\ldots,N$, where $t=1,\ldots,T$, $g=1,\ldots, G$, and $h=0,\ldots, H-1$. 
In detail, $G$ denotes the number of countries, $\tilde{N}$ the number of high-frequency covariates, and $N$ the total number of covariates.

Specifically, the target variable $y_{g,t+h}$ and high-frequency covariates $x^{(1)}_{gj,t+h}$ are sampled over the high-frequency time periods $t=1,\ldots, T$, whereas the low-frequency covariates $x^{(2)}_{gj,t}$ are observed over $t=H,2H,\ldots,\Tilde{T}H$, where $H$ denotes the frequency mismatch between the high- and low-frequency variables. 
Consequently, $\Tilde{T} = \lfloor T/H \rfloor$ represents the number of low-frequency observations such that $\Tilde{T}H \leq T$, where $\lfloor\cdot\rfloor$ indicates the smallest integer part.

Note that we may re-write $x^{(2)}_{gj,t}$ as $x^{(2)}_{gj,t+0}$, allowing us to use the same subscript $j$ for both the low- and high-frequency types of variables. We denote $N$ as the number of covariates.
Then, we define
\begin{equation*}
\label{sec:method:Hj}
H_j = \begin{cases}
    1, &\text{ if } j\in[1,\tilde{N}], \\
    H, &\text{ if } j\in[\tilde{N}+1,N],
    \end{cases}        
\end{equation*}
as the ``lag multiplier'', and
\begin{equation*}
\label{sec:method:hj}
h_j = \begin{cases}
    h, &\text{ if } j\in[1,\tilde{N}], \\
    0, &\text{ if } j\in[\tilde{N}+1,N],
    \end{cases}
\end{equation*}
as the ``frequency mismatch''. Both quantities vary depending on the sampling frequency.

\subsection{The mixed-frequency model}

As a starting point, we consider the mixed-frequency dynamic panel model given by
\begin{equation}
\label{sec:method:first-eq}
    y_{g,t+h} = \mu_{gh} + \sum_{a=1}^{A} \alpha_{gha}y_{g,t+h-a} + \sum_{j=1}^{N} \sum_{b=0}^{B_j} \beta_{gjhb} x_{gj,t+h_j-bH_j} + e_{g,t+h},
\end{equation}
where $\mu_{gh}$ denotes the intercept, $\alpha_{gha}$ are the coefficients of the lagged high-frequency dependent variable, and $\beta_{gjhb}$ the coefficients associated with the $j$-th covariate $x_{gj,t+h_j-bH_j}$.
$A$ and $B_j$ are the maximum lag orders for the dependent variable and the $j$-th covariate, respectively. 
The error term $e_{g,t+h}$ is assumed to be independently and identically normally distributed with zero mean and variance $\sigma^2_{gh}$, varying by country and high-frequency time periods.

To simplify the notation, we rewrite the model in Eq.~\eqref{sec:method:first-eq} using lag polynomials.
Specifically, let $\alpha_{gh}(L)$ and $\beta_{gjh}(L^{H_j})$ denote the lag polynomials for the autoregressive and the $j$-th explanatory term, respectively:
\begin{equation*}
    \alpha_{gh}(L) = 1-\sum_{a=1}^{A}\alpha_{gha}L^{a}, 
    \qquad 
    \beta_{gjh}\left(L^{H_j}\right) = \sum_{b=0}^{B_j}\beta_{gjhb}L^{bH_j},
\end{equation*}
where $L$ is the lag operator and $H_j$ is the ``lag multiplier'' associated with the sampling frequency.
In detail, the lag polynomial for low-frequency covariates only includes powers of $L^{H}$. 
Moreover, we allow for contemporaneous effects of the covariates by initiating the summation at $b=0$, although this might not be applicable if the covariate is observed at a lower frequency.

Using this notation, the model in Eq.~\eqref{sec:method:first-eq} can be rewritten as
\begin{align}
\label{sec:method:second-eq}
\small
 \alpha_{gh}(L)y_{g,t+h} &= \mu_{gh}+\sum_{j=1}^{N} \beta_{gjh}(L^{H_j})x_{gj,t+h_j}+e_{g,t+h}, 
\end{align}
where $y_{g,t+h}$ is a linear function of its own distributed lags and those of the covariates.
Thus, our model resembles an autoregressive distributed lag model, with the key difference that we consider covariates over two different sampling frequencies.  
Differently from our approach, \cite{casarin2018uncertainty} consider a mixed-frequency panel in which the target variable is sampled at a lower frequency than the covariates. 
Accordingly, our model is related to the reverse mixed data sampling (RMIDAS) model introduced by \cite{foroni2018using}, and employs the ``unrestricted'' linear parametrization proposed by \cite{foroni2015unrestricted}. 

However, while in the RMIDAS model the coefficients depend on the horizon $h$ for which a forecast is desired, in our model the dependence of the coefficients on the horizon $h$ is introduced through a hierarchical structure, allowing for an interpretation as ``high-frequency'' time effects. 
Similarly, the coefficients depend on $g$ to introduce ``country'' effects. 
Importantly, we assume that, for a given horizon $h$, the high-frequency effect is the same across all countries and, for a given country $g$, the country effect is the same across all high-frequency periods.

Our framework can also be compared to the multi-country vector autoregressive (VAR) specification of \cite{canova2009estimating} (see e.g. Example 2.1.1, p. 933). 
While our model can be interpreted as a panel VAR model with $H$ series, it does not capture cross-country or cross-hourly feedback effects from the past to the present. This simplification is justified by the empirical context of electricity pricing, where such an effect is not relevant.
In contrast, our model enables the incorporation of mixed-frequency data, extending the analysis to jointly include high-frequency electricity prices and RES forecasted generation alongside low frequency fossil prices.

\subsection{Model specification}

In our analysis, we consider $\tilde{N} = 3$ high-frequency (hourly forecasted demand, forecasted wind power generation, and forecasted solar power generation) and three low-frequency covariates (daily coal, gas, and Co$_2$ prices) resulting in $N=6$ variables in total. 
The frequency mismatch parameter $H$ is equal to $24$, corresponding to the number of hours in a day.
The panel comprises $G=9$ countries, representing the major European electricity markets.

Let $y_{g,t+h}$ denote the electricity prices in the day-ahead (spot) market for country $g$, for delivering on day $t$ at hour $h$. Following \cite{Ravivetal2015} and \cite{gianfreda2020comparing}, the model in Eq.~\eqref{sec:method:second-eq} becomes
\begin{equation}
\label{sec:method:model-specification-eq}
    y_{g,t+h} = \mu_{gh}+\sum_{a\in\Tilde{A}}\alpha_{gha}y_{g,t+h-a}+\sum_{j=1}^{6}\beta_{gjh}x_{gj,t+h_j}+e_{g,t+h},
\end{equation}
where $\Tilde{A} = \{H,2H, 7H\}$ includes yesterday, two-days ago and 1-week ago hourly electricity prices on hour $h$ of the day. Based on the strong co-movement between fossil fuel and electricity prices, we only consider the contemporaneous effects of the covariates ($B_j = 0$ for any $j$). 

It is worth noting that while MIDAS approaches are primarily used for forecasting, their application in econometric analysis remains limited.
Nonetheless, \cite{ghysels2004midas} argue that MIDAS models, in general, can improve the precision of causal effect estimation by mitigating aggregation bias inherent in distributed lag models using same-frequency variables.
In this context, the high granularity of our hourly dataset enables us to reduce parameter estimation uncertainty.

The temporal alignment of the data must be carefully considered, depending on the type of variable and its release. Since day-ahead electricity prices are determined the day before physical delivery, we use the first lag (i.e. the previous day's value) for fossil fuel prices to reflect the information available at the time of the market closure.
Similarly, forecasted RES generation variables reflect the information available when the market is set. We do not include additional past lags of the covariates as, economically, electricity prices are primarily determined by contemporaneous RES generation and fossil fuel prices. In addition, including past lags would substantially increase computational complexity.


To keep the discussion as general as possible, the subsequent sections will be based on the general formulation of the model presented in Eqs.~\eqref{sec:method:first-eq} and \eqref{sec:method:second-eq}.

\subsection{Common and random effects}
\label{sec:model:common-random-effects}


Following the approach of \cite{canova2009estimating}, we consider a hierarchical structure that shrinks the coefficients in Eq.~\eqref{sec:method:first-eq} towards group-specific common effects. 
This hierarchical shrinkage structure reduces model complexity and mitigates the risk of overfitting that could arise in alternative specifications without cross-equation restrictions.

In particular, we introduce country- and hourly-specific common and random effects in both the drift and slope coefficients to capture interaction effects across different countries and hours, reflecting the structure of the European electricity markets. 
These coefficients are decomposed as follows:
\begin{align}
    \label{sec:method:cere}
    \mu_{gh} &= \mu + \psi_{\mu,h} + \zeta_{\mu,g},  \nonumber\\
    \alpha_{gha} &= \alpha_{a} + \psi_{\alpha,ha} + \zeta_{\alpha,ga}, \\
    \beta_{gjhb} &= \beta_{jb} + \psi_{\beta,jhb} + \zeta_{\beta,gjb},  \nonumber
\end{align}
where the coefficients vary by country $g$, hour $h$ of the day, and lags $a$ and $b$.
Moreover, we assume that $\Cov(\psi_{\mu,h}\comma\psi_{\alpha,h^{\prime}a^{\prime}})\equal0$, $\Cov(\psi_{\mu,h}\comma\psi_{\beta,jh^{\prime}b^{\prime}})\equal0$, $\Cov(\psi_{\alpha,ha}\comma\psi_{\beta,jh^{\prime}b^{\prime}})\equal0$ and $\Cov(\zeta_{\mu,g}\comma\zeta_{\alpha,g^{\prime}a^{\prime}})\equal0$, $\Cov(\zeta_{\mu,g}\comma\zeta_{\beta,g^{\prime}jb^{\prime}})\equal0$, $\Cov(\zeta_{\alpha,ga}\comma\zeta_{\beta,g^{\prime}jb^{\prime}})\equal0$ for all $g,g^{\prime},h,h^{\prime}\comma a \comma a^{\prime}$ and $b,b^{\prime}$. 

In particular, $\psi_{\mu,h}$, $\psi_{\alpha, ha}$, and $\psi_{\beta, jhb}$ denote the hourly random effects, which are homogeneous across countries but vary across hours.
They can be interpreted as ``time dummies'', that are, as deviations from the common effect $\beta_j$, capturing systematic patterns in electricity consumption that vary across hours of the day but are assumed to be homogeneous across countries. 
Importantly, all hourly time series in the dataset are referenced to the local time zone of each respective country. As a result, issues of temporal misalignment across countries do not arise. Since these effects do not change substantially across countries, we do not report their results.
Conversely, $\zeta_{\mu,g}$, $\zeta_{\alpha,ga}$, and $\zeta_{\beta,gjb}$ represent the country random effects, which are homogeneous across hours but vary across countries. 

The hierarchical structure enables us to model hour-specific dependence while maintaining a univariate specification for each hour, capturing intraday dynamics while avoiding the complexity of a full multivariate model.
Alternatively, one could allow the hourly effect to vary across countries -- i.e., by setting $\psi_{\mu,gh}$ as country-specific -- such an extension would account for differences in daily consumption patterns and peak load times across countries. 
However, to reduce model complexity, we opt for a parsimonious specification in which the hourly random effects are assumed to be homogeneous across countries.

\subsection{Prior specifications}
\label{sec:model:prior-spec}

Having introduced the model and its hierarchical structure, we now specify the prior representation for the parameters of the PRUMIDAS model.

We assume independent normal priors for the common intercept $\mu$, the vector of common autoregressive coefficients $\balpha = \leftbr\alpha_{1}\comma\ldots\comma\alpha_{A}\rightbr^{\prime}$, and, for each covariate $j$, the vector of slope coefficients $\bbeta_{j} = \leftbr\beta_{j0},\beta_{j1}\comma\ldots\comma\beta_{jB_j}\rightbr^{\prime}$. Formally,
\begin{equation*}
    \mu\sim\mathcal{N}(0,s_{0}^2), \qquad
    \balpha\sim\mathcal{N}_A(\bm{0}_{A},r_{0}^2 I_{A}), \qquad
    \bbeta_j\sim\mathcal{N}_{(1+B_j)}(\bm{0}_{(1+B_j)},r_{0}^2 I_{(1+B_j)}),
\end{equation*}
where $s_0$ and $r_0$ are hyperparameters set equal to 10, and $(1+B_j)$ denotes the total number of lags for covariate $j$, including its contemporaneous effect.

For the random effects, we assume hierarchical priors by assuming that both the hourly and country random effects follow independent normal distributions \begin{equation}
\label{sec:model:prior-spec:random-effects}
    \begin{aligned}
        \psi_{\mu,h}&\sim\mathcal{N}(0,q_{\mu}),\qquad&\zeta_{\mu,g}\sim\mathcal{N}(0,r_{\mu}), \\
        \psi_{\alpha,ha}&\sim\mathcal{N}(0,q_{\alpha}),\qquad&\zeta_{\alpha,ga}\sim\mathcal{N}(0,r_{\alpha}), \\
        \psi_{\beta,jhb}&\sim\mathcal{N}(0,q_{\beta}),\qquad&\zeta_{\beta,gjb}\sim\mathcal{N}(0,r_{\beta}),
    \end{aligned}
\end{equation}
where the scale parameters are assumed to follow independent inverse gamma priors
\begin{equation*}
    \begin{aligned}
        &q_{\mu},q_{\alpha},q_{\beta}\sim\mathcal{IG}(n_0,m_0), \\
        &r_{\mu},r_{\alpha},r_{\beta}\sim\mathcal{IG}(n_0,m_0),
    \end{aligned}
\end{equation*}
with the hyperparameters $n_0$ and $m_0$ set equal to 0.1. Here, $\mathcal{IG}(n_0,m_0)$ denotes the inverse gamma distribution with shape and rate parameters denoted as $n_0$ and $m_0$, respectively.

We further specify a flexible structure for the variance of the error term, $e_{g,t+h}$, by allowing for country- and hourly-specific heteroskedasticity, that is
\begin{equation}
    \label{variance}
    \sigma_{gh}^2 = \sigma^2\lambda_{h}^{-1}\chi_{g}^{-1},
\end{equation}
where $\lambda_{h}$ and $\chi_{g}$ capture hourly- and country-heterogeneity respectively, while $\sigma^2$ represents a common variance component.
We further assume hierarchical prior specifications for the scale parameters as follows
\begin{equation*}
    \sigma^2\sim\mathcal{IG}(v_1,w_1),
    \qquad
    \lambda_{h}\sim\mathcal{IG}(v_2,w_2),
    \qquad
    \chi_{g}\sim\mathcal{IG}(v_3,w_3),
\end{equation*}
where, for the hyperparameters, we set $v_i=w_i=0.1$ for $i=1,2,3$.


\subsection{Posterior approximation}
\label{sec:model:posterior-approx}

We consider additional representations of our model that are derived under a hierarchical structure incorporating both common and random effects. 
In particular, we can rewrite the model in a stacked form, linking it to hierarchical panel vector autoregression models which have become extremely popular in the empirical macroeconomic literature (\citealp{canova2009estimating}; \citealp{koop2019forecasting}). 
This formulation facilitates the approximation of the joint posterior and increases the computational efficiency of the algorithm. We consider an efficient MCMC algorithm where the random effects are integrated out of the likelihood (Rao-Blackwellization) and obtain an analytical expression of the model with a time-varying variance, avoiding the need of explicitly adopting a time-varying variance specification.

To present the model formally, we introduce stacked and vectorized notation for the variables used in the PRUMIDAS model. 
We denote $\by_{g,t+h-1}\equal(y_{g,t+h-1}\comma\ldots\comma y_{g,t+h-A})^{\prime}$ as the vector of lagged values of the high-frequency dependent variable, $\bx_{gj,t+h_j} = (x_{gj,t+h_j},\ldots,x_{gj,t+h_j-B_jH_j}\rightbr^{\prime}$ as the vector of lagged values of either high- or low-frequency covariates, depending on the value of $H_j$ and $h_j$, as defined in Section~\ref{sec:Notation}.
We define $\bz_{g,t+h} = \leftbr1,\by_{g,t+h-1}^{\prime}\comma\bx_{g1,t+h_1}^{\prime}\comma\ldots\comma\bx_{gN,t+h_N}^{\prime}\rightbr^{\prime}$ as the vector including both the autoregressive high-frequency and lagged covariates at both frequency. 

Moreover, denote $L = \bigl(1+A+\sum_{j=1}^{N}(1+B_j)\bigr)$ as the dimension of $\bz_{g,t+h}$. Denote as $\bgamma = \leftbr \mu,\balpha^{\prime},\bbeta_{1}^{\prime}\comma\ldots\comma\bbeta_{N}^{\prime})^{\prime}$ the $L-$vector of common effects, including the common intercept and common coefficients, $\balpha = (\alpha_{1}\comma\ldots\comma\alpha_{A})^{\prime}$ and $\bbeta_{j} = \leftbr \beta_{j0}, \beta_{j1}\comma\ldots\comma\beta_{jB_j}\rightbr^{\prime}$. 
Let $\bpsi_{h} = (\psi_{\mu,h}\comma\bpsi_{\alpha,h}^{\prime}\comma\bpsi_{\beta,1h}^{\prime}\comma\ldots\comma\bpsi_{\beta,Nh}^{\prime}\rightbr^{\prime}$ be the $L-$vector of hourly random effects, with $\bpsi_{\alpha,h} = \leftbr\psi_{\alpha,h1}\comma\ldots\comma\psi_{\alpha,hA}\rightbr^{\prime}$ be a $A$-vector and $\bpsi_{\beta,jh} = \leftbr\psi_{\beta,jh0}\comma\ldots\comma\psi_{\beta,jhB_j}\rightbr^{\prime}$ a $(1+B_j)$-vector for $j=1,\ldots,N$. 
Then, let $\bzeta_{g} = (\zeta_{\mu,g},\bzeta_{\alpha,g}^{\prime},\bzeta_{\beta,g1}^{\prime},\ldots,\bzeta_{\beta,gN}^{\prime}\rightbr^{\prime}$ the $L-$vector of country random effects, where $\bzeta_{\alpha,g} = \leftbr\zeta_{\alpha,g1},\ldots,\zeta_{\alpha,gA}\rightbr^{\prime}$ is $A$--vector and $\bzeta_{\beta,gj} = \leftbr\zeta_{\beta,gj0}\comma\ldots\comma\zeta_{\beta,gjB_j}\rightbr^{\prime}$ is $(1+B_j)$--vector for $j=1,\ldots,N$.

\begin{proposition}
\label{sec:method:prop1}
Based on the previous definitions, the model in Eq.~\eqref{sec:method:second-eq} can be expressed in stacked form as
\begin{equation}
    \label{sec:method:eq-P1}
    y_{g,t+h} = \bz_{g,t+h}^{\prime}(\bgamma+\bzeta_{g}+\bpsi_{h}) + e_{g,t+h}.
\end{equation}
where $\bgamma$, $\bzeta_{g}$, and $\bpsi_h$ are the vectors of common, country, and hourly random effects, respectively.
\end{proposition}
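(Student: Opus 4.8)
The plan is to prove the proposition by direct substitution followed by term collection, since the statement is essentially the claim that the scalar hierarchical model repackages as a single inner product. Rather than start from the lag-polynomial form in Eq.~\eqref{sec:method:second-eq}, I would work from its fully expanded equivalent Eq.~\eqref{sec:method:first-eq}, which displays every coefficient $\mu_{gh}$, $\alpha_{gha}$, and $\beta_{gjhb}$ explicitly. Into this expression I would insert the hierarchical decomposition from Eq.~\eqref{sec:method:cere}, replacing $\mu_{gh}$ by $\mu + \psi_{\mu,h} + \zeta_{\mu,g}$, each $\alpha_{gha}$ by $\alpha_a + \psi_{\alpha,ha} + \zeta_{\alpha,ga}$, and each $\beta_{gjhb}$ by $\beta_{jb} + \psi_{\beta,jhb} + \zeta_{\beta,gjb}$.

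The second step is to regroup the resulting sums according to the type of effect. Because the decomposition is additive and each decomposed coefficient multiplies the same regressor, the right-hand side splits cleanly into three blocks sharing identical regressors (the constant $1$, the lagged dependent values $y_{g,t+h-a}$, and the lagged covariates $x_{gj,t+h_j-bH_j}$): a common block built from $\mu$, the $\alpha_a$, and the $\beta_{jb}$; an hourly block built from the $\psi$-terms; and a country block built from the $\zeta$-terms, plus the untouched error $e_{g,t+h}$.

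The third step is identification. I would verify that each scalar block equals the inner product of the regressor vector $\bz_{g,t+h}$ with the matching coefficient vector, which follows by matching the stacking orders component by component. Since $\bz_{g,t+h} = (1,\by_{g,t+h-1}',\bx_{g1,t+h_1}',\ldots,\bx_{gN,t+h_N}')'$ and $\bgamma = (\mu,\balpha',\bbeta_1',\ldots,\bbeta_N')'$ are stacked in the same order, the common block is exactly $\bz_{g,t+h}'\bgamma$; the identical stacking of $\bpsi_h$ and $\bzeta_g$ gives $\bz_{g,t+h}'\bpsi_h$ and $\bz_{g,t+h}'\bzeta_g$ for the hourly and country blocks. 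Summing the three and using linearity of the inner product factors out the common regressor vector, yielding $\bz_{g,t+h}'(\bgamma+\bzeta_g+\bpsi_h)+e_{g,t+h}$, which is Eq.~\eqref{sec:method:eq-P1}.

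There is no analytical difficulty here; the only point demanding genuine care---and thus the main obstacle---is the index bookkeeping. I would need to confirm that the mixed-frequency structure is already fully absorbed into the definition of $\bx_{gj,t+h_j}$ via the lag multiplier $H_j$ and the mismatch $h_j$ from Section~\ref{sec:Notation}, so that high- and low-frequency covariates require no separate treatment, and that the dimension $L = 1+A+\sum_{j=1}^N (1+B_j)$ is consistent across $\bz_{g,t+h}$, $\bgamma$, $\bpsi_h$, and $\bzeta_g$, so that every inner product is well defined.
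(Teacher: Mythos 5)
Your proposal is correct and follows essentially the same route as the paper's own proof: substitute the decomposition of Eq.~\eqref{sec:method:cere} into the expanded model of Eq.~\eqref{sec:method:first-eq}, collect the common, hourly, and country blocks, and recognize each as an inner product with $\bz_{g,t+h}$ under the stated stacking order. Your additional remarks on verifying that the mixed-frequency indexing is absorbed into $\bx_{gj,t+h_j}$ and that the dimension $L$ is consistent are sensible but do not change the argument.
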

\begin{proof}
    See Appendix \ref{sec:appendix}.
\end{proof}

\begin{proposition}
\label{sec:method:prop2}
The stacked-form model in Eq.~\eqref{sec:method:eq-P1} admits a ``low-frequency'' panel vector autoregressive representation as
\begin{equation}
    \label{sec:method:eq-P2}
    \by_{g,t} = Z_{g,t}\bigl(\bgamma+\bzeta_{g}\bigl)+\diag\bigl(Z_{g,t}\bpsi^{\prime}\bigl)+\be_{g,t},
\end{equation}
where $Z_{g,t} = \leftbr\bz_{g,t+0}\comma,\ldots,\comma\bz_{g,t+H-1}\rightbr^{\prime}$ and $\bpsi = \leftbr\bpsi_0\comma,\ldots\comma\bpsi_{H-1}\rightbr^{\prime}$ are matrices of size $H\times L$, $\by_{g,t} = \leftbr y_{g,t+0},\ldots,y_{g,t+H-1}\rightbr$, and $\be_{gt}\sim\mathcal{N}_{H}\leftbr \mathbf{0},\Sigma_{g})$ with $\Sigma_{g} = \diag(\sigma_{g0}^2\comma\ldots,\sigma_{g(H-1)}^2)$.
\end{proposition}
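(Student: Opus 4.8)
The plan is to derive the low-frequency representation by stacking, for a fixed low-frequency period $t$, the $H$ scalar equations from Proposition~\ref{sec:method:prop1} that correspond to the high-frequency indices $h=0,\ldots,H-1$. For each such $h$, Eq.~\eqref{sec:method:eq-P1} can be split as $y_{g,t+h}=\bz_{g,t+h}^{\prime}\bgamma+\bz_{g,t+h}^{\prime}\bzeta_{g}+\bz_{g,t+h}^{\prime}\bpsi_{h}+e_{g,t+h}$. Collecting the $H$ left-hand sides into the vector $\by_{g,t}=(y_{g,t+0},\ldots,y_{g,t+H-1})$ and recalling that $Z_{g,t}$ is defined to have the rows $\bz_{g,t+h}^{\prime}$, I would first dispose of the common and country terms: since neither $\bgamma$ nor $\bzeta_{g}$ depends on $h$, their stacked contributions are exactly the matrix--vector products $Z_{g,t}\bgamma$ and $Z_{g,t}\bzeta_{g}$, which combine into $Z_{g,t}(\bgamma+\bzeta_{g})$.

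The only term that resists being written as a single matrix--vector product is the hourly random effect, because $\bpsi_{h}$ changes with the equation index $h$. The key device is the matrix $\bpsi$ whose $h$-th row is $\bpsi_{h}^{\prime}$, so that the columns of $\bpsi^{\prime}$ are the $\bpsi_{h}$. Then $Z_{g,t}\bpsi^{\prime}$ is the $H\times H$ matrix whose $(h,h')$ entry is $\bz_{g,t+h}^{\prime}\bpsi_{h'}$, and in particular its $h$-th diagonal entry is exactly the hourly contribution $\bz_{g,t+h}^{\prime}\bpsi_{h}$ to the $h$-th stacked equation. Hence $\diag(Z_{g,t}\bpsi^{\prime})$, read as the operation that extracts the diagonal into an $H$-vector, reproduces the stacked hourly-effect term. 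Verifying this identity --- and, relatedly, being careful that here $\diag$ denotes diagonal extraction rather than the formation of a diagonal matrix --- is the one non-routine step; everything else is bookkeeping.

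Finally, I would stack the disturbances into $\be_{g,t}=(e_{g,t+0},\ldots,e_{g,t+H-1})^{\prime}$. Because the $e_{g,t+h}$ are independent Gaussian with $e_{g,t+h}\sim\mathcal{N}(0,\sigma_{gh}^{2})$, the stacked vector is $\be_{g,t}\sim\mathcal{N}_{H}(\mathbf{0},\Sigma_{g})$ with $\Sigma_{g}=\diag(\sigma_{g0}^{2},\ldots,\sigma_{g(H-1)}^{2})$, the off-diagonal covariances vanishing by independence across hours within the same $t$. Summing the common/country term $Z_{g,t}(\bgamma+\bzeta_{g})$, the diagonal hourly term $\diag(Z_{g,t}\bpsi^{\prime})$, and the stacked error $\be_{g,t}$ yields Eq.~\eqref{sec:method:eq-P2}, which is the claimed representation.
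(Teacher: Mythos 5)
Your argument is correct and follows essentially the same route as the paper's proof: stack the $H$ equations of Proposition~\ref{sec:method:prop1} over $h=0,\ldots,H-1$, pull the $h$-independent terms $\bgamma$ and $\bzeta_g$ out as $Z_{g,t}(\bgamma+\bzeta_g)$, and observe that the hourly contributions $\bz_{g,t+h}^{\prime}\bpsi_h$ are exactly the diagonal entries of $Z_{g,t}\bpsi^{\prime}$. The paper expresses this last step via the Hadamard product with $I_H$, whereas you verify the $(h,h)$ entry directly and flag that $\diag$ means diagonal extraction --- a cosmetic difference only, and your reading is the correct one.
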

\begin{proof}
    See Appendix \ref{sec:appendix}.
\end{proof}

Now, we consider the likelihood function of the PRUMIDAS model based on the vectorized representation in Eq. \eqref{sec:method:eq-P2}. 
Let $\btheta = (\bgamma,\bpsi,\bzeta,\sigma^2,\blambda,\bchi,\bq,\br)$ denote the full vector of model parameters, where $\bpsi = (\bpsi_0^{\prime},\ldots,\bpsi_{H-1}^{\prime})^{\prime}$, $\bzeta = (\bzeta_1^{\prime},\ldots,\bzeta_G^{\prime})^{\prime}$ are defined in the preamble of Proposition \ref{sec:method:prop1}, while $\blambda = (\lambda_{0},\ldots,\lambda_{H-1})^{\prime}$ and $\bchi = (\chi_{1},\ldots,\chi_{G})^{\prime}$ are respectively the $H-$vector and $G-$vector of hourly- and country-specific variances.

Following Section~\ref{sec:model:common-random-effects}, we assume that $\bpsi_{h}\sim\mathcal{N}_{L}\leftbr \mathbf{0}_{L},Q)$ and $\bzeta_{g}\sim\mathcal{N}_{L}\leftbr \mathbf{0}_{L},R)$, where $Q = \diag\{\bigl\leftbr q_{\mu},\biota_{A}^{\prime}\otimes q_{\alpha},\biota_{N\sum_j(1+B_j)}^{\prime}\otimes q_{\beta}\bigr\rightbr^{\prime}\}$, 
$R = \diag\{\bigl\leftbr r_{\mu},\biota_{A}^{\prime}\otimes r_{\alpha},\biota_{N\sum_j(1+B_j)}^{\prime}\otimes r_{\beta}\bigr\rightbr^{\prime}\}$, and $\biota_A^{\prime}$ and $\biota_N^{\prime}$ are the vector of ones of size $A$ and $N$, respectively.
The likelihood function of the model in Eq. \eqref{sec:method:eq-P2} is
\begin{equation*}
L(\by|\btheta)\propto\prod_{t=1}^{T}\prod_{g=1}^{G}|\Sigma_{g}|^{-\frac{1}{2}}\exp\left\{-\frac{1}{2}\be_{g,t}^{\prime}\Sigma_{g}^{-1}\be_{g,t}\right\},
\end{equation*}
where $\be_{g,t} = (e_{g,t+0},\ldots,e_{g,t+H-1})^{\prime}$. 
Accordingly, the posterior distributions of the parameter vector is
\begin{equation*}    \pi(\btheta|\by)\propto\,L(\by|\btheta)\pi(\sigma^2)\pi(\mu)\pi(\balpha)\pi(\bbeta)\pi(R)\pi(Q)\prod_{g=1}^{G}\pi(\chi_g)\pi(\zeta_{g})\prod_{h=0}^{H-1}\pi(\bpsi_{h})\pi(\lambda_h),
\end{equation*}
which is analytically intractable. Therefore, we rely on MCMC methods to approximate the posterior distribution. 

Following \cite{casarin2018uncertainty}, we implement a multi-move sampling scheme, drawing blocks of parameters jointly rather than individually. 
To improve sampling efficiency and convergence, we apply Rao–Blackwellization by analytically integrating out the random effects $\bzeta_{g}$ and $\bpsi_{h}$ (see \citealp{roberts1997updating}).

The next proposition allows us to achieve this dual goal, enhancing the efficiency of the sampling process and leading to more accurate posterior estimates. As already mentioned, this will allow us to obtain a time-varying representation of the error's variance.

\begin{proposition}
\label{sec:method:prop3}
By marginalizing the random effects in Eq. \eqref{sec:method:eq-P1}, the model simplifies to 
\begin{equation}
    \label{sec:method:eq-P3}
    y_{g,t+h}= \bz_{g,t+h}^{\prime}\bgamma+\Tilde{e}_{g,t+h},
\end{equation}
where $\Tilde{e}_{g,t+h}\sim\mathcal{N}(0,\sigma_{gh,t}^{2})$ with $\sigma_{gh,t}^{2} = \sigma_{gh}^{2}+ \bz_{g,t+h}^{\prime}(R+Q)\bz_{g,t+h}$.
\end{proposition}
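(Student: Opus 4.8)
The plan is to read Eq.~\eqref{sec:method:eq-P1} as a statement \emph{conditional} on the random effects and then integrate $\bzeta_g$ and $\bpsi_h$ out, using that every term is Gaussian. First I would collect the stochastic pieces into a single composite disturbance, defining
\[
\Tilde{e}_{g,t+h} = \bz_{g,t+h}^{\prime}\bzeta_{g} + \bz_{g,t+h}^{\prime}\bpsi_{h} + e_{g,t+h},
\]
so that Eq.~\eqref{sec:method:eq-P1} reads exactly $y_{g,t+h} = \bz_{g,t+h}^{\prime}\bgamma + \Tilde{e}_{g,t+h}$ with the common vector $\bgamma$ non-random. Marginalizing over the random effects then amounts to identifying the marginal law of $\Tilde{e}_{g,t+h}$, treating the regressor vector $\bz_{g,t+h}$ as fixed data.

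Next I would invoke the prior specification from Section~\ref{sec:model:common-random-effects}, namely $\bzeta_{g}\sim\mathcal{N}_{L}(\mathbf{0}_{L},R)$, $\bpsi_{h}\sim\mathcal{N}_{L}(\mathbf{0}_{L},Q)$, and $e_{g,t+h}\sim\mathcal{N}(0,\sigma_{gh}^{2})$, together with their mutual independence (the country and hourly effects are drawn independently, and the idiosyncratic error is independent of both). Since a linear combination of jointly Gaussian, mutually independent vectors is again Gaussian, $\Tilde{e}_{g,t+h}$ is normal, and it remains only to compute its first two moments. The mean vanishes, $\E[\Tilde{e}_{g,t+h}] = \bz_{g,t+h}^{\prime}\E[\bzeta_{g}] + \bz_{g,t+h}^{\prime}\E[\bpsi_{h}] + \E[e_{g,t+h}] = 0$, because all three components are centered. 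For the variance, mutual independence kills the cross-covariances, leaving
\[
\mathrm{Var}(\Tilde{e}_{g,t+h}) = \bz_{g,t+h}^{\prime} R\, \bz_{g,t+h} + \bz_{g,t+h}^{\prime} Q\, \bz_{g,t+h} + \sigma_{gh}^{2} = \sigma_{gh}^{2} + \bz_{g,t+h}^{\prime}(R+Q)\bz_{g,t+h},
\]
which is precisely $\sigma_{gh,t}^{2}$, yielding Eq.~\eqref{sec:method:eq-P3}.

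I expect the only genuine subtlety to be conceptual rather than computational. The shared effects $\bzeta_{g}$ and $\bpsi_{h}$ induce correlation across all observations sampling the same country $g$ or the same hour $h$, so the marginalized process is \emph{not} independent across $(g,t,h)$; Eq.~\eqref{sec:method:eq-P3} must therefore be read as the per-observation marginal whose inflated, regressor-dependent variance $\sigma_{gh,t}^{2}$ is exactly what drives the Rao--Blackwellized scheme. I would make explicit that only the Gaussian random effects are integrated out (not $\bgamma$), and that the time index $t$ enters $\sigma_{gh,t}^{2}$ solely through $\bz_{g,t+h}$, since $R$, $Q$, and $\sigma_{gh}^{2}$ are all time-invariant. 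This observation is precisely what delivers the advertised ``time-varying variance'' representation without an explicitly specified stochastic-volatility component.
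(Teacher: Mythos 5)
Your proof is correct and reaches the paper's conclusion by a more elementary route. The paper conditions on $(\bgamma,\bpsi_h,\bzeta_g)$ and marginalizes via the moment generating function: it writes $\E[\exp\{\theta y\}]$ as an iterated integral, uses the MGF of the conditional normal, and then recognizes that integrating $\exp\{\theta\,\bz_{g,t+h}^{\prime}\bpsi_h\}$ and $\exp\{\theta\,\bz_{g,t+h}^{\prime}\bzeta_g\}$ against their $\mathcal{N}_L(\mathbf{0},Q)$ and $\mathcal{N}_L(\mathbf{0},R)$ priors contributes the factor $\exp\{\tfrac{1}{2}\theta^2\,\bz_{g,t+h}^{\prime}(Q+R)\bz_{g,t+h}\}$, identifying the result as a Gaussian MGF. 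You instead absorb $\bz_{g,t+h}^{\prime}\bzeta_g+\bz_{g,t+h}^{\prime}\bpsi_h+e_{g,t+h}$ into a single composite disturbance and invoke closure of the Gaussian family under independent linear combinations, reading off the mean and variance directly; this buys a shorter argument that makes the independence assumptions (which the MGF factorization also silently relies on) explicit. The two computations are substantively the same integral. Your closing caveat --- that the marginalized errors are \emph{not} independent across observations sharing a country $g$ or hour $h$, since those observations load on the same $\bzeta_g$ or $\bpsi_h$ --- is a genuine point the paper's proof does not address: the paper's subsequent claim that $\by_{g,t}\mid\bgamma$ has the \emph{diagonal} covariance $\Sigma_{gt}=\diag(\sigma_{g0,t}^2,\ldots,\sigma_{g(H-1),t}^2)$ ignores the off-diagonal terms $\bz_{g,t+h}^{\prime}R\,\bz_{g,t+h^{\prime}}$ induced by the shared $\bzeta_g$ within a day, so stating Eq.~\eqref{sec:method:eq-P3} strictly as a per-observation marginal, as you do, is the safer reading.
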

\begin{proof}
    See Appendix \ref{sec:appendix}.
\end{proof}

Additionally, the result obtained in Eq.~\eqref{sec:method:eq-P3} can be extended to Eq.~\eqref{sec:method:eq-P2} on a row-by-row basis following the stacked representation in Eq.~\eqref{sec:method:eq-P1}.
Thus, the conditional posterior distribution of $\by_{g,t}$ given $\bgamma$ is normally distributed with mean $\bz_{g,t+h}^{\prime}\bgamma$ and covariance matrix $\Sigma_{gt} = \diag(\sigma_{g0,t}^2\comma\ldots,\sigma_{g(H-1),t}^2)$.


We implement the following sampling scheme (\citealp{casarin2018uncertainty}):
\begin{enumerate}
    \item[(i)] Draw $\bgamma$, $\bpsi$, and $\bzeta$ from $p(\bgamma,\bpsi, \bzeta|\sigma^2,\blambda,\bchi,\bq,\br,\by)$.
        \begin{enumerate}
        \setlength{\itemsep}{2pt}
            \item[(i.1)] Draw $\bgamma$ from $p(\bgamma|\sigma^2,\blambda,\bchi,\bq,\br,\by)$.
            \item[(i.2)] Draw $\bpsi$, $\bzeta$ from $p(\bpsi, \bzeta|\bgamma,\sigma^2,\blambda,\bchi,\bq,\br,\by)$.
        \end{enumerate}
    \item[(ii)] Draw $\sigma^2$, $\blambda$, and $\bchi$ from $p(\sigma^2,\blambda,\bchi|\bgamma,\bpsi,\bzeta,\bq,\br,\by)$
\end{enumerate}

The MCMC algorithm iteratively generates random samples for the model parameters from their full conditional distributions. In practice, we discard the first $3.000$ iterations as burn-in and retain the remaining $10.000$ MCMC draws for posterior inference.


\section{Analysis of the European electricity markets}
\label{sec:analysis-application}

We apply our PRUMIDAS model to study the effects of forecasted RES generation, forecasted demand, and fossil fuel prices on the electricity prices at the European level. 
Our setup allows us to disentangle the country- and hourly-specific relationships between either of these covariates and the electricity prices, shedding light on the integration of European electricity markets.



We focus our analysis on the distribution of the random effects $\zeta_{\beta,gj}$ for each covariate $j$ and country $g$; notice that we have dropped the subscript $b$ relative to the lag, as we only consider the contemporaneous effects. 
In particular, we are interested in comparing the estimates of the overall country effects $\beta_{j}+\zeta_{\beta,gj}$, i.e. the sum of the common and country-specific random effects, against the average group effect $\beta_{j}$. This analysis highlights how countries deviate from average estimates, thereby providing indications for further policy interventions aimed at strengthening market integration. Figure~\ref{fig:country-random-effects-boxplot} exhibits the boxplots associated with each covariate, where each country effect is signaled by a specific color, and the common effect is colored in black.

\begin{figure}[t!]
\centering
\setlength{\tabcolsep}{0.001pt}
\begin{tabular}{ccc}     
    \includegraphics[width=5.2cm]{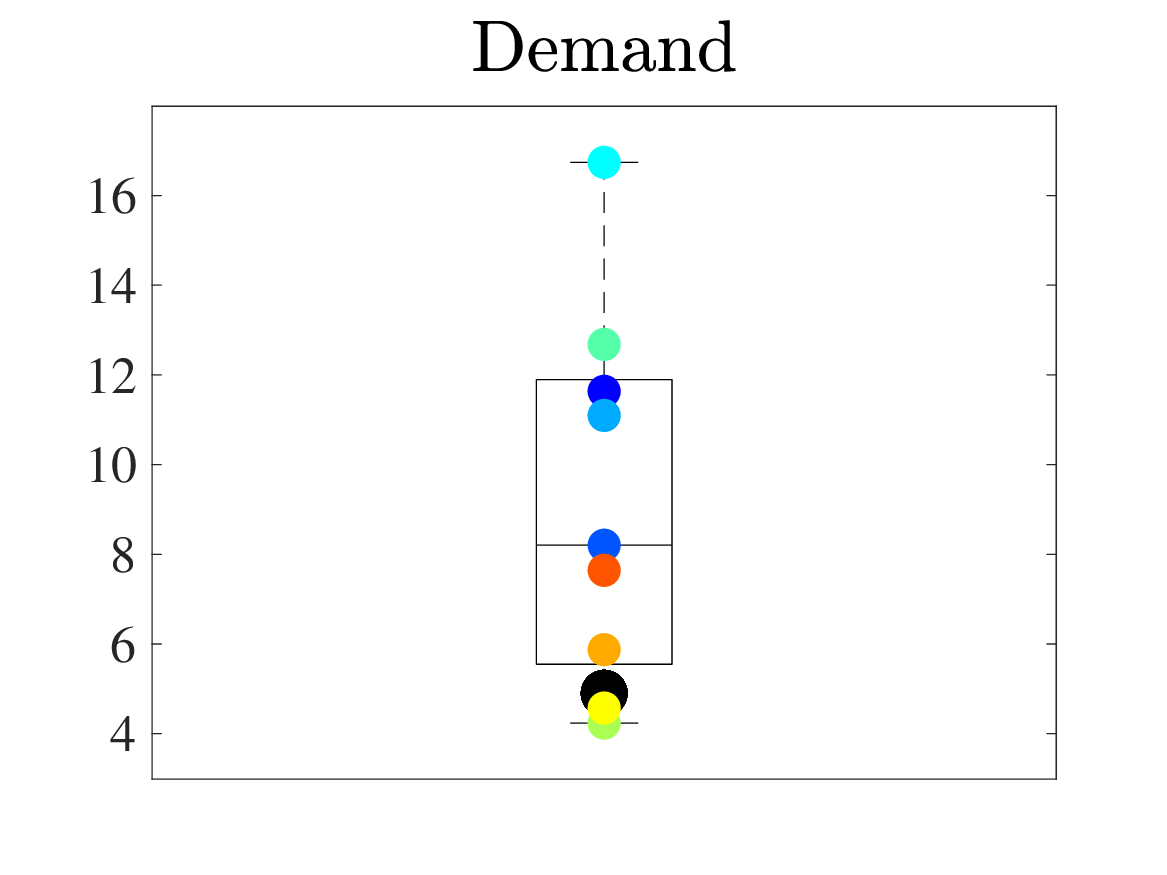} & 
    \includegraphics[width=5.2cm]{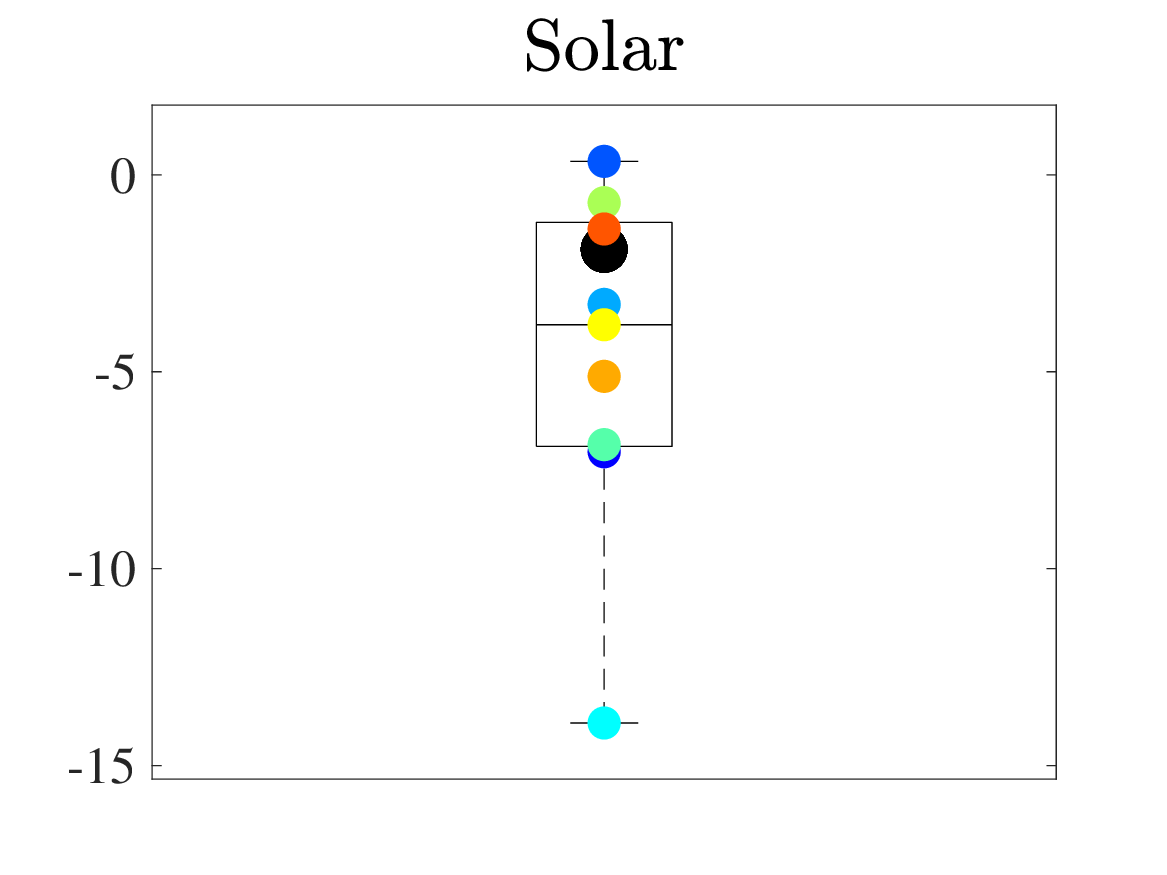} &
    \includegraphics[width=5.2cm]{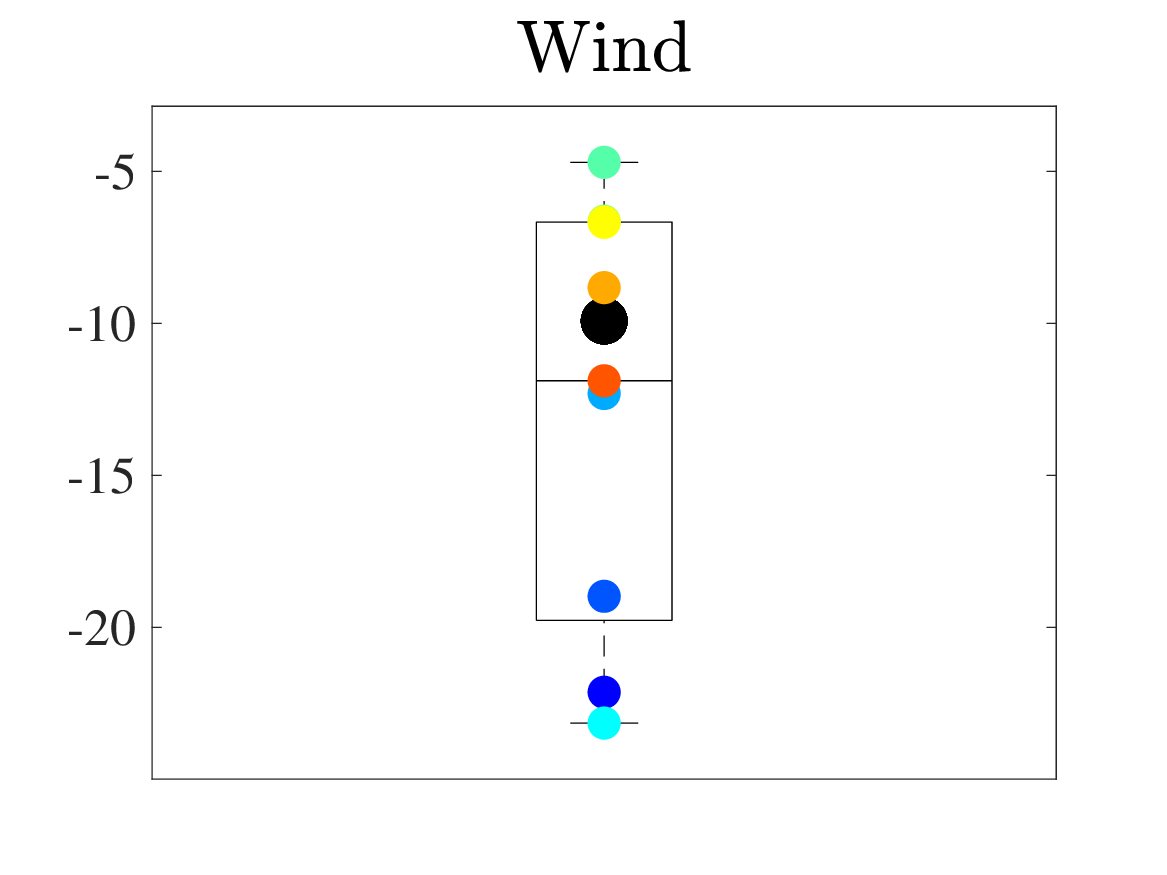} \\
    \includegraphics[width=5.2cm]{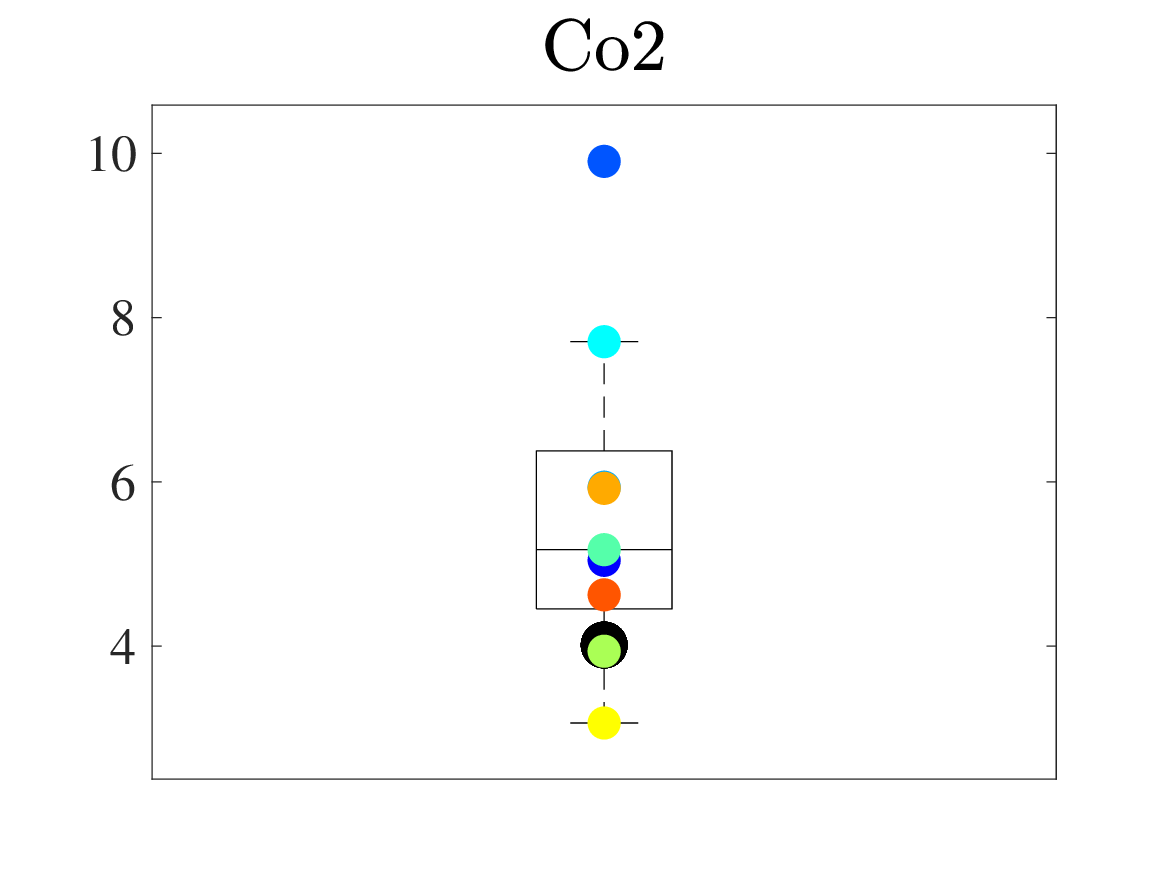} & 
    \includegraphics[width=5.2cm]{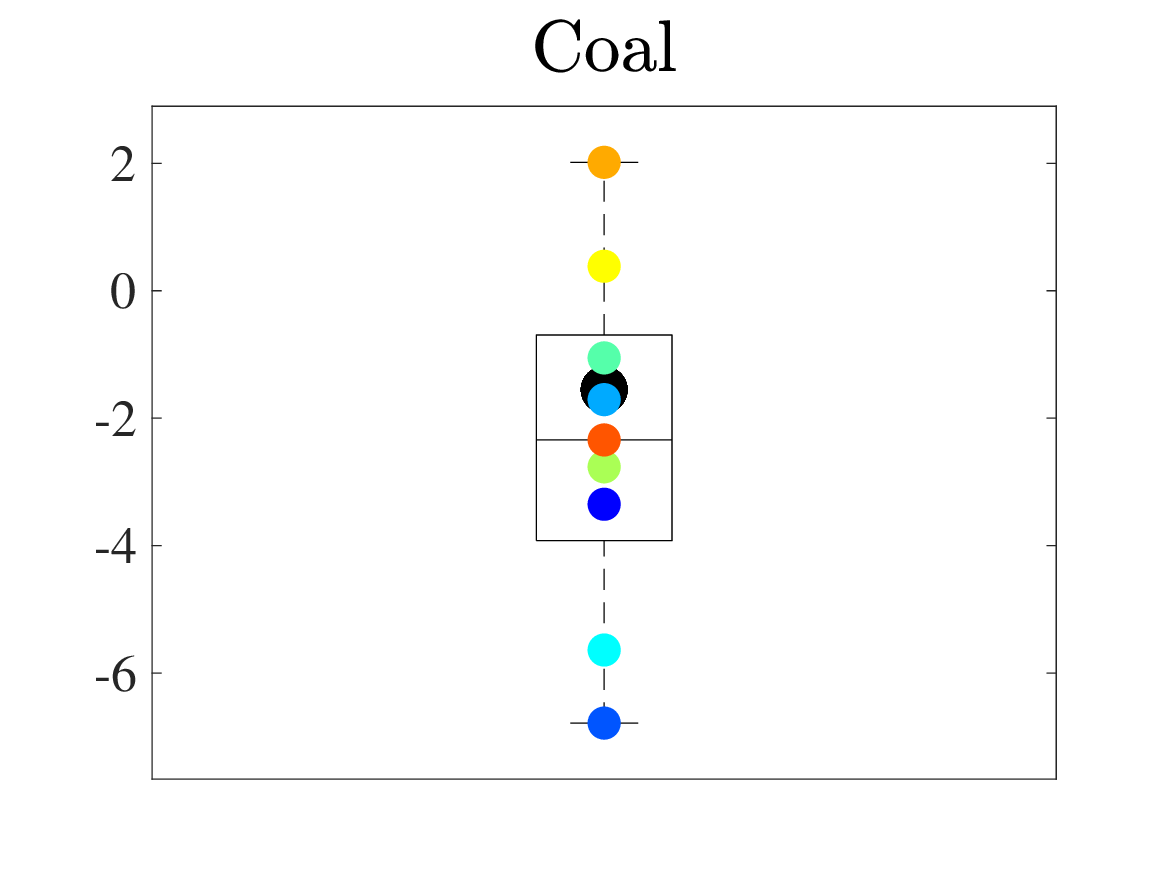} &
    \includegraphics[width=5.2cm]{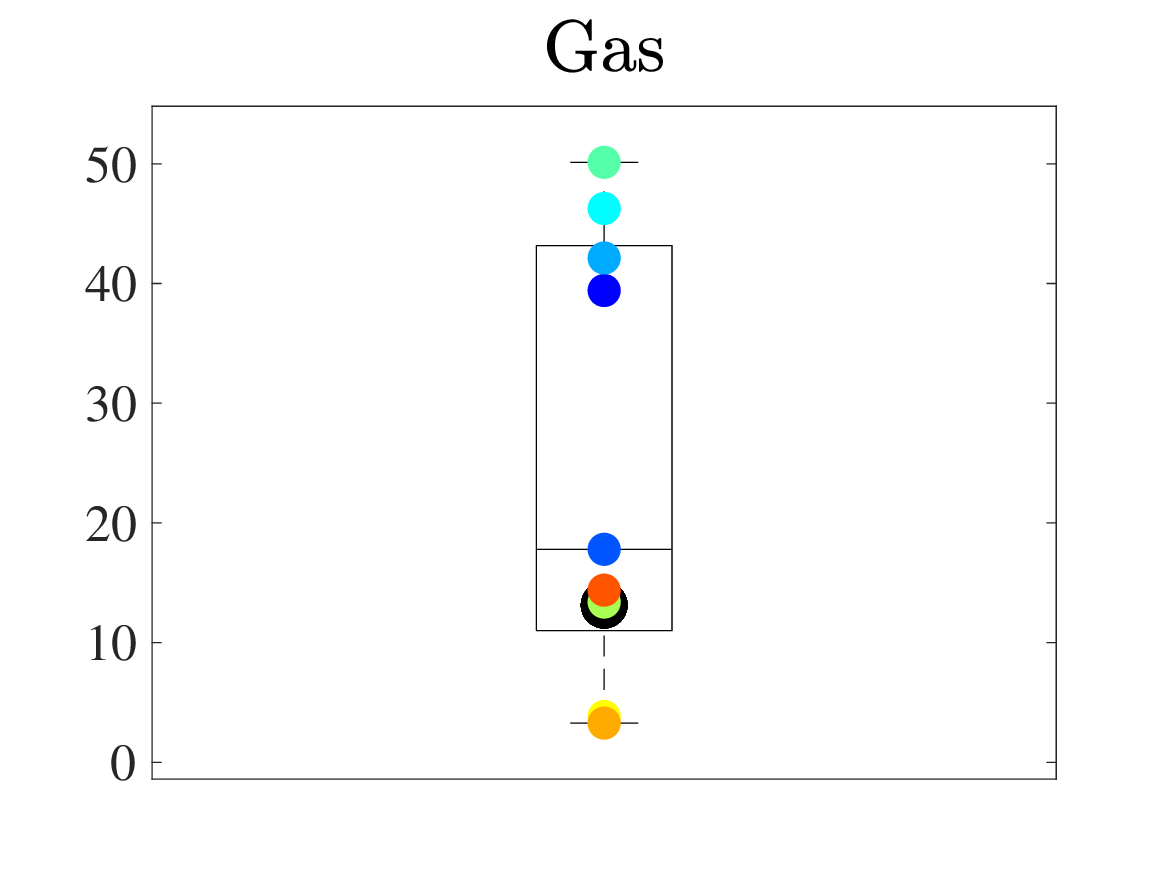} \\
    \end{tabular}
        \vspace{-1ex}
    {\scriptsize
    \legdot{black} Common effect \quad
    \legdot{blue1} Denmark \quad
    \legdot{blue2} Finland \quad
    \legdot{blue3} France \quad
    \legdot{acquamarina} Germany \\
    \legdot{limegreen} Italy \quad
    \legdot{lime} Norway \quad
    \legdot{yellow} Portugal \quad
    \legdot{orange} Spain \quad
    \legdot{red} Sweden
    }
    \caption{Boxplots for the estimates of the common and country-specific effects -- sum of the common and country-specific random effects -- by covariate -- demand, solar, wind,  Co$_2$, coal, and gas. The size of the effect is displayed on the vertical axis. The common effect, for each covariate, is represented by a black point, whereas each country is given a specific color.}
    \label{fig:country-random-effects-boxplot}
\end{figure}

Despite the complexity of the model, the empirical findings carry important implications for policymakers and energy market participant across Europe. 
The estimated country effects for renewable energy sources (RES) -- wind and solar generation -- are predominantly negative, implying a widespread price-reducing effect on electricity, except in Finland, where increased solar generation has been associated with almost no impact on electricity prices. 
In detail, countries such as Germany, Denmark, and Italy exhibit particularly strong negative effects from solar expansion (top center panel), which aligns with their substantial investments in photovoltaic capacity.
In terms of wind power generation (top right panel), Germany and Denmark again experience pronounced price reductions, likely reflecting their geographical exposure to the North Sea and significant power capacity.

Moreover, the distribution of country effects for both solar and wind generation is negatively skewed. 
This indicates that while nearly all countries have benefited from increased RES generation, a subset have seen disproportionately large gains - likely due to more aggressive investment strategies aimed at reducing dependence on fossil fuels.
These findings highlight the crucial role of RES in reducing electricity costs across most of Europe. 
While natural resources endowments (e.g. wind and solar potential) differ across regions, all nine countries in our study pursued energy transition policies, reinforcing the EU’s commitment to decarbonization and green energy as stated in Draghi's report.

From the perspective of market integration, our findings suggest that RES expansion supports market convergence.
Lower electricity prices in RES-intensive countries may encourage cross-border trade and more efficient resource allocation, enhancing integration. But given the intermittent nature of the RES and the limited storage of electricity, fossil fuel dependence is still necessary for energy security. Therefore, given the recent strong correlation between gas and electricity prices, we also examine how rising gas prices have affected electricity prices across European countries, beyond the frequently cited cases of Germany and Italy.

Analysing the country-specific effects of gas prices (bottom right panel), we find strong support that surging gas prices have significantly driven up electricity prices. 
We quantify this impact on a country basis and distinguish three distinct groups of countries.
The first group -- Italy, Germany, France, and Denmark -- exhibits the largest increases and lies in the right tail of the country effects distribution. 
As stated in \cite{ravazzolo2023price} and illustrated in Figure~\ref{fig:country-electricity-gas}, gas and electricity prices comove strongly in Central Europe, particularly in Germany and Italy, both major importers of Russian gas. 
The second group -- Finland, Sweden, and Norway -- shows moderate sensitivity, with effects clustering around the average common effect. 
In contrast, the third group -- Portugal and Spain -- shows minimal impact from gas price fluctuations. 
In this regard, the skewness of the distribution reveals that, except for Portugal and Spain, all European countries have experienced substantial price increases. 
This highlights a further implication of European integration based on a marginal pricing system: since the last power unit in Europe is typically a gas-fired power plant, concentrated dependence on a single energy source (e.g. Russian gas) can generate contagion effects, spreading shocks to countries not directly reliant on that source. Even nations with diversified or regionally distinct energy mixed feel the indirect effects of market interdependence. 

Figure \ref{fig:country-random-effects-boxplot} further illustrates that electricity prices are positively associated with demand (top left panel), with relatively smaller effects observed in Portugal and Spain. 
The distribution of the country-specific effects exhibits pronounced right skewness, indicating large idiosyncratic effects at the country level while suggesting that the common effect across countries remains small. 
In detail, Italy, Germany, France, and Denmark show the largest increases in electricity prices associated with rising energy demand.
These are also the countries most affected by recent surges in gas prices,. This pattern suggests that these countries may have responded to rising demand by relying more heavily on natural gas, thereby increasing their exposure to gas price volatility -- especially following the Russia's invasion of Ukraine.

\begin{figure}[t!]
\centering
\setlength{\tabcolsep}{0.001pt}
\begin{tabular}{lll}     
    \includegraphics[width=5.2cm]{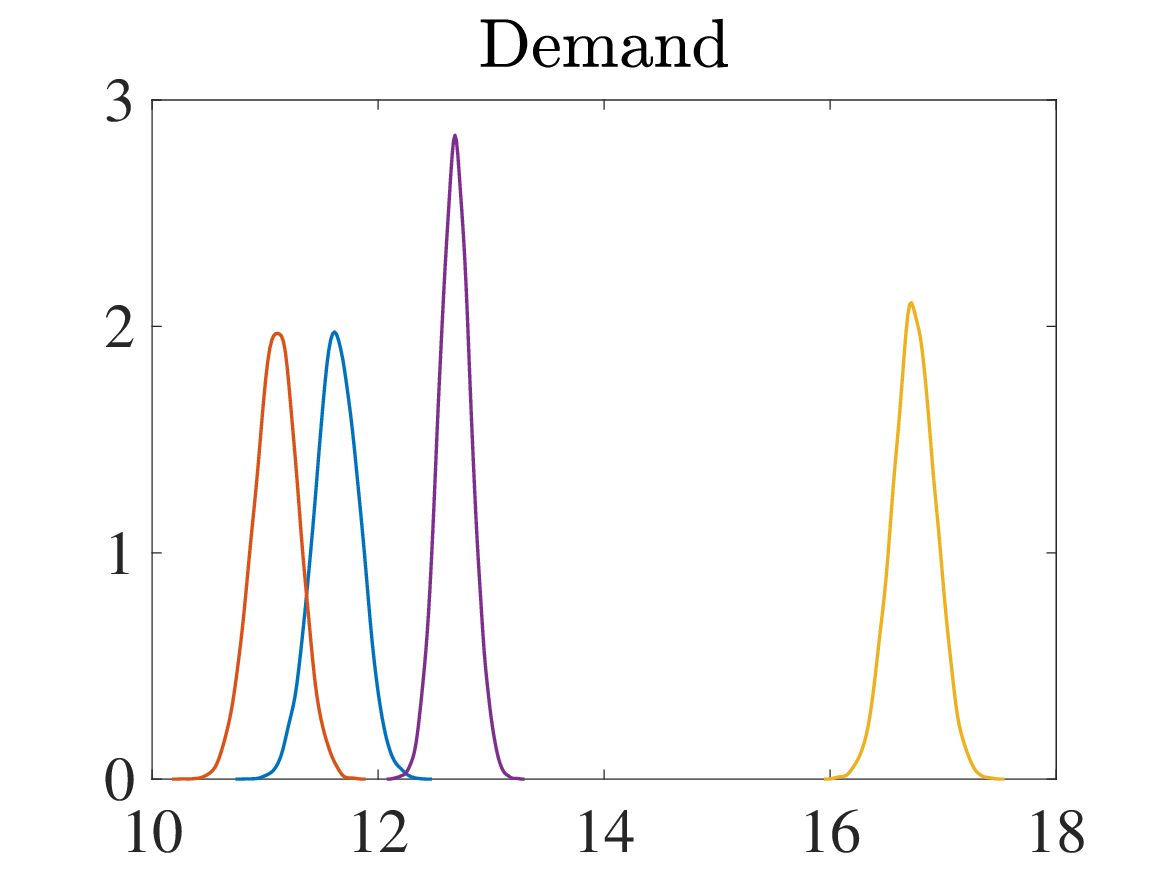} & 
    \includegraphics[width=5.2cm]{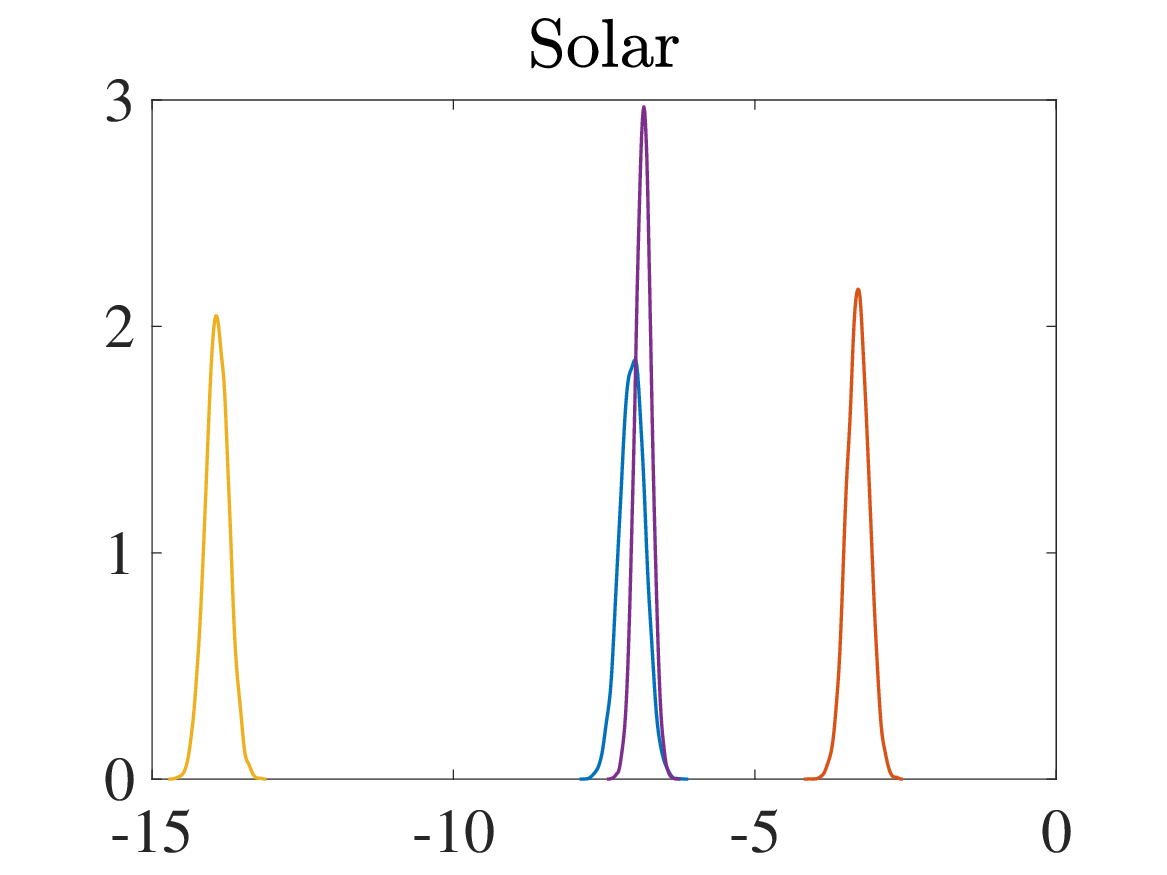} &
    \includegraphics[width=5.2cm]{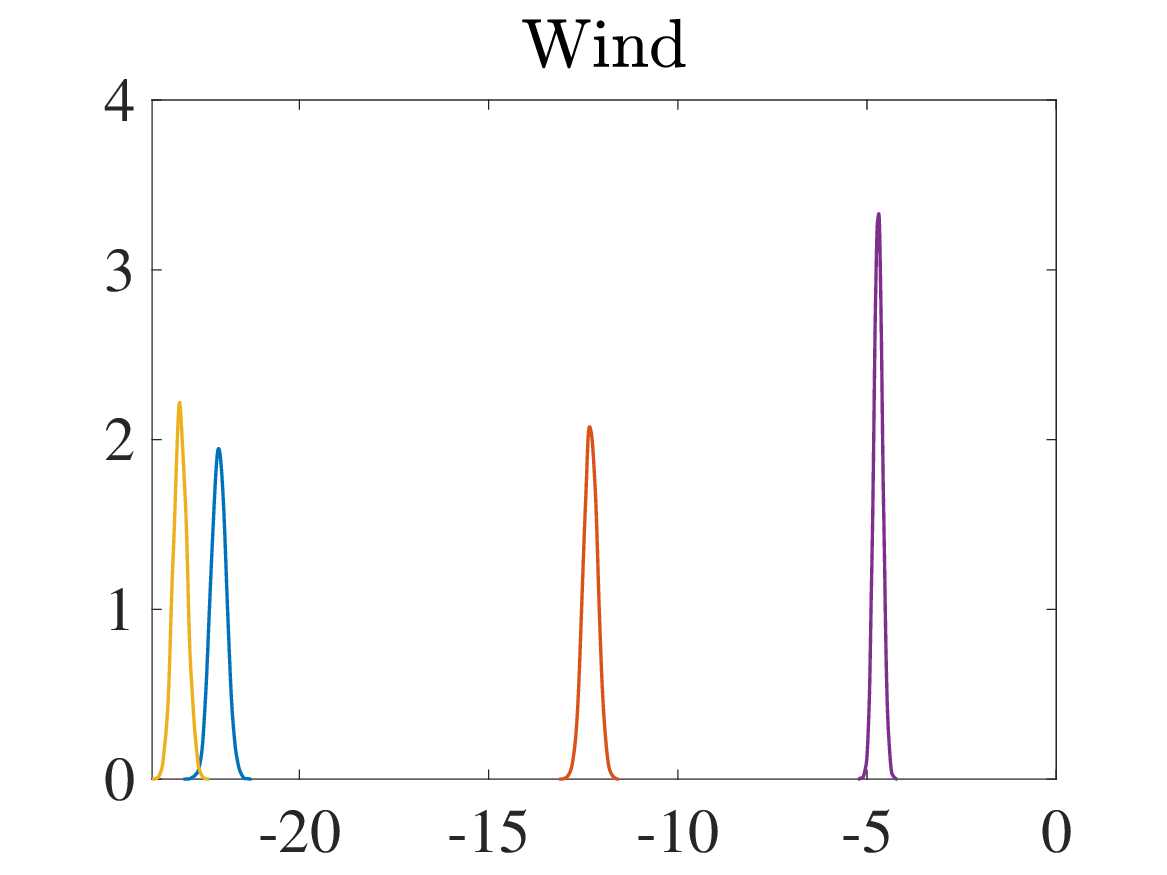} \\
    \includegraphics[width=5.2cm]{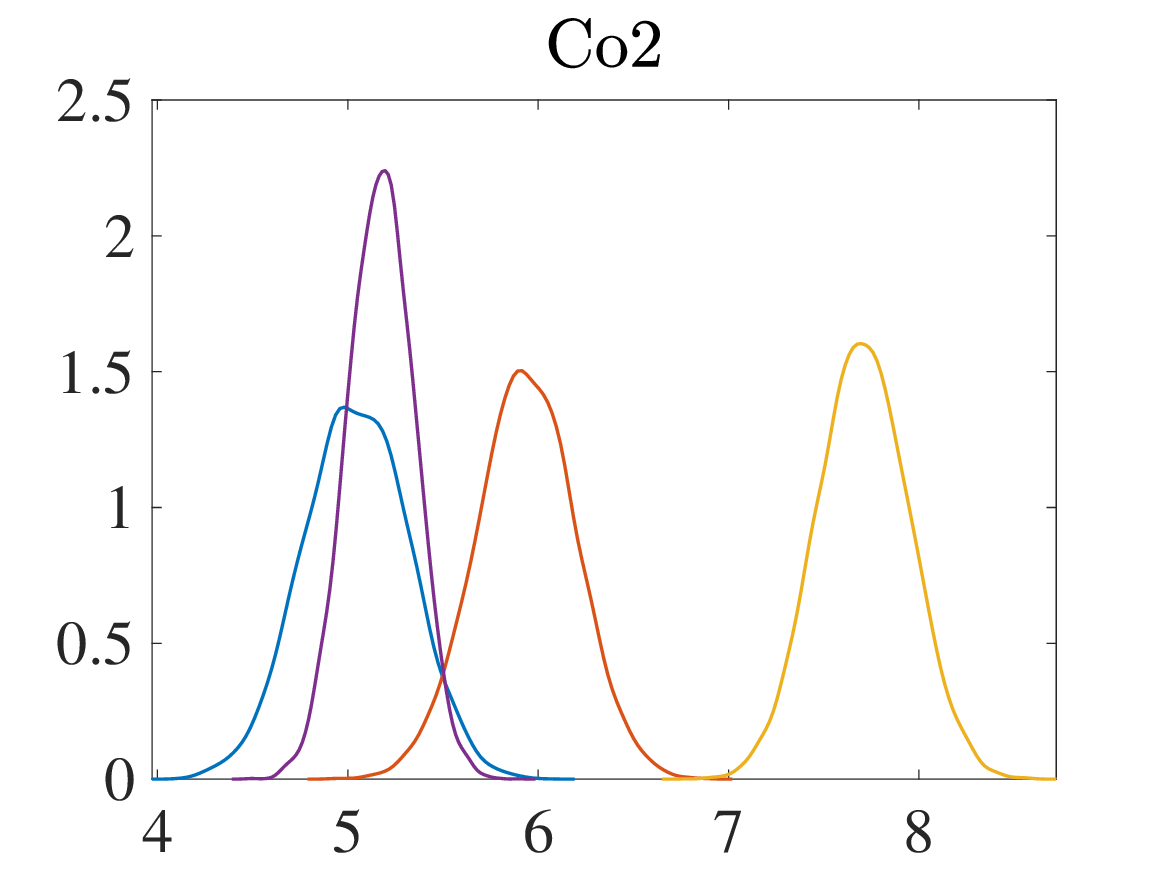} &
    \includegraphics[width=5.2cm]{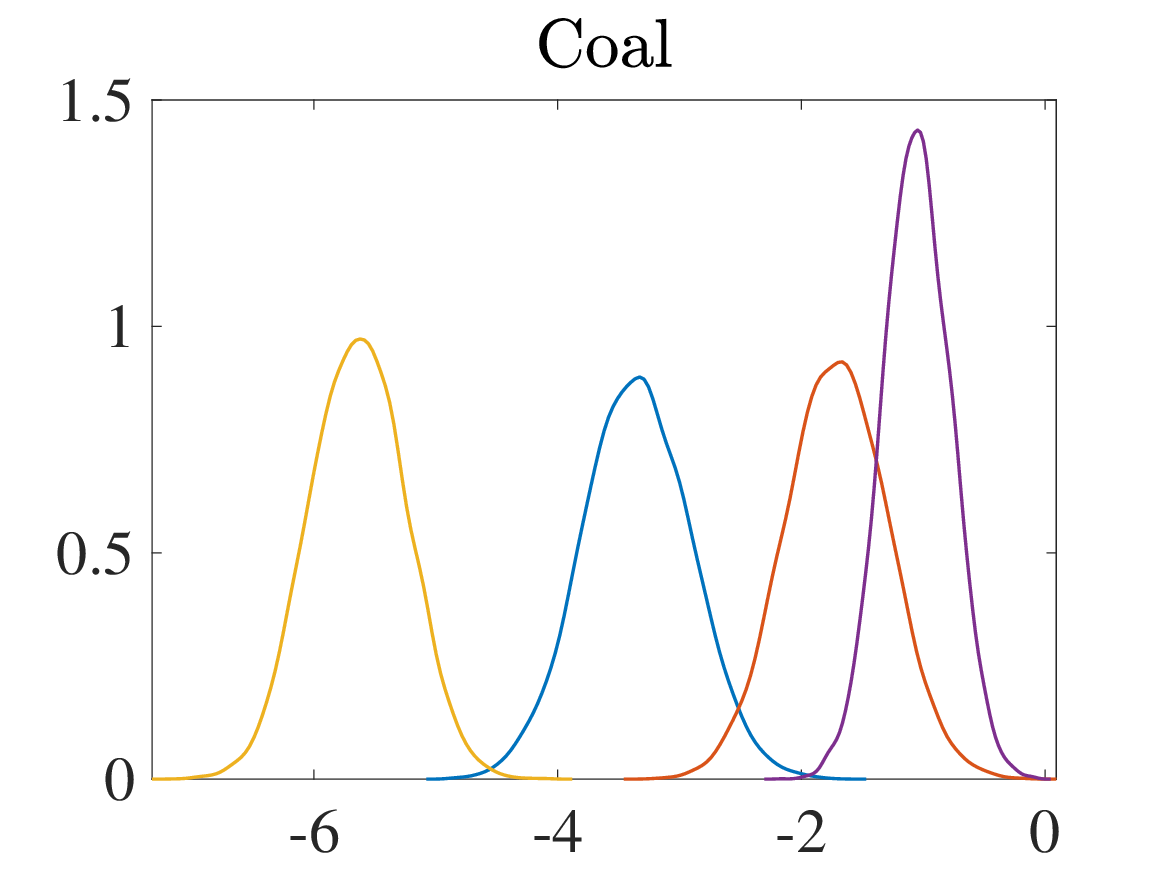} &
    \includegraphics[width=5.2cm]{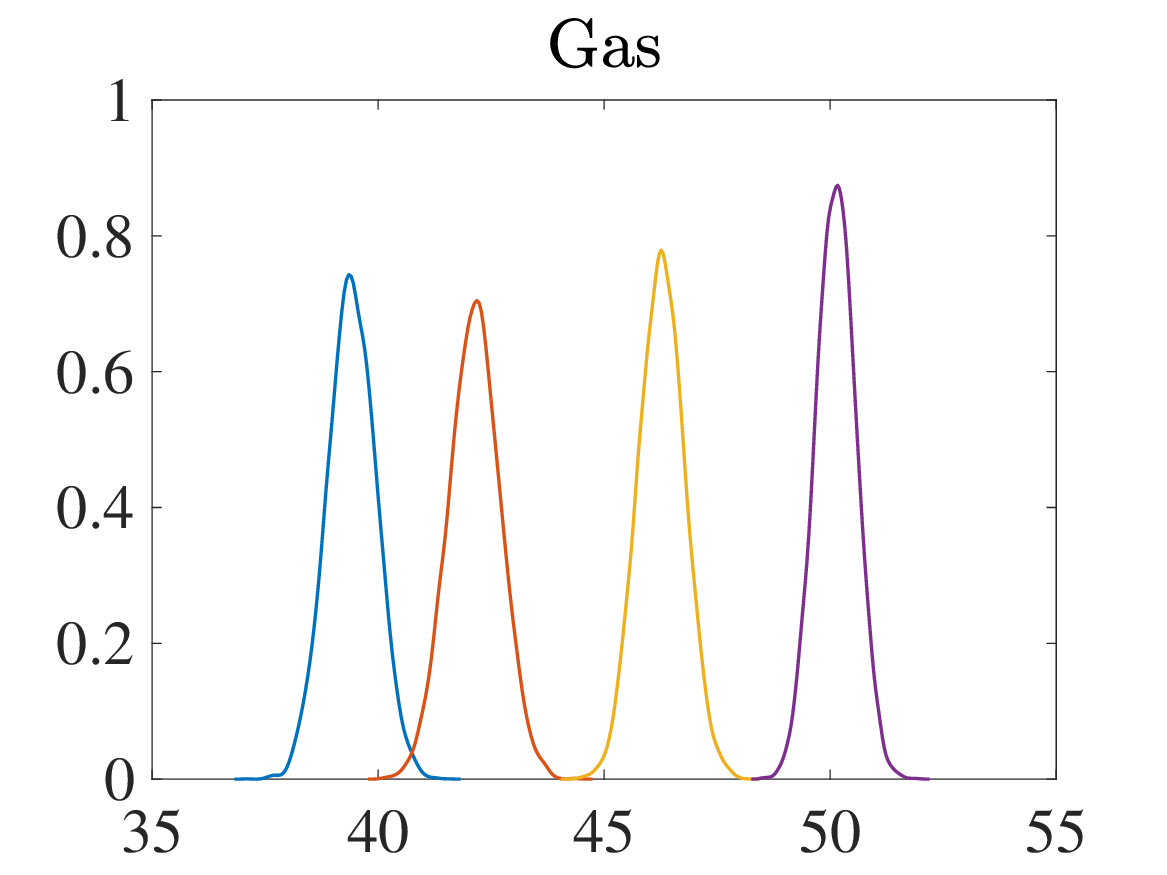} 
    \end{tabular}
    \caption{Full conditional posteriors for the country-effects -- sum of the common and country-specific random effects -- by covariate -- demand, solar, wind,  Co$_2$, coal, and gas -- for Denmark (blue), France (red), Germany (yellow), and Italy (violet). The size of the effect and the density are displayed on the horizontal and vertical axis, respectively.}
    \label{fig:posterior-DE-FE-GE-IT}
\end{figure}

Turning to the fossil fuel prices, we find evidence that the effect of coal (bottom center panel) is generally small and, for most countries, negative, except in Portugal and Spain, where coal price increases appear to be positively associated with electricity prices.
This limited impact is likely due to the relatively small share of coal in the contemporary European energy mix. 
Interestingly, in Germany and Finland, coal prices are negatively associated with electricity prices indicating a possible stabilizing role in these market. Finally, the price of EU ETS allowances for Co$_2$ (bottom left panel) is positively associated with electricity prices, reflecting the broader increase in the cost of emission permits over time.



\begin{figure}[t!]
\centering
\setlength{\tabcolsep}{0.001pt}
\begin{tabular}{lll}
    \includegraphics[width=5.2cm]{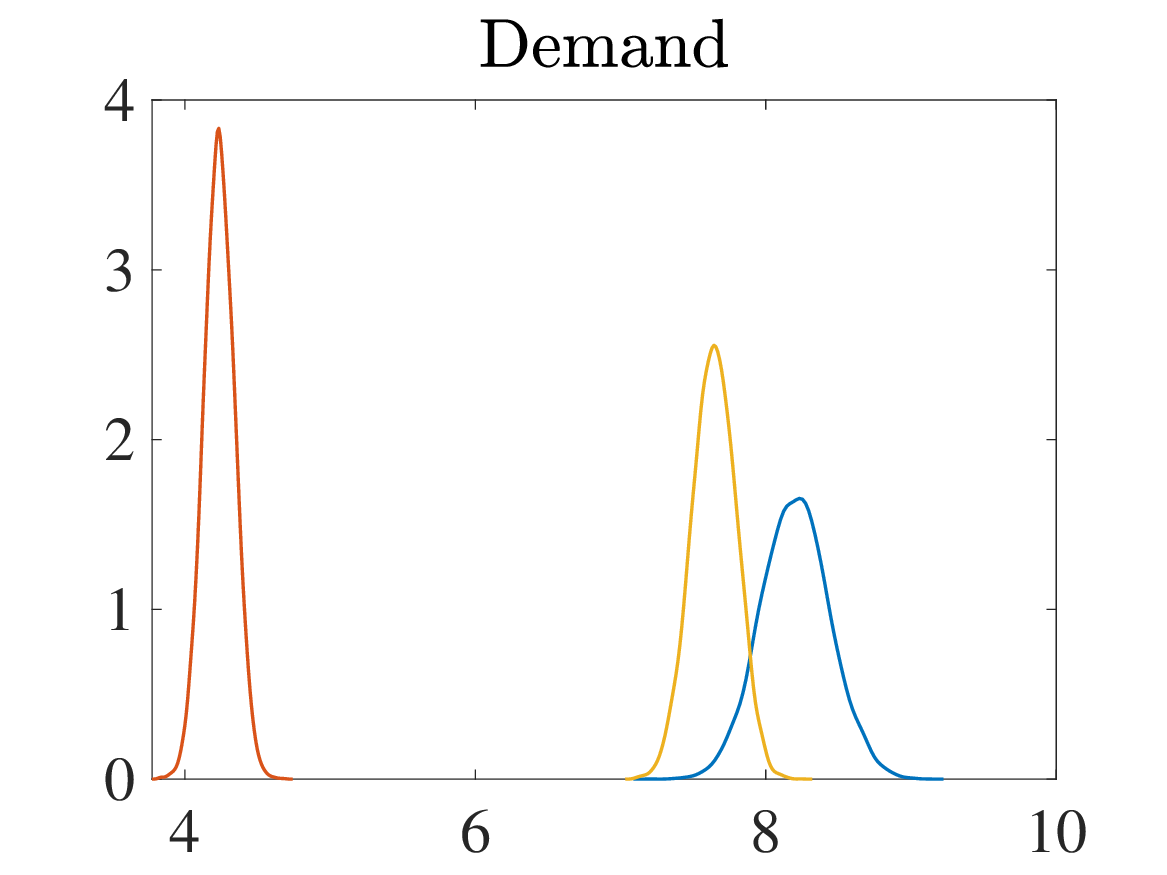} & 
    \includegraphics[width=5.2cm]{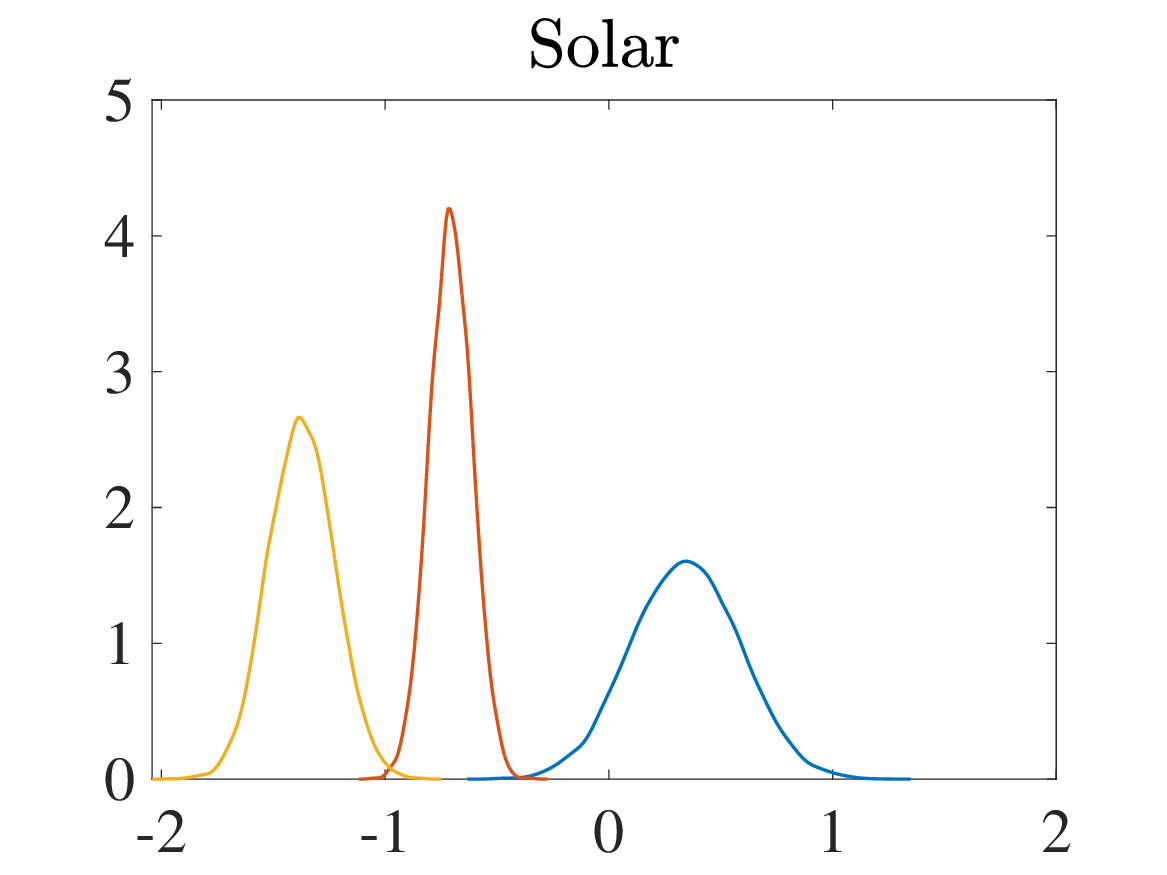} &
    \includegraphics[width=5.2cm]{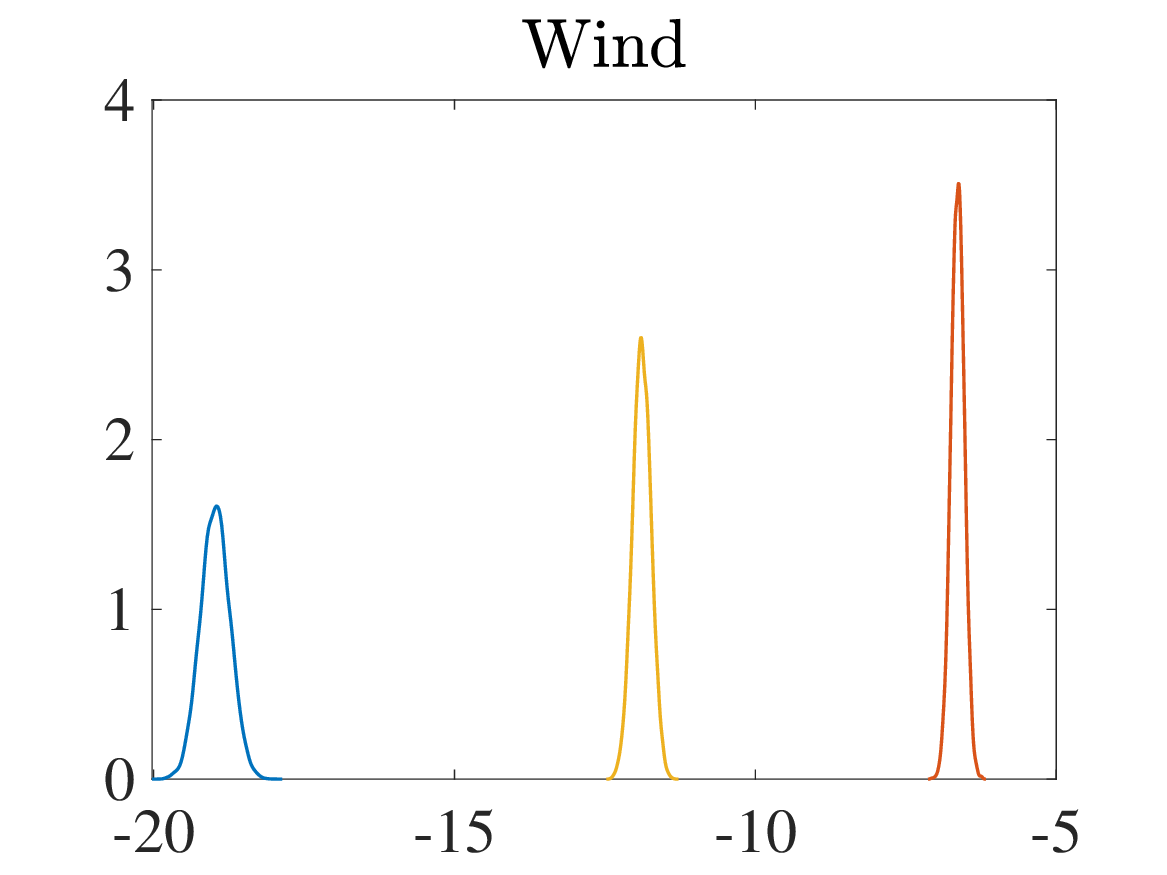} \\
    \includegraphics[width=5.2cm]{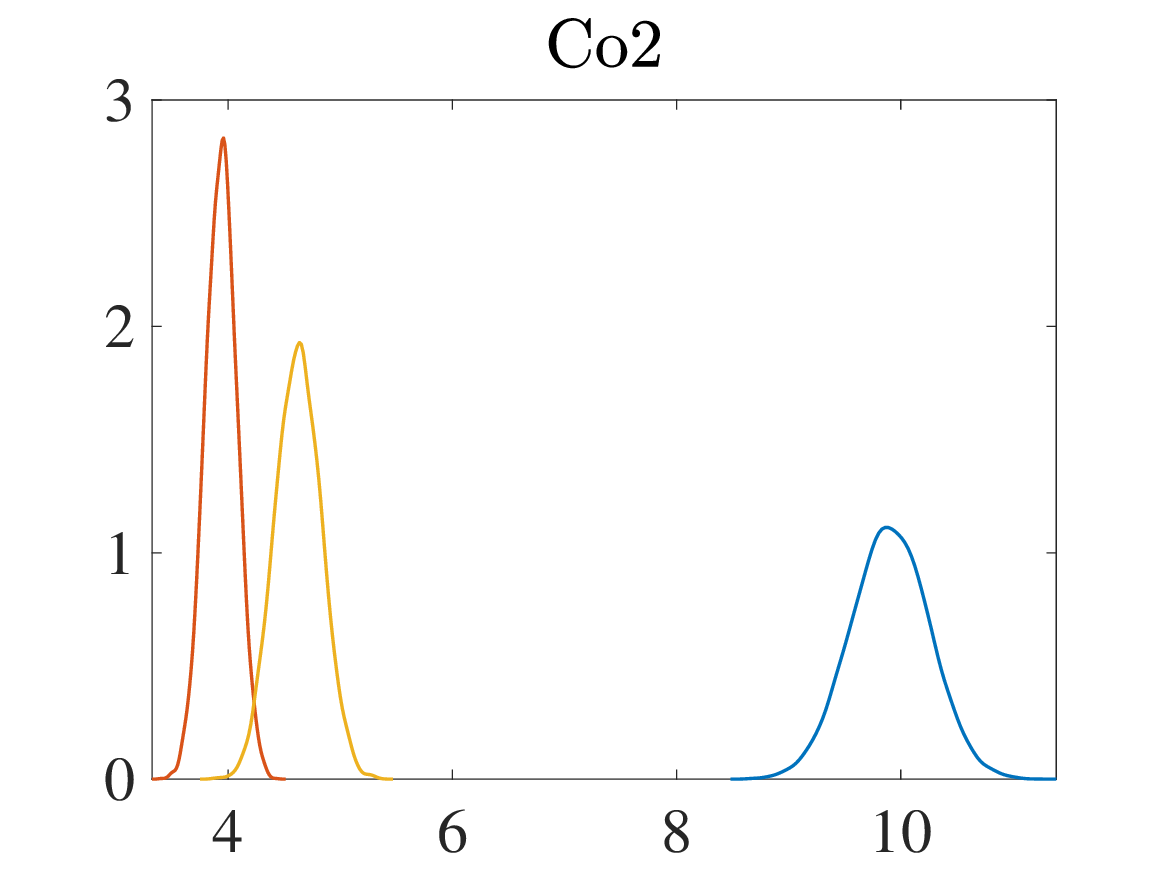} &
    \includegraphics[width=5.2cm]{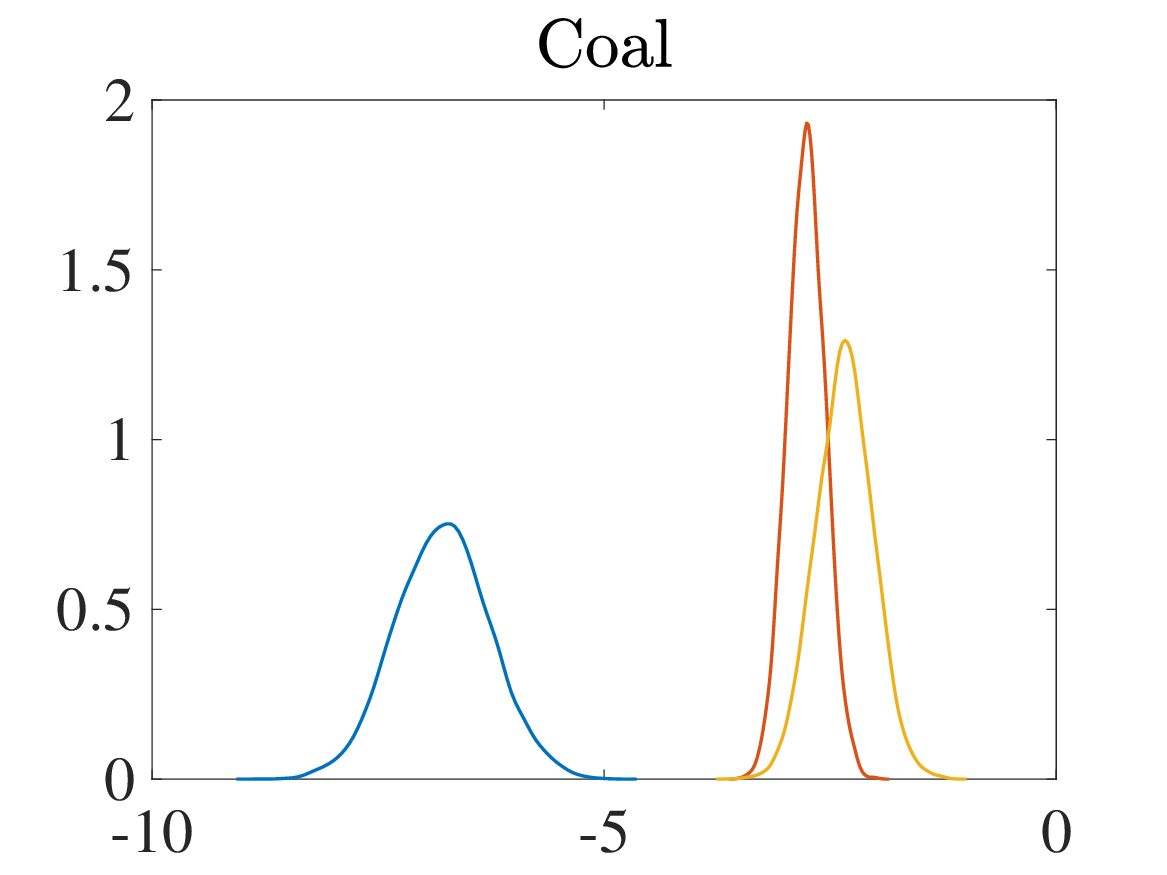} &
    \includegraphics[width=5.2cm]{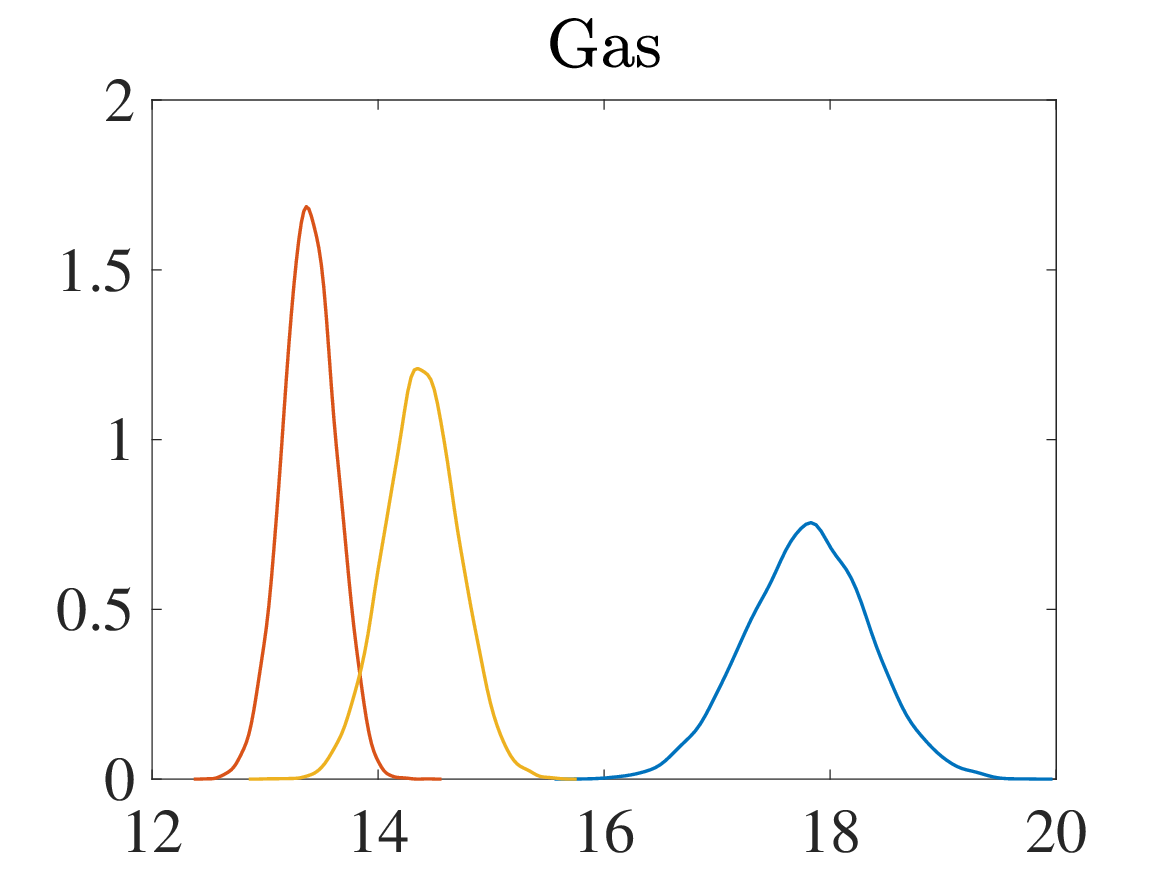} 
    \end{tabular}
 \caption{Full conditional posteriors for the country-effects -- sum of the common and country-specific random effects -- by covariate -- demand, solar, wind,  Co$_2$, coal, and gas -- for Finland (blue), Norway (red), Sweden (yellow). The size of the effect and the density are displayed on the horizontal and vertical axis, respectively.} 
\label{fig:posterior-FI-NO-SW} 
\end{figure}

The analysis of full conditional posterior distributions across the nine European countries provides valuable insights into the uncertainty surrounding the estimated effects and the extent to which countries share similar posterior distributions. Figures \ref{fig:posterior-DE-FE-GE-IT}, \ref{fig:posterior-FI-NO-SW}, and \ref{fig:posterior-PT-SP} display the full conditional posteriors of the country-specific effects, $\beta_{j}+\zeta_{\beta,gj}$, for the three groups of countries identified in the previous discussion.

Figure \ref{fig:posterior-DE-FE-GE-IT} presents results for the first group of countries -- Denmark (blue), France (red), Germany (yellow), and Italy (violet). 
These countries are grouped together not only due to their geographical proximity -- forming a ``continental'' cluster -- but also due to their similar responses in electricity prices following Russia's invasion of Ukraine. 
Notably, the posterior distributions for the forecasted demand and RES are relatively tight, indicating low uncertainty, compared to those for fossil fuel prices.
For the forecasted demand (top left panel), Germany has the largest positive effect suggesting that electricity prices are more sensitive to demand fluctuations than in the other three countries, which show relatively similar effects.

Regarding the RES, forecasted solar generation (top center panel) has a negative effect across all countries, though the magnitude varies.
Germany shows the strongest negative effect, consistent with the findings in the literature, while electricity prices in France appear to be only weakly influenced by solar generation. 
The top right panel, which displays the effect of the forecasted wind generation, reveals similar patterns between Germany and Denmark, while France and Italy show more moderate negative effects.

In contrast, the effects of fossil fuel prices, particularly natural gas, vary markedly across countries. 
The magnitude of the gas price effect is broadly comparable across the four countries, with the smallest impact in Denmark and France and the largest in Germany and Italy.
This pattern reflects the high dependence of Germany and Italy on Russian gas before the invasion of Ukraine, whereas France faced significant outages in its nuclear power plants, becoming a net energy importer during this period (\citealp{gaulier2023energy}). 



For the second group of countries, Finland (blue), Norway (red), and Sweden (yellow), Figure \ref{fig:posterior-FI-NO-SW} provides the posterior distributions. 
A striking feature is the high level of uncertainty in Finland’s estimates, particularly for Co$_2$ and coal.
Finland’s posterior distributions often diverge from those of Norway and Sweden, which display tighter and more aligned distributions. 
This divergence is consistent with Finland’s reliance on Russian gas imports and its need to rapidly diversify its energy mix following the onset of the crisis \citep{vaden2023energy}. 
Norway, by contrast, shows smaller uncertainty and, interestingly, the impact of demand on electricity prices appears to be weaker there than in the other two countries.

\begin{figure}[t!]
\centering
\setlength{\tabcolsep}{0.001pt}
\begin{tabular}{lll}     
    \includegraphics[width=5.2cm]{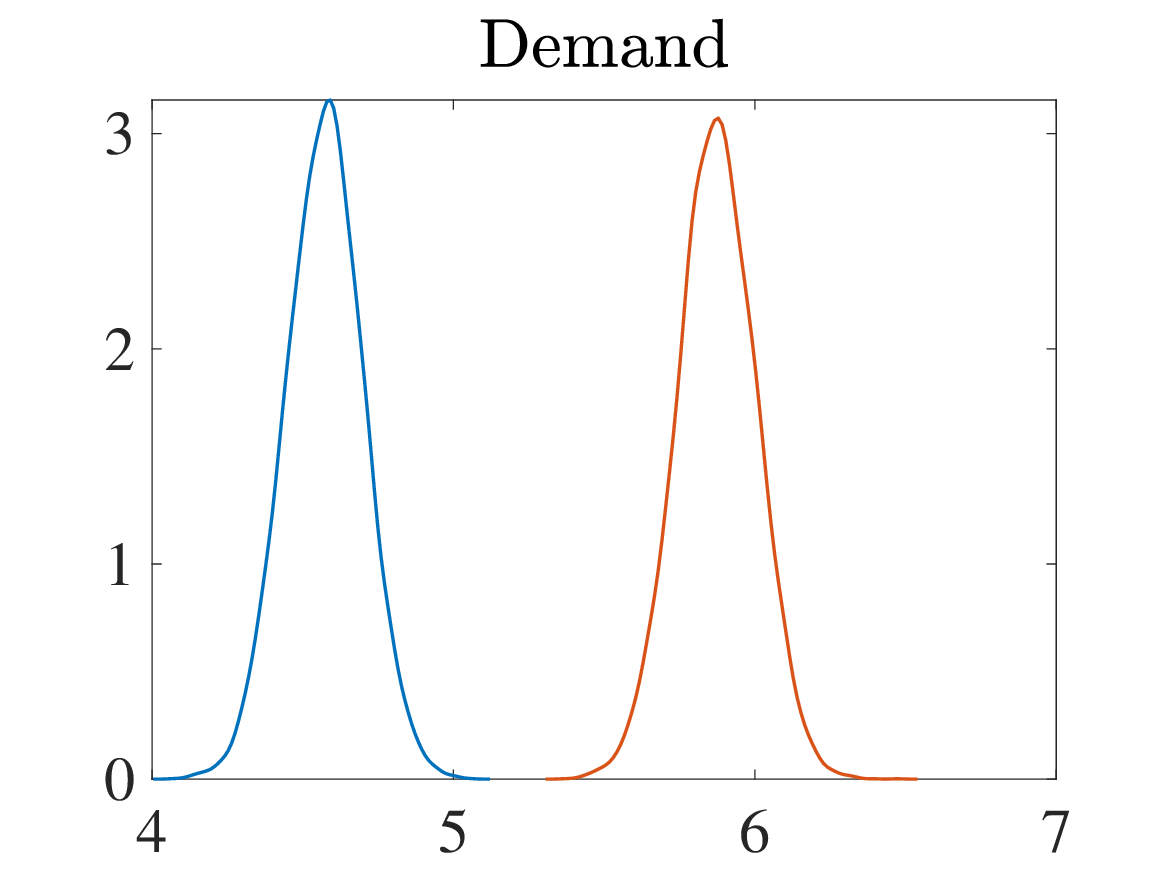} & 
    \includegraphics[width=5.2cm]{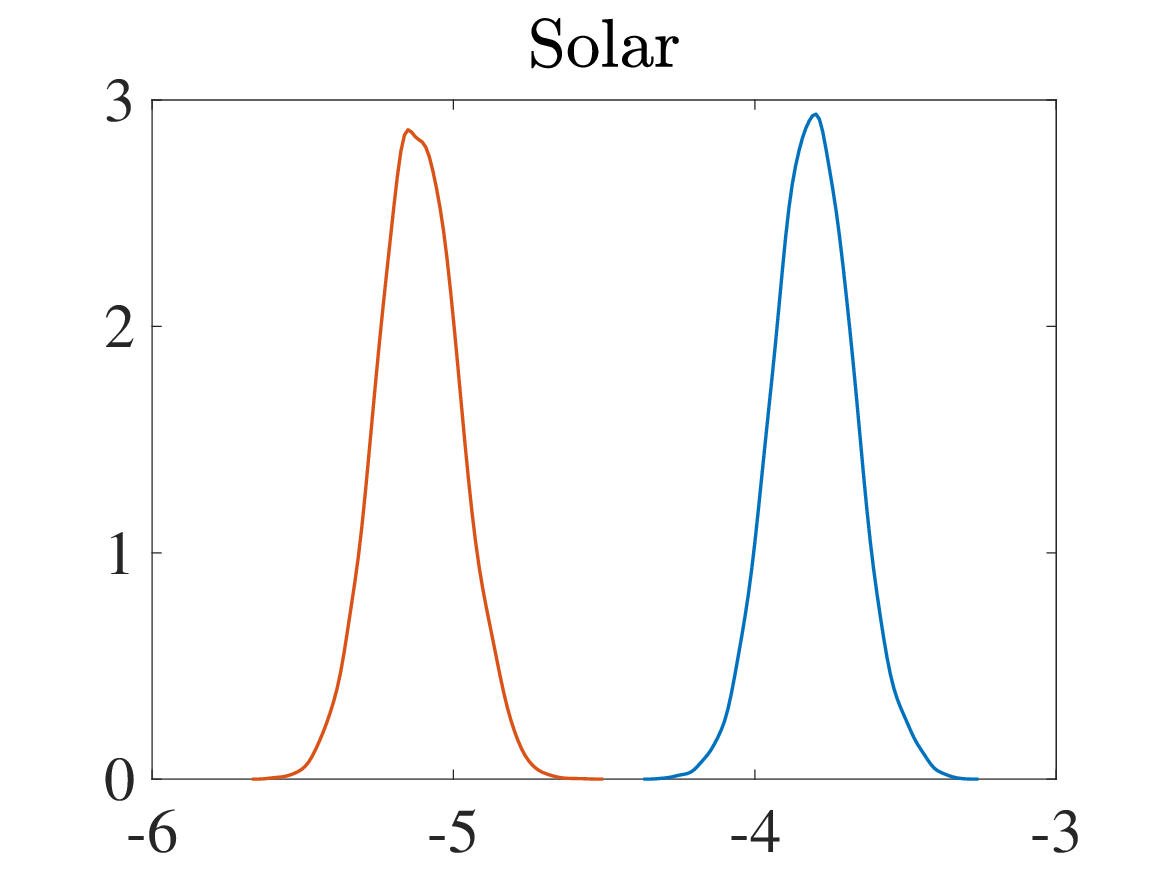} &
    \includegraphics[width=5.2cm]{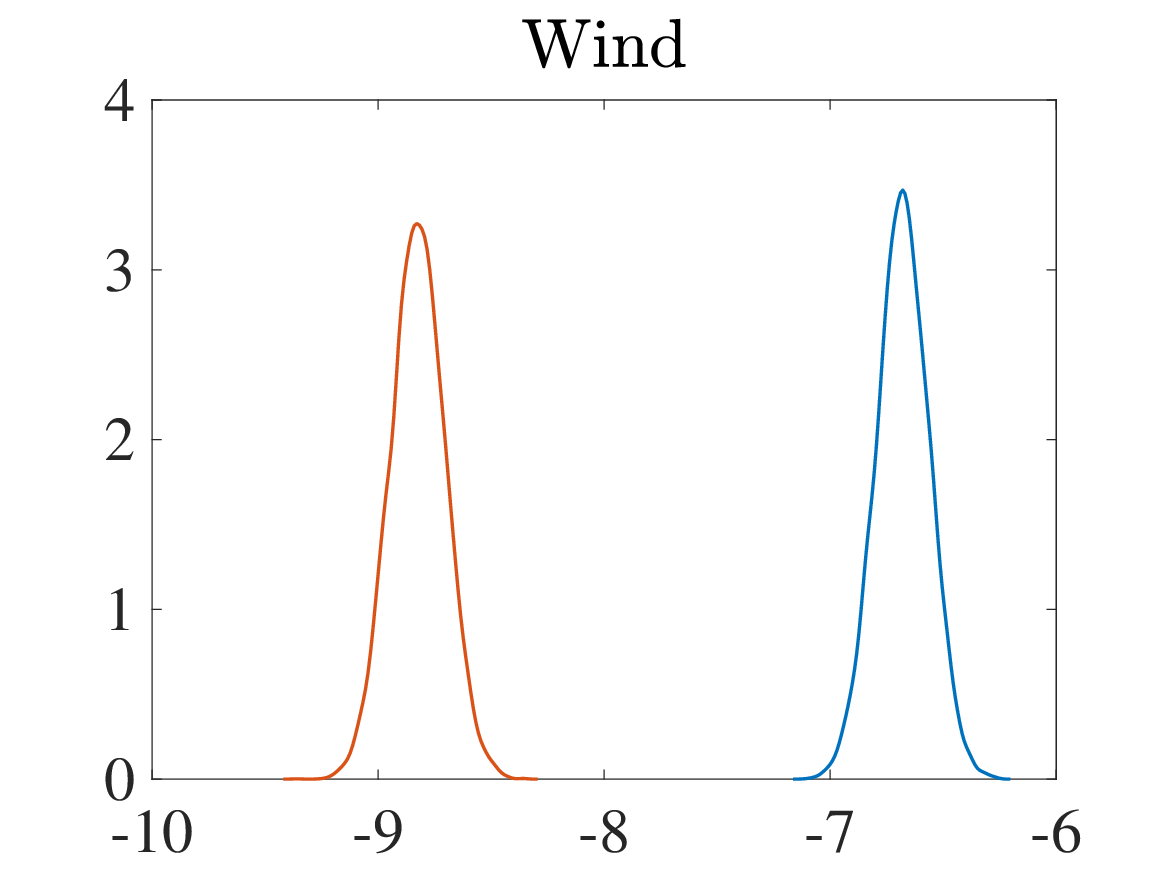} \\
    \includegraphics[width=5.2cm]{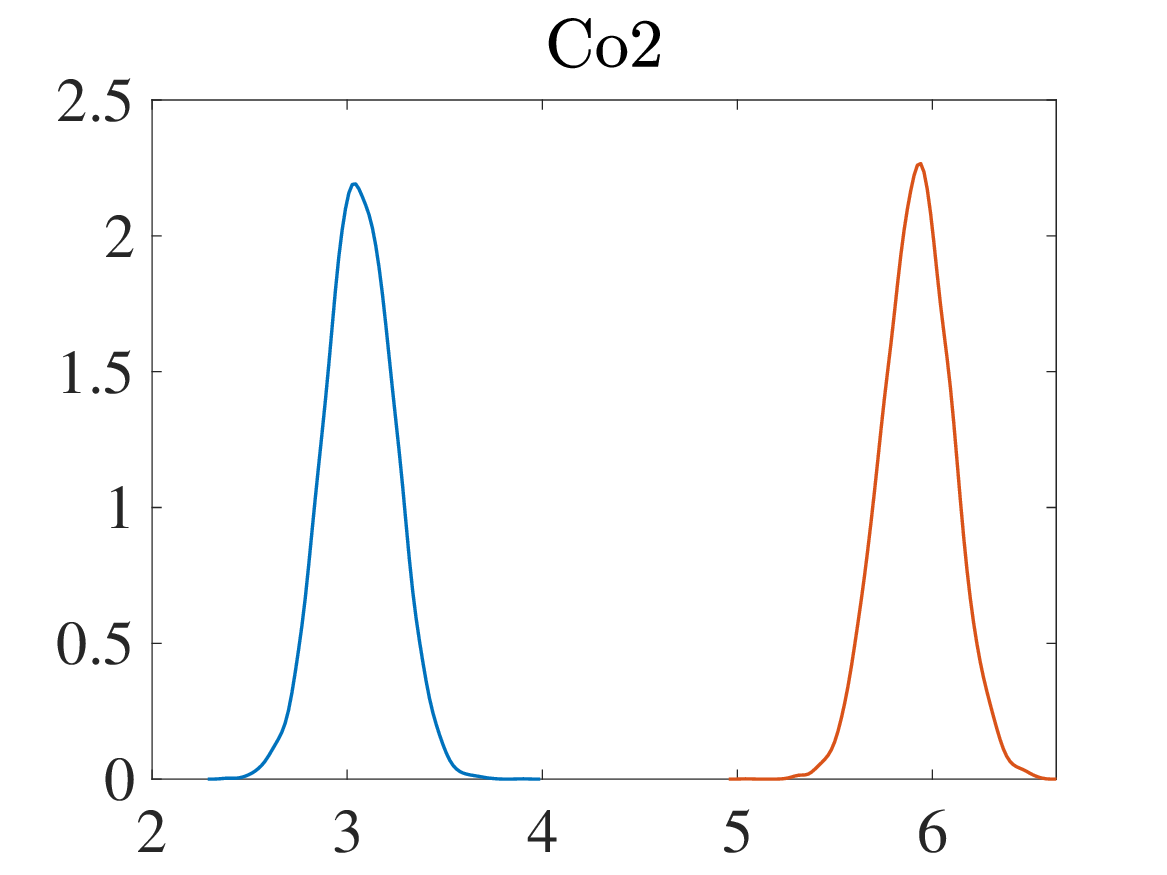} &
    \includegraphics[width=5.2cm]{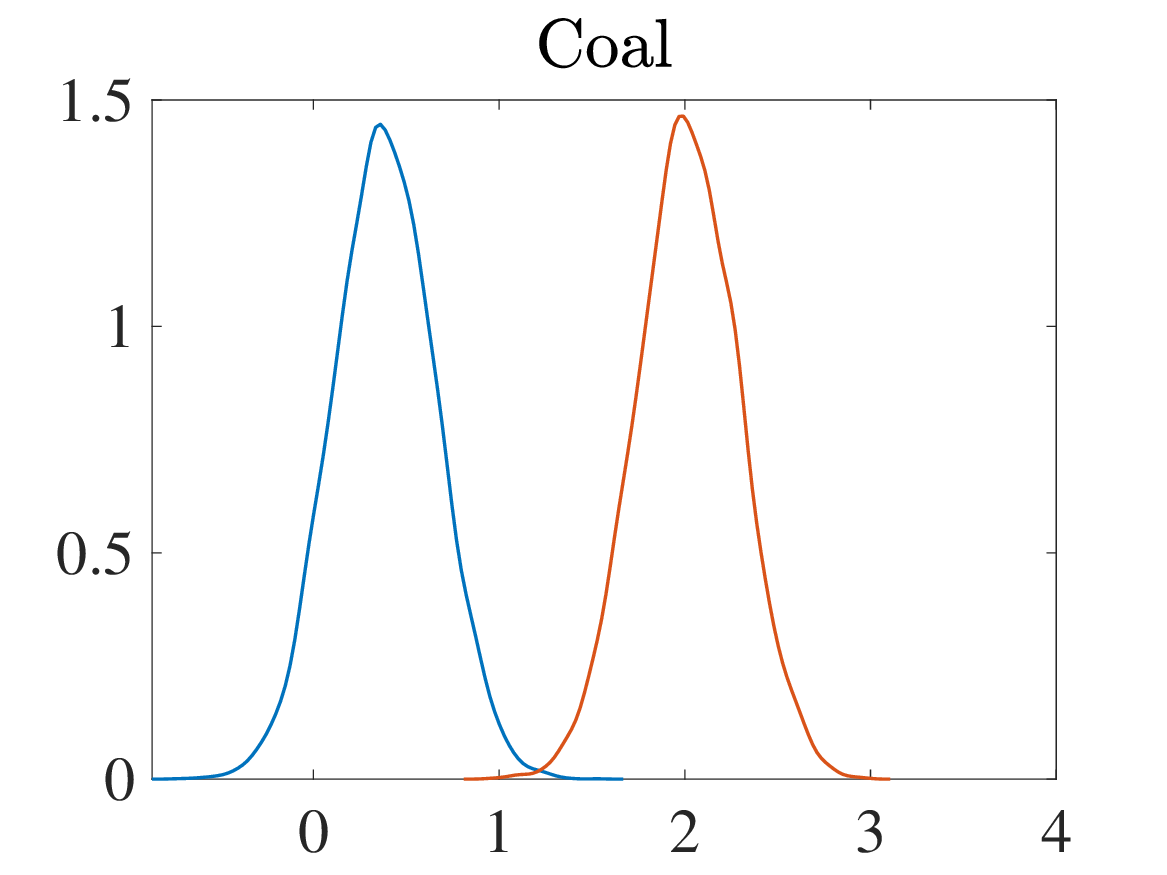} &
    \includegraphics[width=5.2cm]{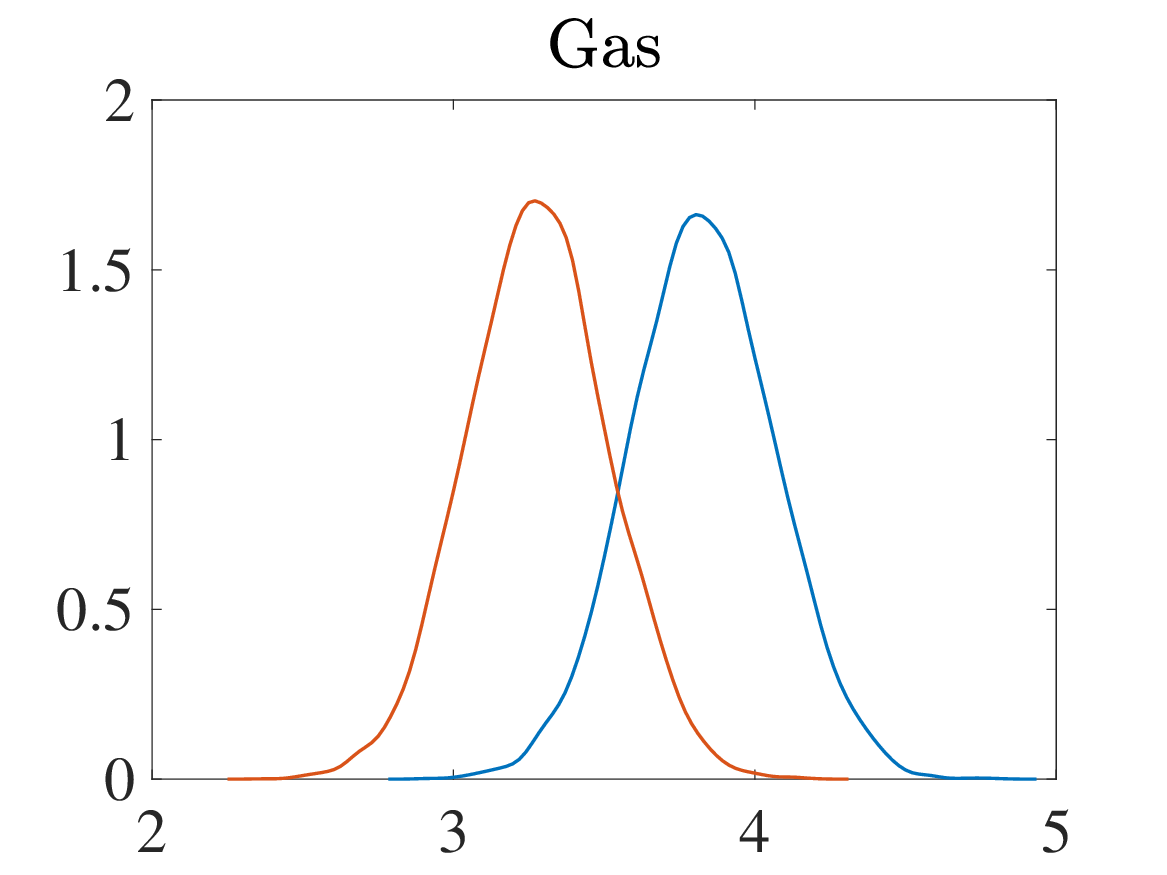} 
    \end{tabular}
\caption{Full conditional posteriors for the country-effects -- sum of the common and country-specific random effects -- by covariate -- demand, solar, wind,  Co$_2$, coal, and gas -- for Portugal (blue) and Spain (red). The size of the effect and the density are displayed on the horizontal and vertical axis, respectively.} 
\label{fig:posterior-PT-SP} 
\end{figure}


Figure \ref{fig:posterior-PT-SP} shows results for the Iberian Peninsula -- Portugal (blue) and Spain (red). 
These countries are distinct in both geographical position and energy policy, particularly their limited exposure to Russian gas and the implementation of the ``Iberian exception'' in 2022, which allowed them to cap gas prices \citep{hidalgo2024iberian}.
As a result, the posterior distributions for Portugal and Spain are relatively tight -- comparable to those of Norway -- and reflect lower overall uncertainty. 
This suggests that policy interventions may have successfully mitigated the impact of rising gas prices and contributed to greater price stability.

\begin{figure}[h!] 
\centering 
\includegraphics[width=\textwidth]{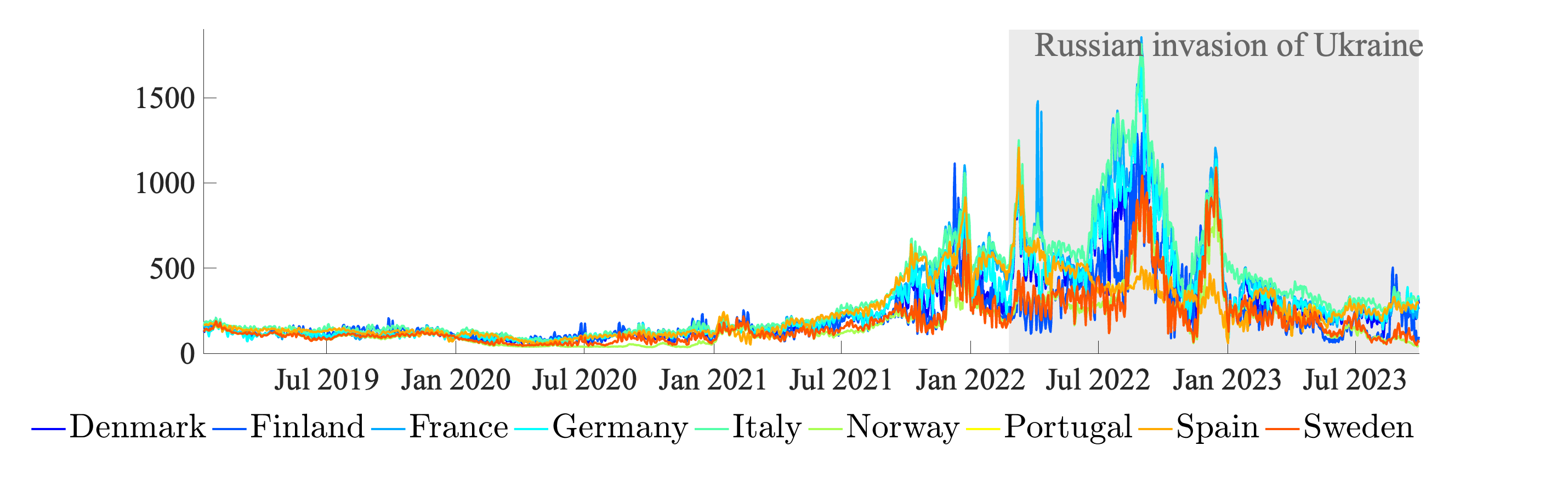} 
\caption{Estimated daily volatility for electricity prices across countries -- Denmark, Finland, France, Germany, Italy, Norway, Portugal, Spain, and Sweden -- from January 2019 to October 2023. The shaded area corresponds to the time from the beginning of the Russia's invasion of Ukraine.} 
\label{fig:joint-volatility} 
\end{figure}


As a further result, leveraging the Rao-Blackwellization presented in Proposition~\ref{sec:method:prop3}, we estimate the time-varying volatility for each country as:
\begin{align*}
    \widehat{\sigma}_{gh,t}^2 = \widehat{\sigma}_{gh}^{2}+ \widehat{\bz}_{g,t+h}^{\prime}(\widehat{R}+ \widehat{Q})\widehat{\bz}_{g,t+h}.
\end{align*}
This measure captures the conditional error variance, which increases substantially following Russia’s invasion of Ukraine, albeit heterogeneously across countries. 
Notably, Portugal and Spain (represented by the yellow and orange lines) exhibit the lowest volatility, consistent with their distinct market structures and limited exposure to Russian gas.

Two prominent peaks are observed in the volatility dynamics after the invasion: one during the summer of 2022 and another in the winter of 2022–2023. 
The first peak corresponds to a combination of severely reduced Russian gas flows to Europe and extreme weather conditions - droughts and heatwaves - which elevated electricity demand and prices. 
The second peak, though less intense, still reflects extreme price fluctuations. 
It is largely attributable to low renewable generation (wind and solar), high gas prices, and heightened uncertainty due to discussions around emergency governmental interventions, such as price caps and windfall taxes. 
Following these peaks, electricity price volatility begins to decline but remains above pre-crisis levels, suggesting ongoing market instability and a possible structural shift.

To sum up, we find considerable cross-country heterogeneity in both the magnitude of country-specific effects and the uncertainty of their posterior distributions. 
While electricity prices demonstrate strong co-movements across countries, our results indicate that national-level factors played a crucial role. 
For example, the ``Iberian exception'' introduced a price cap mechanism; Germany and Italy implemented broad subsidy schemes; and France, facing widespread outages in its nuclear fleet, was forced to meet domestic demand through increased electricity imports. 
These dynamics are effectively captured within the PRUMIDAS framework.

Therefore, while market integration has historically facilitated efficient price convergence and resilience under stable conditions, our findings reveal a critical vulnerability during periods of systemic shocks. 
The European energy crisis illustrates that integration can amplify and transmit external disruptions across interconnected markets, including those less directly exposed to the originating source of the shock. 
These results underscore the importance of building a more diversified and resilient energy infrastructure, alongside implementing targeted shielding mechanisms to mitigate the propagation of future crisis.

\section{Is market integration changing after the Russia's invasion of Ukraine?}
\label{sec:postwarresults}

In the previous section, we examined the effects of renewable energy sources and fossil fuel prices on electricity prices over the full sample period. However, as shown in Figure~\ref{fig:country-electricity-gas}, electricity and gas prices exhibit clear structural changes around the onset of the Russia-Ukraine war, suggesting a possible division into pre- and post-invasion periods.

To capture the specific dynamics of the crisis period, we re-estimate our model using data from January 2022 to October 2023, excluding the more stable phase. This allows us to assess whether the relations between electricity prices and forecasted renewable energy sources and gas prices intensifies under conditions of heightened market stress and whether the initial calm period dampens these effects in the aggregate analysis in Section~\ref{sec:analysis-application}.
We focus on the estimated posterior distributions of country-specific effects associated with forecasted solar and wind generation, as well as natural gas prices. Figure~\ref{fig:posterior-DE-FE-GE-IT_postwar}, \ref{fig:posterior-FI-NO-SW_postwar} and \ref{fig:posterior-PT-SP_postwar} present the results for the country grouped as in Section~\ref{sec:analysis-application}, with top panels corresponding to the full sample period and the bottom ones to the post-invasion\footnote{The results for the other fossil fuel prices and forecasted demand are available in the Supplement.}.

\begin{figure}[h!]
\centering
\setlength{\tabcolsep}{0.001pt}
\begin{tabular}{lll}
    \includegraphics[width=5.2cm]{270525_Figures/ksdensity_full/Solar-DE-FR-GE-IT-ksdensity.eps} &
    \includegraphics[width=5.2cm]{270525_Figures/ksdensity_full/Wind-DE-FR-GE-IT-ksdensity.eps} &
    \includegraphics[width=5.2cm]{270525_Figures/ksdensity_full/Gas-DE-FR-GE-IT-ksdensity.eps} \\
     \includegraphics[width=5.2cm]{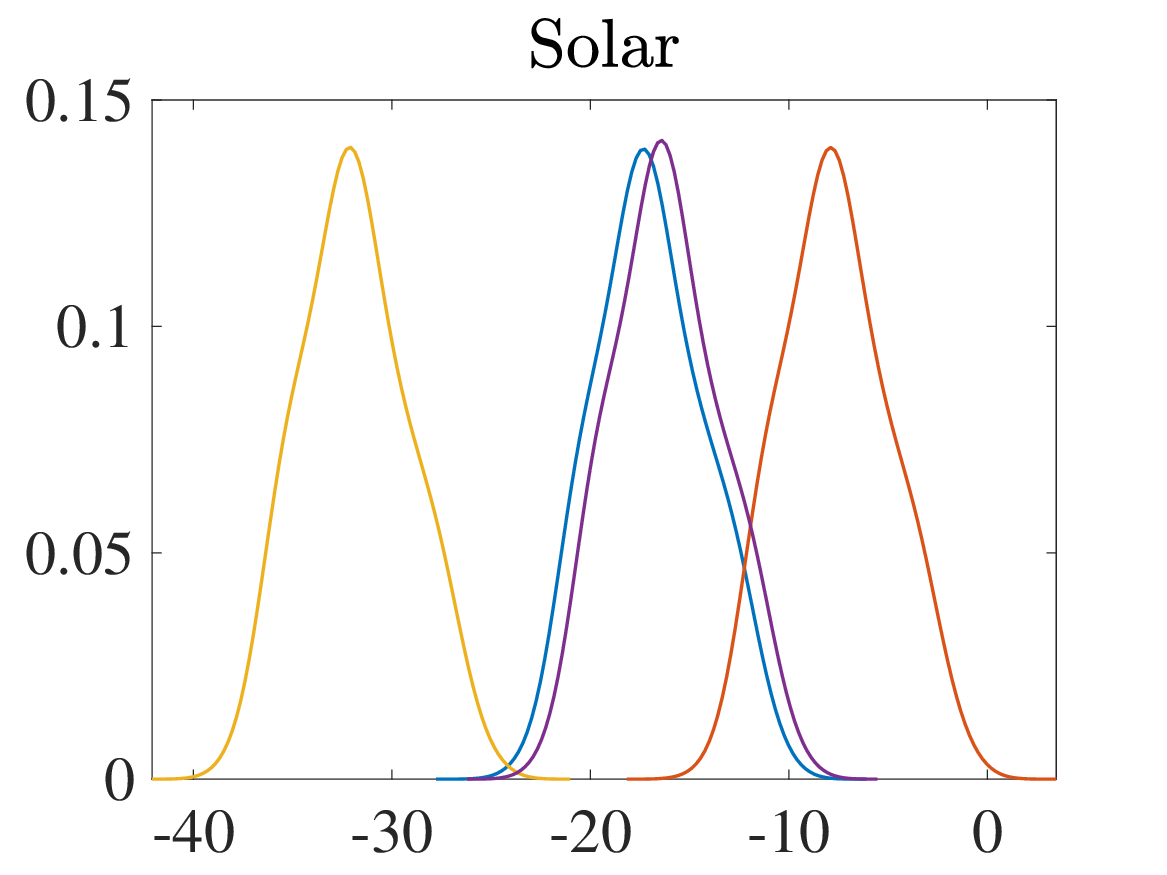} &
    \includegraphics[width=5.2cm]{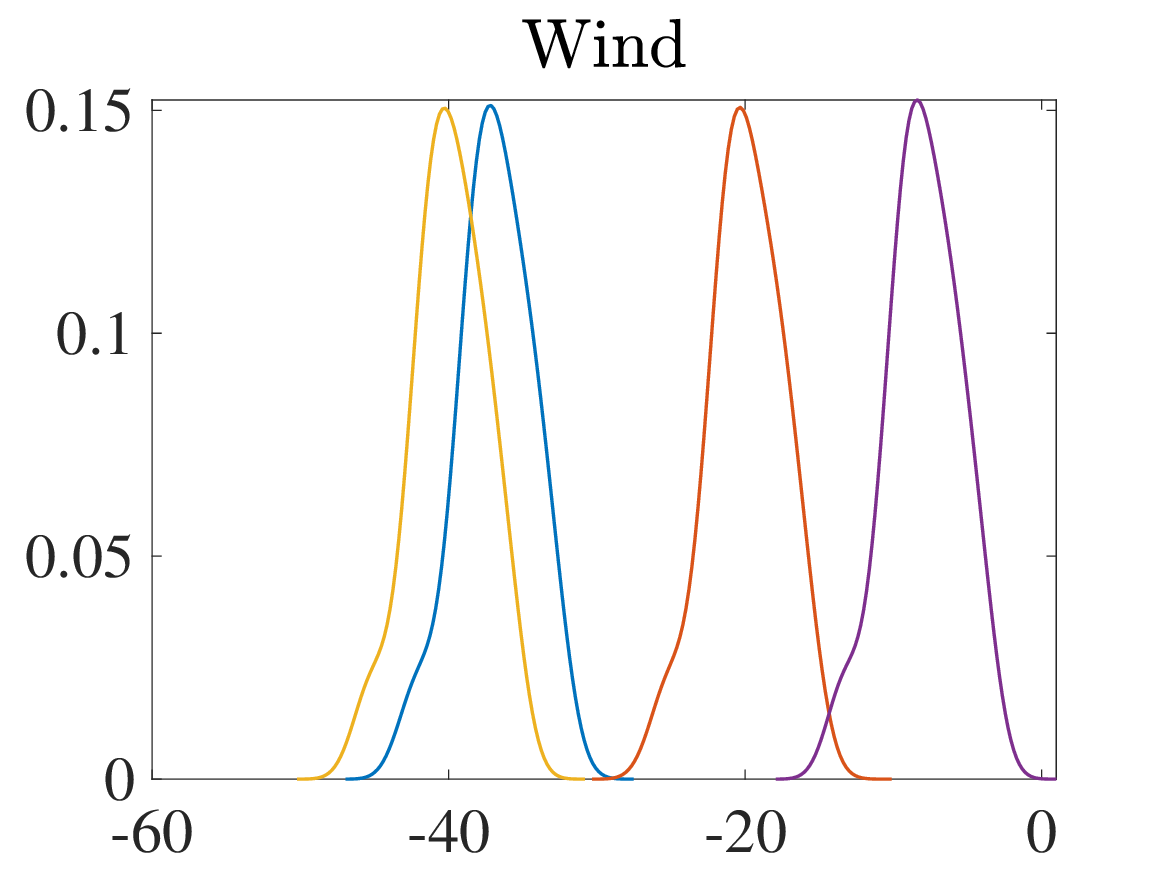} &
    \includegraphics[width=5.2cm]{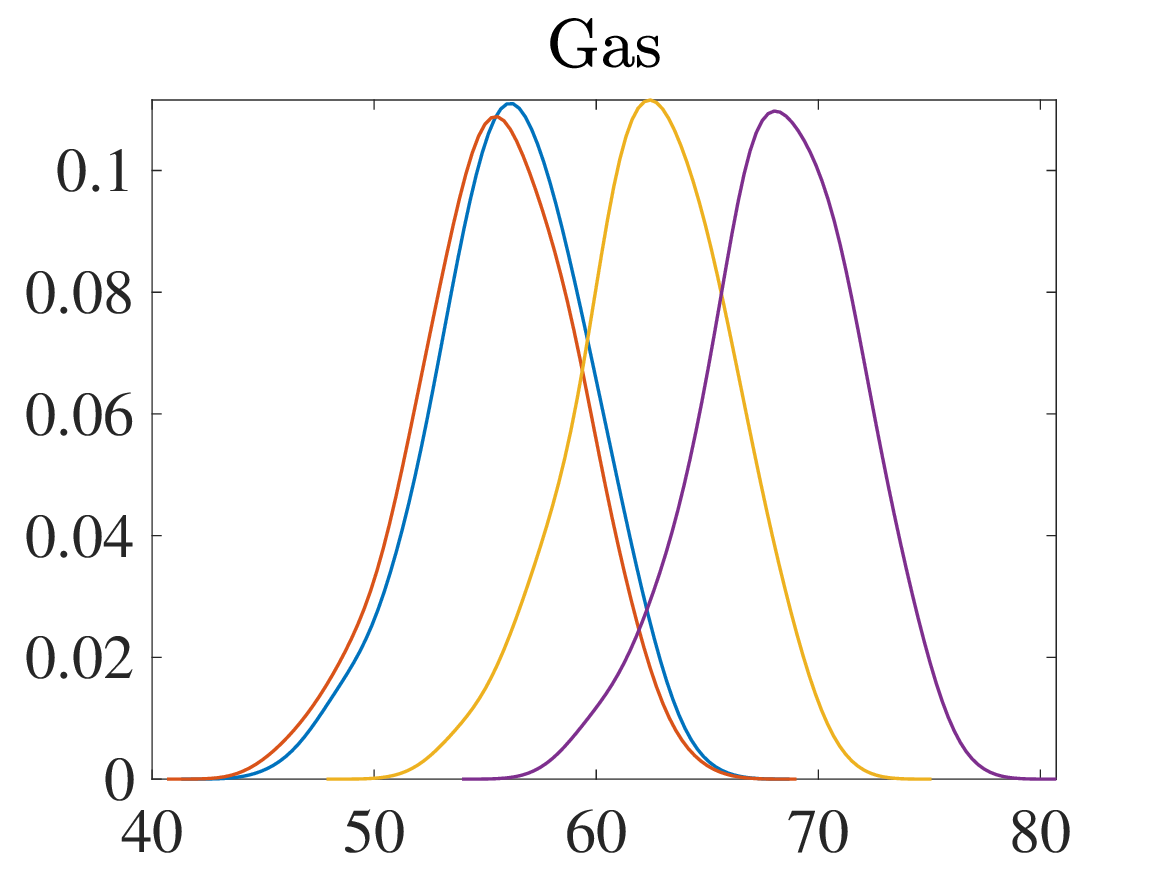}    
    \end{tabular}
    \caption{Full conditional posterior distributions for the country-effects for solar (left), wind (center), and gas (right) -- for Denmark (blue), France (red), Germany (yellow), and Italy (violet). Top panels refer to the full sample data and bottom to the post-invasion period.}
    \label{fig:posterior-DE-FE-GE-IT_postwar}
\end{figure}

Figure~\ref{fig:posterior-DE-FE-GE-IT_postwar}, covering Denmark (blue), France (red), Germany (yellow), and Italy (violet), shows substantial changes in the magnitude of the estimated coefficients across the two periods.
The sign of the effects remains consistent: forecasted solar (left panel) and wind generation (center panel) continue to have a negative effects on electricity prices. However, the magnitudes of these effects increase notably in the post-invasion period. 
The impact of gas prices, in particular, rises substantially from approximately $40$ to $55$ in Denmark and from $50$ to $70$ in Italy. 
Moreover, we observe a stronger alignment in the gas price effects between France and Denmark during the crisis period, while differences between Germany and Italy diminish, suggesting a potential convergence in their price transmission mechanisms under stress.

Turning to the Nordic countries, Figure~\ref{fig:posterior-FI-NO-SW_postwar} reveals mixed results. In Finland, the effects of solar generation shift from a positive to a negative influence on electricity prices.
This change may reflect Finland's reduced reliance on Russian gas following the onset of the war.
Moreover, increased alignment in the gas price effects between Norway and Sweden -- absent in the full sample estimates -- points to a greater degree of integration within the Nordic electricity market during the crisis.

\begin{figure}[h!]
\centering
\setlength{\tabcolsep}{0.001pt}
\begin{tabular}{lll}     
    \includegraphics[width=5.2cm]{270525_Figures/ksdensity_full/Solar-FI-NO-SW-ksdensity.eps} &
    \includegraphics[width=5.2cm]{270525_Figures/ksdensity_full/Wind-FI-NO-SW-ksdensity.eps} &
    \includegraphics[width=5.2cm]{270525_Figures/ksdensity_full/Gas-FI-NO-SW-ksdensity.eps} \\
     \includegraphics[width=5.2cm]{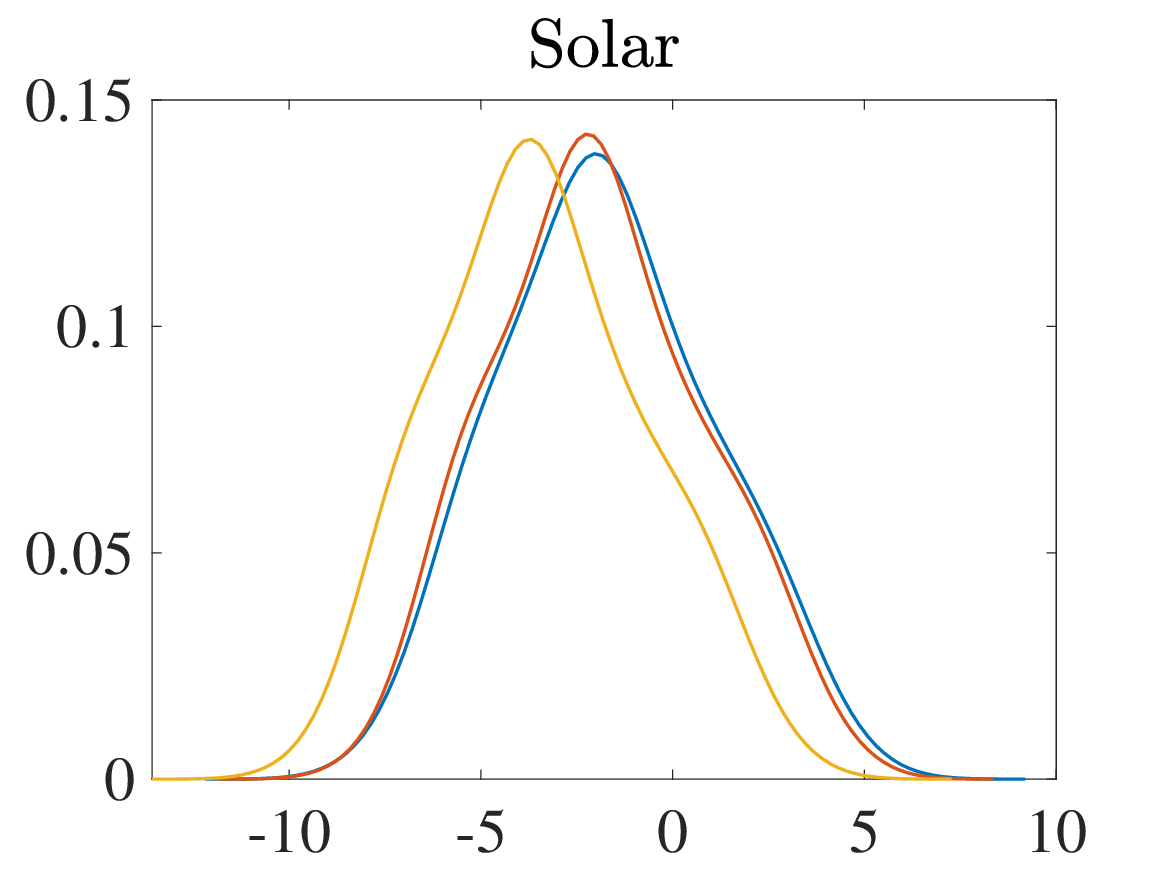} &
    \includegraphics[width=5.2cm]{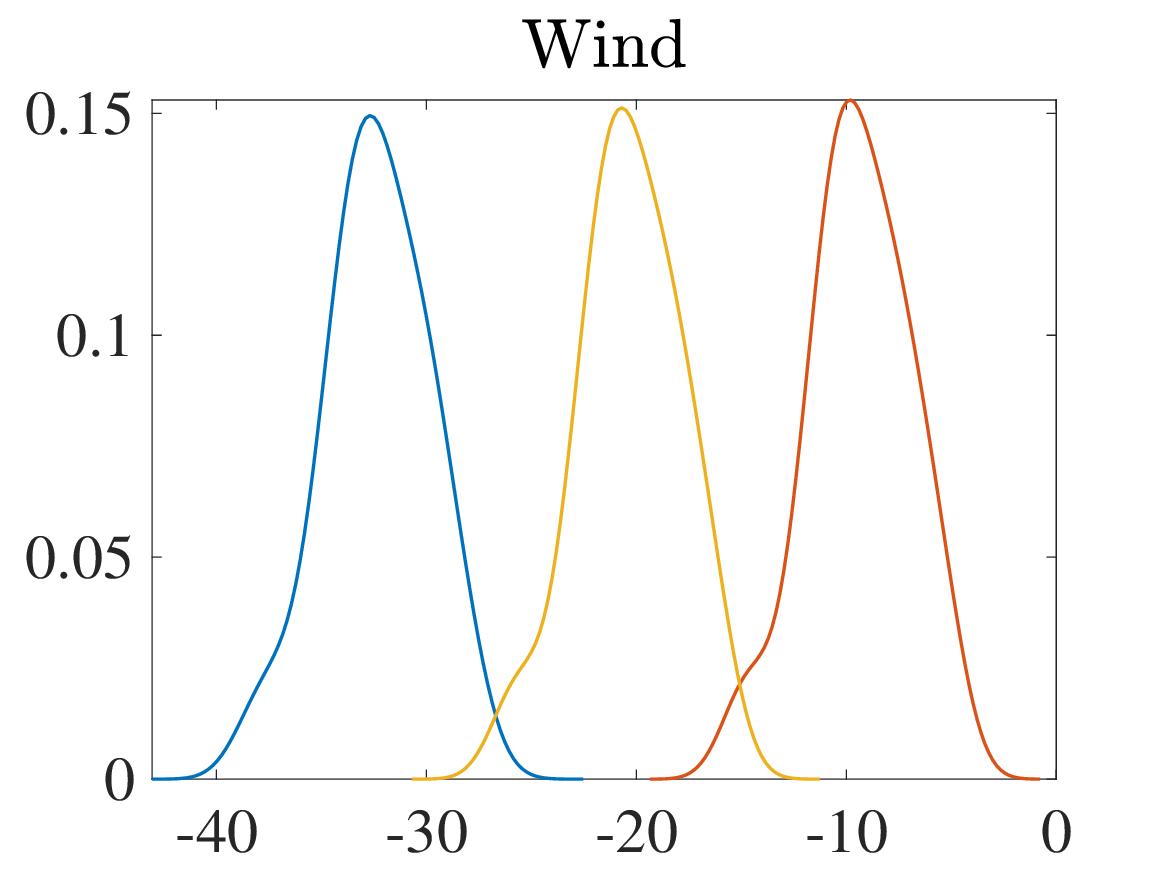} &
    \includegraphics[width=5.2cm]{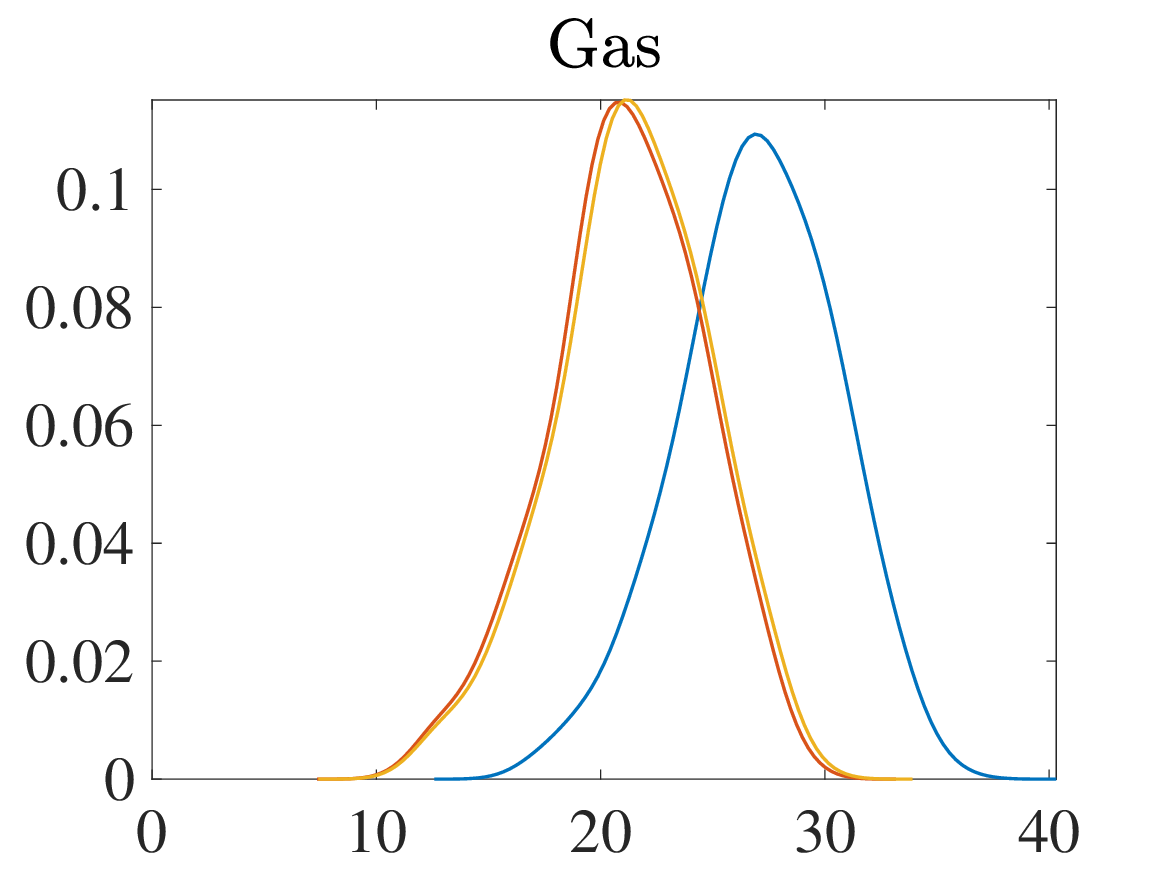}    
    \end{tabular}
    \caption{Full conditional posterior distributions for the country-effects for solar (left), wind (center), and gas (right) -- for Finland (blue), Norway (red), and Sweden (yellow). Top panels refer to the full sample data and bottom to the post-invasion period.}
    \label{fig:posterior-FI-NO-SW_postwar}
\end{figure}

Finally, Figure~\ref{fig:posterior-PT-SP_postwar} displays results for Portugal and Spain and they are consistent with the ``Iberian exception'', whereby the two countries implemented a joint intervention to cap the gas prices. 
In detail, Portugal and Spain show similar negative effects of solar and wind generation on electricity prices in the post-invasion period. 
Importantly, in both countries, the posteriors for the effects of gas prices on electricity prices shifts from being completely positive to allocating more probability to negative effects. 
This reversal appears to reflect the implementation of the gas price cap, which effectively decoupled wholesale electricity prices from gas price spikes.
As shown in Figure~\ref{fig:joint-volatility}, Portugal and Spain were less exposed to the extreme electricity price volatility seen elsewhere in Europe, benefiting from higher share of renewables energy sources in their energy mix.
Finally, while greater reliance on renewables reduces exposure to fossil fuel, it may introduce new vulnerabilities as illustrated by the widespread blackout that affected Spain on 28 May 2025, underscoring the need for grid resilience and flexible backup capacity as renewable penetration increases.

\begin{figure}[h!]
\centering
\setlength{\tabcolsep}{0.001pt}
\begin{tabular}{lll}     
    \includegraphics[width=5.2cm]{270525_Figures/ksdensity_full/Solar-PT-SP-ksdensity.eps} &
    \includegraphics[width=5.2cm]{270525_Figures/ksdensity_full/Wind-PT-SP-ksdensity.eps} &
    \includegraphics[width=5.2cm]{270525_Figures/ksdensity_full/Gas-PT-SP-ksdensity.eps} \\
     \includegraphics[width=5.2cm]{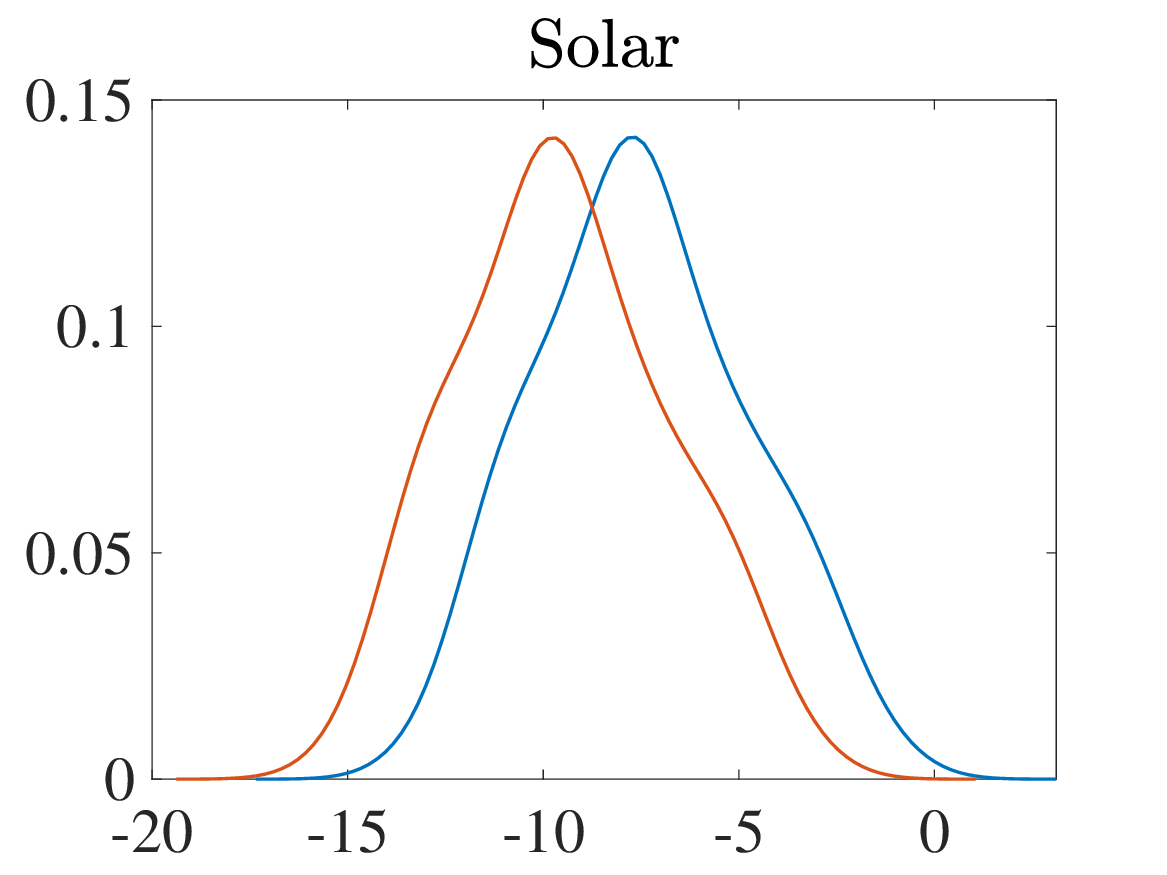} &
    \includegraphics[width=5.2cm]{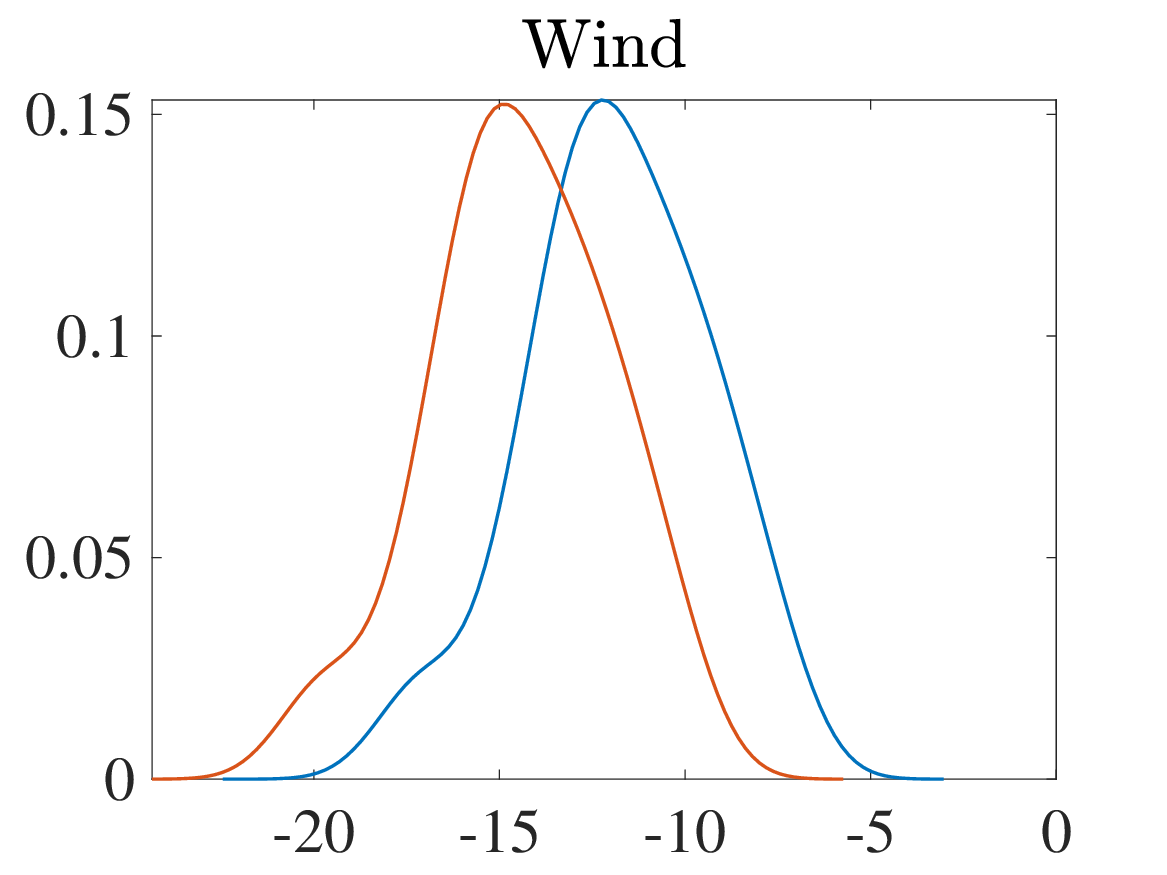} &
    \includegraphics[width=5.2cm]{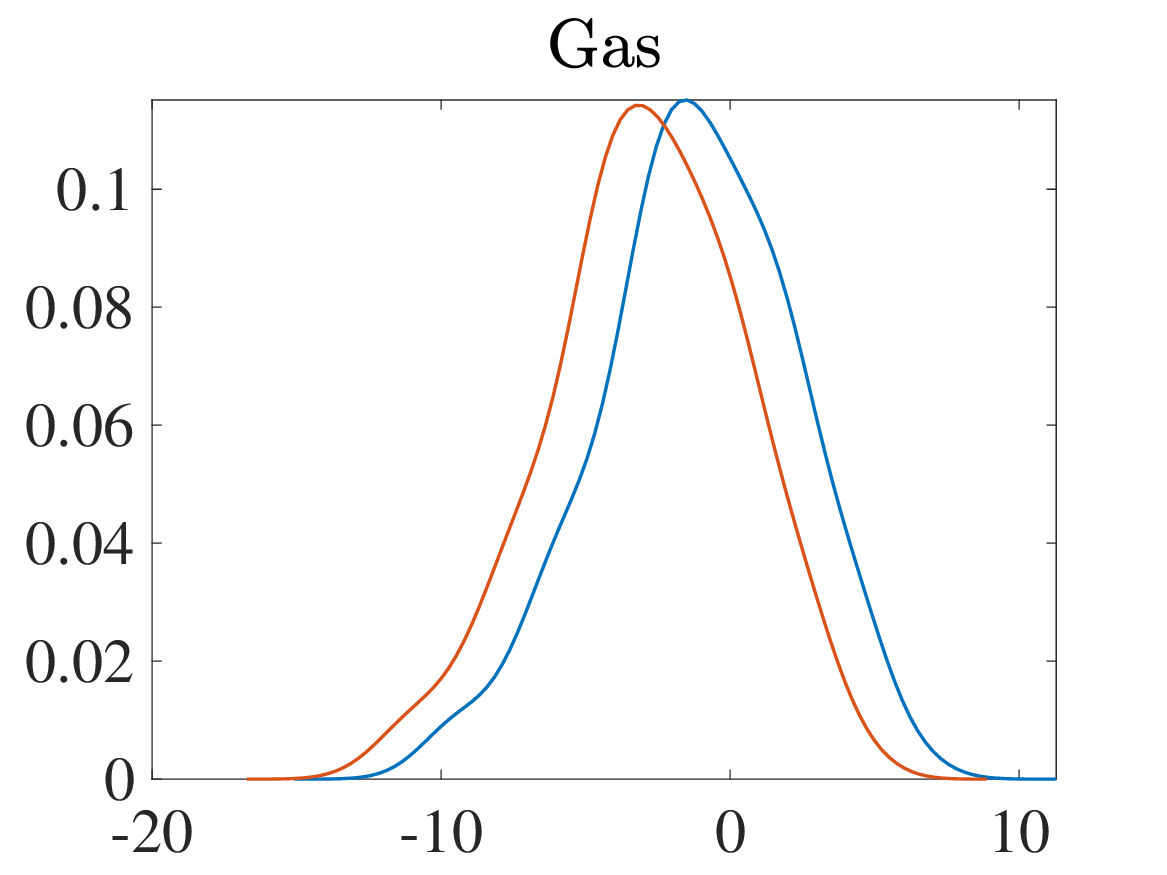}    
    \end{tabular}
    \caption{Full conditional posterior distributions for the country-effects for solar (left), wind (center), and gas (right) -- for Portugal (blue)  and Spain (red). Top panels refer to the full sample data and bottom to the post-invasion period.}
    \label{fig:posterior-PT-SP_postwar}
\end{figure}

\section{Conclusions}
\label{sec:concl}

The recent energy crisis unequivocally underscored the critical role of market integration as both a safeguard for energy security and a fundamental driver of European competitiveness and economic growth. In response, the European Commission has initiated significant efforts to rapidly reduce short-term fossil fuel dependence and accelerate the transition to renewable energy sources.
To address these complex challenges, our study employed a novel panel reverse unrestricted MIDAS (PRUMIDAS) model. This allowed us to examine the impact of daily fossil fuel prices and hourly renewable energy generation on hourly electricity prices, explicitly capturing cross-country interdependencies and incorporating country-specific market characteristics.

Our findings reveal that while renewable energy sources consistently exert downward pressure on electricity prices and contribute to market integration, their impact remains primarily at the national level. 
In contrast, fossil fuel prices are linked to increased price volatility, which can undermine integration by inducing market divergence.

The energy crisis, triggered by Russia's invasion of Ukraine, served as a revealing case study, demonstrating both the inherent strengths of integrated markets in a geopolitical shock and their vulnerabilities to rapid price transmission across national boundaries. These results underscore the paramount need for coordinated policy frameworks and strategic energy diversification to fortify the resilience of European electricity markets against external shocks.

Looking ahead, European countries must collaborate to foster integration and forge a unified strategy to efficiently reshape the EU's energy infrastructure -- spanning generation, storage, transmission, and the role of the merit order criterion -- into a single energy market. Crucially, the EU must strengthen institutional and financing frameworks, and pool resources to accelerate innovation and mitigate risks to the single market (\citetalias{electricity2024analysis}, \citeyear{electricity2024analysis}).

\newpage

\bibliographystyle{chicago}
\bibliography{reference}

\newpage
\appendix

\renewcommand{\theproposition}{\thesection.\arabic{proposition}}
\renewcommand{\theequation}{\thesection.\arabic{equation}}
\setcounter{equation}{0}
\setcounter{proposition}{0}

\section{Proofs}
\label{sec:appendix}

We build on \cite{casarin2018uncertainty} and apply their results to our proposed PRUMIDAS model.

We denote $\by_{g,t+h-1}\equal(y_{g,t+h-1}\comma\ldots\comma y_{g,t+h-A})^{\prime}$ as the vector of lagged values of the high-frequency dependent variable, $\bx_{gj,t+h_j} = (x_{gj,t+h_j},\ldots,x_{gj,t+h_j-B_jH_j}\rightbr^{\prime}$ as the vector of lagged values of either high- or low-frequency covariates, depending on the value of $H_j$ and $h_j$, as defined in Section~\ref{sec:Notation}.
We define $\bz_{g,t+h} = \leftbr1,\by_{g,t+h-1}^{\prime}\comma\bx_{g1,t+h_1}^{\prime}\comma\ldots\comma\bx_{gN,t+h_N}^{\prime}\rightbr^{\prime}$ as the vector including both the autoregressive high-frequency and lagged covariates at both frequency. 

Moreover, let $L = \bigl(1+A+\sum_{j=1}^{N}(1+B_j)\bigr)$ be the size relative to the number of covariates included in the model. Then, denote as $\bgamma = \leftbr \mu,\balpha^{\prime},\bbeta_{1}^{\prime}\comma\ldots\comma\bbeta_{N}^{\prime})^{\prime}$ the $L-$vector of common effects, including the common intercept and common coefficients, $\balpha = (\alpha_{1}\comma\ldots\comma\alpha_{A})^{\prime}$ and $\bbeta_{j} = \leftbr \beta_{j0}, \beta_{j1}\comma\ldots\comma\beta_{jB_j}\rightbr^{\prime}$. 
Let $\bpsi_{h} = (\psi_{\mu,h}\comma\bpsi_{\alpha,h}^{\prime}\comma\bpsi_{\beta,1h}^{\prime}\comma\ldots\comma\bpsi_{\beta,Nh}^{\prime}\rightbr^{\prime}$ be the $L-$vector of hourly random effects, with $\bpsi_{\alpha,h} = \leftbr\psi_{\alpha,h1}\comma\ldots\comma\psi_{\alpha,hA}\rightbr^{\prime}$ be a $A$-vector and $\bpsi_{\beta,jh} = \leftbr\psi_{\beta,jh0}\comma\ldots\comma\psi_{\beta,jhB_j}\rightbr^{\prime}$ a $(1+B_j)$-vector for $j=1,\ldots,N$. 
Then, let $\bzeta_{g} = (\zeta_{\mu,g},\bzeta_{\alpha,g}^{\prime},\bzeta_{\beta,g1}^{\prime},\ldots,\bzeta_{\beta,gN}^{\prime}\rightbr^{\prime}$ the $L-$vector of country random effects, where $\bzeta_{\alpha,g} = \leftbr\zeta_{\alpha,g1},\ldots,\zeta_{\alpha,gA}\rightbr^{\prime}$ is $A$--vector and $\bzeta_{\beta,gj} = \leftbr\zeta_{\beta,gj0}\comma\ldots\comma\zeta_{\beta,gjB_j}\rightbr^{\prime}$ is $(1+B_j)$--vector for $j=1,\ldots,N$.

\begin{proposition}
\label{sec:appendix:propA1}
Based on the previous definitions, the model in Eq.~\eqref{sec:method:second-eq} can be expressed in stacked form as
\begin{equation}
    \label{sec:appendix:eqA1}
    y_{g,t+h} = \bz_{g,t+h}^{\prime}(\bgamma+\bzeta_{g}+\bpsi_{h}) + e_{g,t+h}.
\end{equation}
where $\bgamma$, $\bzeta_{g}$, and $\bpsi_h$ are the vectors of common, country, and hourly random effects, respectively.
\end{proposition}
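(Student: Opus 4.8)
The plan is to obtain the claimed identity by expanding the lag polynomials and then substituting the hierarchical decomposition coefficient-by-coefficient. First I would expand $\alpha_{gh}(L)$ and $\beta_{gjh}(L^{H_j})$ in Eq.~\eqref{sec:method:second-eq}, which returns the explicit scalar form of Eq.~\eqref{sec:method:first-eq}, namely $y_{g,t+h} = \mu_{gh} + \sum_{a=1}^{A}\alpha_{gha}y_{g,t+h-a} + \sum_{j=1}^{N}\sum_{b=0}^{B_j}\beta_{gjhb}\,x_{gj,t+h_j-bH_j} + e_{g,t+h}$. The right-hand side is a finite linear combination of a constant, the lagged dependent values $y_{g,t+h-1},\dots,y_{g,t+h-A}$, and the lagged covariate blocks $\bx_{gj,t+h_j}$; these are precisely the entries of $\bz_{g,t+h}$ as defined in the preamble. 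Collecting the matching coefficients $\mu_{gh}$, $\{\alpha_{gha}\}_a$, $\{\beta_{gjhb}\}_{j,b}$ into an $L$-vector $\bvartheta_{gh}$ in the same coordinate order lets me write $y_{g,t+h} = \bz_{g,t+h}^{\prime}\bvartheta_{gh} + e_{g,t+h}$ by the definition of the inner product.

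The second step is to substitute the decomposition of Eq.~\eqref{sec:method:cere} into each entry of $\bvartheta_{gh}$. Since every scalar coefficient splits additively as a common term plus an hourly random effect plus a country random effect, the vector $\bvartheta_{gh}$ inherits the same additive structure, $\bvartheta_{gh} = \bgamma + \bzeta_g + \bpsi_h$, where $\bgamma$ gathers the common components $\mu$, $\{\alpha_a\}$, $\{\beta_{jb}\}$; the vector $\bzeta_g$ gathers the country effects $\zeta_{\mu,g}$, $\{\zeta_{\alpha,ga}\}$, $\{\zeta_{\beta,gjb}\}$; and $\bpsi_h$ gathers the hourly effects $\psi_{\mu,h}$, $\{\psi_{\alpha,ha}\}$, $\{\psi_{\beta,jhb}\}$. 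Substituting this sum back into the inner-product form and invoking linearity yields Eq.~\eqref{sec:appendix:eqA1} directly.

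The only genuine obstacle is bookkeeping: I must verify that $\bgamma$, $\bzeta_g$, and $\bpsi_h$ are each stacked in exactly the same coordinate order as $\bz_{g,t+h}$, so that the additive split of each scalar coefficient aligns component-wise with its regressor. This amounts to checking that the block partition (an intercept block of size $1$, an autoregressive block of size $A$, and $N$ covariate blocks of sizes $1+B_j$) is common to all four vectors, which holds by construction from the preamble. No analytic argument — convergence, integrability, or distributional reasoning — is required, since the statement is an algebraic identity obtained by rearranging a finite sum; the proof therefore reduces to confirming index alignment and applying linearity of the inner product.
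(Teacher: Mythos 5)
Your proposal is correct and follows essentially the same route as the paper's proof: both reduce the claim to substituting the hierarchical decomposition of Eq.~\eqref{sec:method:cere} into the scalar form of Eq.~\eqref{sec:method:first-eq} and then collecting terms into the inner product $\bz_{g,t+h}^{\prime}(\bgamma+\bzeta_{g}+\bpsi_{h})$. The only difference is cosmetic -- you vectorize first and then split the coefficient vector additively, while the paper substitutes the decomposition first and then vectorizes -- and your explicit remark about verifying the common block ordering of the four vectors is a sound point the paper leaves implicit.
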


\begin{proof}
Substitute the coefficients in Eq.~\eqref{sec:method:first-eq} with their specifications in Eq.~\eqref{sec:method:cere}:
\begin{align*}
    y_{g,t+h} 
    &= \bigl(\mu+\psi_{\mu,h}+\zeta_{\mu,g}\bigl)
    +\sum_{a=1}^{A}\bigl(\alpha_{a}+\psi_{\alpha,ha}+\zeta_{\alpha,ga}\bigl)y_{g,t+h-a} \\
    &+\sum_{j=1}^{N}\sum_{b=0}^{B_j}\bigl(\beta_{jb}+\psi_{\beta,jhb}+\zeta_{\beta,gjb}\bigl)x_{gj,t+h_j-bH_j}+e_{g,t+h}.
\end{align*}
Finally, re-write the summations in vector form, as introduced in the preamble:
\begin{align*}
    y_{g,t+h}
    &= \bigl(\mu+\psi_{\mu,h}+\zeta_{\mu,g}\bigl)
    +\bigl(\balpha+\bpsi_{\alpha,h}+\bzeta_{\alpha,g}\bigl)\by_{g,t+h-1}  \\
    &+\sum_{j=1}^{N}\bigl(\bbeta_{j}+\bpsi_{\beta,jh}+\zeta_{\bbeta,gh}\bigl)x_{gj,t+h_j}+e_{g,t+h} \\
    &= \bz_{g,t+h}^{\prime}\bigl(\bgamma+\bpsi_{h}+\bzeta_{g}\bigl)+e_{g,t+h}.
\end{align*}
\end{proof}

\begin{proposition}
\label{sec:appendix:propA2}
The stack-form model in Eq.~\eqref{sec:appendix:eqA1} admits a ``low-frequency'' panel vector autoregressive representation as
\begin{equation}
\label{sec:appendix:eqA2}
    \by_{gt} = Z_{gt}\bigl(\bgamma+\bzeta_{g}\bigl)+\diag\bigl(Z_{gt}\Psi^{\prime}\bigl)+\be_{gt},
\end{equation}
where $Z_{gt} = \leftbr\bz_{g,t+0}\comma,\ldots,\comma\bz_{g,t+H-1}\rightbr^{\prime}$ and $\Psi = \leftbr\bpsi_0\comma,\ldots\comma\bpsi_{H-1}\rightbr^{\prime}$ are matrices of size $H\times L$, $\by_{gt} = \leftbr y_{g,t+0},\ldots,y_{g,t+H-1}\rightbr$, and $\be_{gt} \sim \mathcal{N}_H(0,\Sigma_g)$ with $\Sigma_g = \diag(\sigma_{g,0}^2,\ldots, \sigma_{g,H-1}^2)$.
\end{proposition}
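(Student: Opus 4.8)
The plan is to stack the scalar equation from Proposition~\ref{sec:appendix:propA1} across the $H$ high-frequency indices $h = 0, 1, \ldots, H-1$, holding the low-frequency period $t$ and country $g$ fixed. Separating the hourly random effect from the remaining terms in Eq.~\eqref{sec:appendix:eqA1} gives, for each $h$,
\begin{equation*}
y_{g,t+h} = \bz_{g,t+h}^{\prime}(\bgamma + \bzeta_{g}) + \bz_{g,t+h}^{\prime}\bpsi_{h} + e_{g,t+h}.
\end{equation*}
Since the $h$-th entry of the target vector $\by_{gt}$ is $y_{g,t+h}$, I would handle each of the three terms on the right-hand side in turn and then reassemble.

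The first term is immediate: because the coefficient vector $(\bgamma + \bzeta_{g})$ does not depend on $h$, stacking $\bz_{g,t+h}^{\prime}(\bgamma + \bzeta_{g})$ over $h$ yields $Z_{gt}(\bgamma + \bzeta_{g})$, where by definition the $h$-th row of $Z_{gt}$ is $\bz_{g,t+h}^{\prime}$. The error term likewise stacks to $\be_{gt} = (e_{g,t+0}, \ldots, e_{g,t+H-1})^{\prime}$; since the $e_{g,t+h}$ are independent across $h$ with $e_{g,t+h}\sim\mathcal{N}(0,\sigma_{gh}^{2})$, the stacked vector satisfies $\be_{gt}\sim\mathcal{N}_{H}(\mathbf{0},\Sigma_{g})$ with $\Sigma_{g} = \diag(\sigma_{g,0}^{2},\ldots,\sigma_{g,H-1}^{2})$, as claimed.

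The main obstacle is the hourly term $\bz_{g,t+h}^{\prime}\bpsi_{h}$, whose coefficient $\bpsi_{h}$ varies with the row index $h$ and therefore does not collapse into a single matrix–vector product of the form $Z_{gt}\,(\cdot)$. The key observation is that, with $\Psi = (\bpsi_0,\ldots,\bpsi_{H-1})^{\prime}$, the $H\times H$ product $Z_{gt}\Psi^{\prime}$ has $(i,j)$ entry equal to $\bz_{g,t+i}^{\prime}\bpsi_{j}$; in particular its diagonal entries are precisely $\bz_{g,t+h}^{\prime}\bpsi_{h}$ for $h=0,\ldots,H-1$. Extracting the diagonal via $\diag(\cdot)$ thus recovers exactly the stacked hourly term, while the off-diagonal entries—which have no counterpart in the original model—are simply discarded. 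Assembling the three stacked pieces yields Eq.~\eqref{sec:appendix:eqA2}. The one point requiring care is to confirm that $\diag(\cdot)$ is read here as the operator returning the diagonal of a square matrix as an $H$-vector (not as the operator building a diagonal matrix from a vector), so that all three summands in Eq.~\eqref{sec:appendix:eqA2} are conformable $H$-vectors.
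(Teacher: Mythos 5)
Your proposal is correct and follows essentially the same route as the paper's proof: stack the scalar equation over $h=0,\ldots,H-1$ and observe that the hourly term is recovered as the diagonal of $Z_{gt}\Psi^{\prime}$ (the paper expresses this same extraction via the Hadamard product $\leftbr Z_{gt}\Psi^{\prime}\rightbr\odot I_{H}$). Your closing remark on reading $\diag(\cdot)$ as the diagonal-extraction operator is exactly the notational point the paper also has to settle.
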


\begin{proof}
Let us stack Eq.~\eqref{sec:appendix:eqA1} for $h=1,\ldots,H-1$ to obtain a system of equations. 
Then, let $\Psi$ be such that $( \bz_{g,t+0}^{\prime}\bpsi_0,\ldots,\bz_{g,t+H-1}^{\prime}\bpsi_{H-1}\rightbr = \leftbr Z_{gt}\Psi^{\prime}\rightbr\odot I_{H} = \diag\leftbr Z_{gt}\Psi^{\prime}\rightbr$ is a $H\times 1$ vector.
Please refer to \cite{magnus2019matrix}, page 52, for the properties of the Hadamard product, denoted by $\odot$.
Therefore, from Eq.~\eqref{sec:appendix:eqA1} it follows that
\begin{equation*}
    \by_{gt} = Z_{gt}\bigl(\bgamma+\bzeta_{g}\bigl)+\diag\bigl(Z_{gt}\Psi^{\prime}\bigl)+\be_{gt},
\end{equation*}
where $\be_{gt}=\leftbr e_{g,t+0},\ldots,e_{g,t+H-1})^{\prime}$ and $\be_{gt}\sim\mathcal{N}_{H}\leftbr 0,\Sigma_{g})$ with $\Sigma_{g} = \diag(\sigma_{g,0}^2\comma\ldots,\sigma_{g,H-1}^2)$, where $\mathcal{N}_{H}$ denotes the $H$-dimensional multivariate normal distribution.
\end{proof}

Following Section~\ref{sec:model:common-random-effects}, we assume that $\bpsi_{h}\sim\mathcal{N}_{L}\leftbr \mathbf{0}_{L},Q)$ and $\bzeta_{g}\sim\mathcal{N}_{L}\leftbr \mathbf{0}_{L},R)$, where $Q = \diag\{\bigl\leftbr q_{\mu},\biota_{A}^{\prime}\otimes q_{\alpha},\biota_{N\sum_j(1+B_j)}^{\prime}\otimes q_{\beta}\bigr\rightbr^{\prime}\}$, 
$R = \diag\{\bigl\leftbr r_{\mu},\biota_{A}^{\prime}\otimes r_{\alpha},\biota_{N\sum_j(1+B_j)}^{\prime}\otimes r_{\beta}\bigr\rightbr^{\prime}\}$, and $\biota_A^{\prime}$ and $\biota_N^{\prime}$ are the vector of ones of size $A$ and $N$, respectively.
\begin{proposition}
\label{sec:appendix:propA3}
By marginalizing the random effects in Eq.~\eqref{sec:appendix:eqA1}, the model simplifies to
\begin{equation}
\label{sec:appendix:eqA3}
    y_{g,t+h}= \bz_{g,t+h}^{\prime}\bgamma+\Tilde{e}_{g,t+h},
\end{equation}
where $\Tilde{e}_{g,t+h}\sim\mathcal{N}(0,\sigma_{ght}^{2})$, $\sigma_{ght}^{2} = \sigma_{gh}^{2}+ \bz_{g,t+h}^{\prime}(R+Q)\bz_{g,t+h}$.
\end{proposition}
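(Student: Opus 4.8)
The plan is to read Eq.~\eqref{sec:appendix:eqA1} as a single affine equation driven by three mutually independent Gaussian sources---the country effect $\bzeta_g\sim\mathcal{N}_L(\mathbf{0}_L,R)$, the hourly effect $\bpsi_h\sim\mathcal{N}_L(\mathbf{0}_L,Q)$, and the idiosyncratic shock $e_{g,t+h}\sim\mathcal{N}(0,\sigma_{gh}^2)$---and then to recover the marginal law of $y_{g,t+h}$ by integrating the two random-effect vectors out, conditionally on the regressor vector $\bz_{g,t+h}$. Concretely, I would first distribute the inner product in Eq.~\eqref{sec:appendix:eqA1} to isolate the deterministic part from the stochastic part, writing $y_{g,t+h}=\bz_{g,t+h}^{\prime}\bgamma+\Tilde{e}_{g,t+h}$ with the composite residual $\Tilde{e}_{g,t+h}=\bz_{g,t+h}^{\prime}\bzeta_g+\bz_{g,t+h}^{\prime}\bpsi_h+e_{g,t+h}$. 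This is the definition that the statement implicitly fixes, so the whole argument reduces to identifying the distribution of $\Tilde{e}_{g,t+h}$.

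Next I would argue that, conditionally on $\bz_{g,t+h}$, the term $\Tilde{e}_{g,t+h}$ is a fixed linear combination of jointly Gaussian, independent quantities and is therefore itself Gaussian, so that it suffices to match its first two moments. The mean is zero because each of $\bzeta_g$, $\bpsi_h$, and $e_{g,t+h}$ is centered, giving $\E[\Tilde{e}_{g,t+h}\mid\bz_{g,t+h}]=\bz_{g,t+h}^{\prime}\E[\bzeta_g]+\bz_{g,t+h}^{\prime}\E[\bpsi_h]+\E[e_{g,t+h}]=0$. For the variance, the mutual independence of the three sources kills the cross terms and leaves the scalar variance of $e_{g,t+h}$ together with two quadratic forms in the respective covariance matrices, $\Cov(\Tilde{e}_{g,t+h}\mid\bz_{g,t+h})=\sigma_{gh}^2+\bz_{g,t+h}^{\prime}R\bz_{g,t+h}+\bz_{g,t+h}^{\prime}Q\bz_{g,t+h}$, which collapses by linearity to $\sigma_{gh}^2+\bz_{g,t+h}^{\prime}(R+Q)\bz_{g,t+h}=\sigma_{ght}^2$. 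This is exactly the expression claimed in Eq.~\eqref{sec:appendix:eqA3}.

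The delicate point is not the algebra but the conditioning: $\bz_{g,t+h}$ contains the lagged dependent values $\by_{g,t+h-1}$, so $\bz_{g,t+h}$ is itself random, and the derivation above yields the law of $y_{g,t+h}$ conditional on the regressor vector (equivalently, on the relevant information set) rather than its unconditional law---this is the sense in which the marginalization must be understood, and it is precisely why the resulting variance $\sigma_{ght}^2$ inherits a time index $t$ through $\bz_{g,t+h}$, producing the advertised time-varying volatility. A second subtlety worth flagging is that integrating out $\bzeta_g$ and $\bpsi_h$ induces dependence \emph{across} observations sharing a country or an hour, so the residuals $\Tilde{e}_{g,t+h}$ are no longer mutually independent; since the statement concerns only the marginal distribution of a single $y_{g,t+h}$ this does not affect the claim, and the joint correlation structure is instead recovered by applying the result row by row in the stacked representation of Eq.~\eqref{sec:appendix:eqA2}, as noted after the proposition.
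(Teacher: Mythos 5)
Your proposal is correct and follows essentially the same route as the paper: both marginalize the independent Gaussian random effects $\bzeta_g$ and $\bpsi_h$ conditionally on $\bz_{g,t+h}$, conclude that the composite residual is Gaussian, and identify the variance $\sigma_{gh}^2+\bz_{g,t+h}^{\prime}(R+Q)\bz_{g,t+h}$; the only difference is that the paper establishes normality via an explicit moment generating function computation, whereas you appeal directly to closure of the Gaussian family under independent linear combinations and match the first two moments. Your remarks on the conditioning with respect to $\bz_{g,t+h}$ and on the induced cross-observation dependence are accurate and, if anything, make the sense of the marginalization more explicit than the paper's own proof.
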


\begin{proof}
Since $e_{g,t+h}$ is normally distributed and conditionally on the coefficients in Eq.~\eqref{sec:appendix:eqA1} it follows that $( y_{g,t+h}|\bgamma,\bpsi_{h},\bzeta_{g}\rightbr\sim\mathcal{N}\bigl\leftbr\bz_{g,t+h}^{\prime}(\bgamma+\bzeta_{g}+\bpsi_{h}),\sigma_{gh}^{2}\bigr\rightbr$. Thus, using the moment generating function, we can integrate out the random effects and obtain the marginal density of $y_{g,t+h}|\bgamma$ as follows:
\begin{align*}
    \Ev\bigl(\exp\{\theta y\}\bigl) &= \int \exp\{\theta y\} f(y|\bgamma)\,dy \\
    &= \int \exp\{\theta y\} f\bigl(y|\bgamma,\bpsi_{h},\bzeta_{g}\bigl)f(\bzeta_{g})f(\bpsi_{h})\,dy\,d\bpsi_{h}\,d\bzeta_{g} \\    
    &= \int \exp\Bigl\{\theta\bigl[\bz_{g,t+h}^{\prime}\bigl(\bgamma+\bpsi_{h}+\bzeta_{g}\bigl)\bigl]+\frac{1}{2}\theta^{2}\sigma_{gh}^{2}\Bigl\} f(\bpsi_{h})\,f(\bzeta_{g})\,d\bpsi_{h}\,d\bzeta_{g} \\
    &= \exp\Bigl\{\theta\bz_{g,t+h}^{\prime}\bgamma\Bigl\}\exp\Bigl\{\theta\bigl[\bz_{g,t+h}^{\prime}(\bpsi_{h}+\bzeta_{g}\bigl]+\frac{1}{2}\theta^2\sigma_{gh}^{2}\Bigl\} \\
    &= \exp\Bigl\{\theta\bz_{g,t+h}^{\prime}\bgamma\Bigl\}\exp\Bigl\{\frac{1}{2}\theta^{2}\bigl(\sigma_{gh}^{2}+\bz_{g,t+h}^{\prime}(Q+R)\bz_{g,t+h}\bigl)\Bigl\}
\end{align*}
which is the moment generating function of normal random variable with mean $\bz_{g,t+h}^{\prime}\bgamma$ and variance $\sigma_{ght}^{2} = \sigma_{gh}^{2}+ \bz_{g,t+h}^{\prime}(R+Q)\bz_{g,t+h}$. 

Additionally, this result can be applied to Eq.~\eqref{sec:appendix:eqA3} since it is a stacked version of Eq.~\eqref{sec:appendix:eqA2}. Hence, it follows that $\by_{gt}|\bgamma$ is normally distributed with mean $\bm{z}'_{g,t+h}\bgamma$ and variance $\Sigma_{gt} = \diag(\sigma_{g0t}^2\comma\ldots,\sigma_{g(H-1)t}^2)$.
\end{proof}

\end{document}